\newcommand{\R}{\mathbb{R}}
\newcommand{\E}{\mathbb{E}}
\newcommand{\N}{\mathbb{N}}
\newcommand{\C}{\mathcal{C}}
\newcommand{\F}{\mathcal{F}}
\newcommand{\cR}{\mathcal{R}}
\newcommand{\cL}{\mathcal{L}}
\newcommand{\cJ}{\mathcal{J}}
\newcommand{\cN}{\mathcal{N}}
\renewcommand{\L}{\mathrm{L}}
\renewcommand{\P}{\mathbb{P}}
\newcommand{\X}{\mathcal{X}}
\newcommand{\e}{\varepsilon}
\newcommand{\pas}{\P\text{-a.s.}}
\newcommand{\q}{\quad\pas}
\newcommand{\xqed}[1]{%
  \leavevmode\unskip\penalty9999 \hbox{}\nobreak\hfill
  \quad\hbox{\ensuremath{#1}}}
  \def\ignore#1{}
\def\b1{\mathbbm 1}
\def\Om{\Omega}
\newtheorem{Theo}{Theorem}[section]
\newtheorem{Lem}[Theo]{Lemma}
\newtheorem{Prop}[Theo]{Proposition}
\newtheorem{Cor}[Theo]{Corollary}
\theoremstyle{definition}
\newtheorem{Def}[Theo]{Definition}
\newtheorem{Ass}[Theo]{Assumption}
\definecolor{webgreen}{rgb}{0,.5,0}
\theoremstyle{rem}
\newtheorem{rem}[Theo]{Remark}
\numberwithin{equation}{section}
\begin{document}

\title{Viscosity properties with singularities in a state-constrained expected utility maximization problem 
} 

\author{Mourad Lazgham}

\thanks{The author acknowledges support by Deutsche Forschungsgemeinschaft through Grant SCHI 500/3-1.}


\dedicatory{\emph{Department of Mathematics, University of Mannheim,  Germany}}

\keywords{Expected utility maximization problem, Hamilton-Jacobi-Bellman equation with singularity, verification theorem, viscosity solution, sub- and superjet.}

\begin{abstract} 
We consider the value function originating from an expected utility maximization problem with finite fuel constraint and show its close relation to  a nonlinear parabolic degenerated \emph{Hamilton-Jacobi-Bellman (HJB) equation with singularity}. 
On one hand, we give a so-called \emph{verification argument} based on the dynamic programming principle, which allows us to derive conditions under which a classical solution of the HJB equation coincides with our value function (provided that it is smooth enough).  On the other hand, we establish a comparison principle, which allows us to characterize our value function as the unique viscosity solution of the HJB equation. 
\\\end{abstract}

\maketitle

\section{Introduction}
The purpose of this paper is to investigate the connection between the value function associated to an expected utility maximization problem with finite fuel constraint, which originates from a portfolio liquidation problem, and solutions of a Hamilton-Jacobi-Bellman (HJB) equation with singularity. 
More precisely, we will first establish the equivalence between this value function (under the assumption that it is smooth enough) and a classical solution of an HJB equation, 
with the help of the dynamic programming principle. In particular, we will see that  the value function and the corresponding optimal strategy are tied up with the solution of a certain stochastic differential equation (SDE), which might be useful for numerical purposes (see Remark \ref{ccu}). Second, we will show that the value function in the more general case, where no smoothness assumptions are imposed, can be regarded as the (unique) viscosity solution of the corresponding HJB-equation.

This work generalizes the framework developed in \citet{SST10} by considering utility functions with bounded Arrow-Pratt coefficient. 
Such utility functions were already studied for infinite-time horizons in a one-dimensional framework with linear temporary impact without drift; see \citet{SS09}, as well as \citet{S08}, where the optimal trading strategy was characterized as the unique bounded solution of a classical fully nonlinear parabolic equation. 
The above derivation was due to the fact that when considering an infinite time horizon the optimal strategy solved a classical parabolic partial differential equation, because then the time parameter did not appear in the equation.

 In this paper, we aim at deriving the corresponding Hamilton-Jacobi-Bellman equation for the \emph{finite-time} horizon. 
Since the considered equation now takes into account a time parameter, and no classical solutions are given in closed form so far to the knowledge of the author (contrarily to the case of infinite-time horizon), we cannot expect to derive easily the corresponding classical solution. However, we will overcome this difficulty by referring to the notion of \emph{viscosity solutions}, which corresponds to a weak local characterization of the value function. In order to establish this characterization, we will have to use a dynamic programming principle, also known as the \emph{Bellmann principle}.  
 
As a first main result, we will  establish a tight connection between our expected utility maximization problem and an HJB equation, provided that the value function is sufficiently smooth.  
 More precisely, we will show that our (smooth) value function, satisfying an initial condition with singularity, has to be a classical solution of the associated HJB equation. In the next step we will derive a verification theorem, which states that (under certain conditions) if this HJB equation has a classical solution,  this is the unique solution and it is equal to the value function. A relation between the optimal strategy, the value function, and the solution of an SDE will be  established, which might be useful for numerical computations. Ideas of the proofs are classical, however there are some issues that make it impossible for us to follow straightforwardly the classical ideas. The finite fuel constraint imposed in our strategies, the singularity in our initial condition, as well as the exponential growth of our expected utilities require further techniques to complete the proofs. 
 
 Our second main result deals with the value function in the more general case, where no smoothness assumptions are required. We will see that the value function is not only a viscosity solution of the HJB equation, but also the \emph{unique} one, by using a comparison principle. This comparison principle will be proved without the use  of the Crandall-Ishii lemma, only by applying Taylor expansion on some test functions. It is worth mentioning that the continuity of the value function established in \citet{LM15} will enable us to overcome some difficulties we will face.
 
 After setting up our framework in Section 2.1 and recalling the main results from the above paper, 
we derive the HJB equation satisfied by the value function (Section 2.2) when the value function is smooth enough. In Section 2.3 we state a verification theorem, allowing us to infer that the value function is the unique classical solution of the HJB equation in this case (under some conditions). Dropping the smoothness assumption, we turn to viscosity solutions in Section 2.4 and Section 2.5.  Theorem \ref{vfvsh} establishes that the value function is a viscosity solution of the HJB equation, and Theorem \ref{scp} establishes a strong comparison principle without appealing to the well-known Crandall-Ishii's lemma.
\section{Statement of results}
\subsection{Modeling framework}
Let $(\Omega, \F, \P)$ be a probability space with a filtration $(\F_t)_{0\leq t\leq T}$ satisfying the usual conditions. Fix $X_0\in\R^d$, and let us have a look at the expected utility maximization problem
\begin{equation}
V(T,X_0,R_0)=\sup_{\xi\in\dot{\X}_{2A_2}^1(T, X_0)}\E\left[u\left(\cR_{T}^{\xi}\right)\right],\label{omp}
\end{equation}
where we optimize over the set of strategies
\begin{equation}
\dot{\X}^1_{2A_2}(T, X_0):=\Big\{ \xi \in\dot{\X}^1(T,X_0)\,|\, \E\big[\exp(-2A_2\cR_{T}^{\xi}\big)\big]\leq M_{\cR_{T}^{\xi^*}}(2A_2)+1\Big\}\label{4a2}.
\end{equation}
Here,  
\[
M_{\cR_{T}^{\xi^*}}(A):=\E\big[\exp(-A\cR_{T}^{\xi	^*}\big)\big]
\]
is the moment generating function of the revenues of the optimal strategy,
and 
\begin{eqnarray*}
\lefteqn{\dot{\X}^1(T,X_0)}\\
&=&\Big\{\xi\,\big|\,\Big(X^\xi_t:=X_0-\int_0^t \xi_s\;ds\Big)_{t\in[0,T]}\text{ adapted, } t\rightarrow X^\xi_t(\omega) \in\X_{det}(T, X_0)\, \P\text{-a.s.}\Big\}\\
&\bigcap& \Big\{\xi\,\big|\,\E\bigg[\int_{0}^{T}\big(X^\xi_t\big)^\top\sigma X^\xi_t + |b\cdot X^\xi_t-f(\xi_t)|+|\xi_t|\,dt\bigg]<\infty\Big\},
\end{eqnarray*}
with
\begin{equation*}
\X_{det}(T, X_0)=\left\{X:[0,T]\rightarrow \R^d \; \text{ absolutely continuous,}\; X_0\in\R^ d,\;\text{and}\; X_T=0 \right\}
\end{equation*}
and
\begin{equation*}
\cR_T^{\xi}=R_0+\int_0^T \big(X^\xi_t\big)^\top\sigma\;dB_t +\int_0^T b\cdot X^\xi_t \;dt-\int_0^T f(-\dot{\xi}_t)\;dt.
\label{rp}
\end{equation*}
 We recall that $R_0\in\R$ is a real constant, $B$ denotes a standard $m$-dimensional Brownian with initial value $0$,  drift $b\in\mathbb{R}^d$ 
and volatility matrix $\sigma=(\sigma^{ij})\in\R^{d\times m}$, where we assume that the drift vector $b$ is orthogonal to the kernel of the covariance matrix $\Sigma =\sigma\sigma ^\top$. Moreover,  the  nonnegative, strictly convex function $f$ has superlinear growth and verifies the two conditions $\lim_{|x|\longrightarrow \infty} \tfrac{f(x)}{|x|}=\infty$ and $f(0)=0$, and we suppose that there exist positive constants $A_i,i=1,2,$ such that
\begin{equation}
0< A_1\leq-\frac{u''(x)}{u'(x)}\leq A_2 \quad\text{for all } x\in\R,\label{apc}
\end{equation}
which gives for $u$ the estimates
\begin{equation}
u_1(x):=\frac1{A_1}-\exp(-A_1x)\geq u(x)\geq -\exp(-A_2x)=: u_2(x),\label{ubd1}
\end{equation}
and for $V$ the estimates
\begin{equation}
V_1(T,X_0,R_0)=\E\left[u_1\left(\cR_{T}^{\xi^*_1}\right)\right]\geq V(T,X_0,R_0)\geq \E\left[u_2\left(\cR_{T}^{\xi_2^*}\right)\right]=V_2(T,X_0,R_0)\label{vfs},
\end{equation}
with $V_i, \;i=1,2,$  the corresponding exponential value functions (also called CARA value functions) and $\xi^*_i, \;i=1,2,$  the corresponding optimal strategies. 
We will  need the following results (taken from \citet{LM15}).

The following lemma gives an upper bound for the value function with exponential utility function for a process $(X,\cR)$ at a given stopping time.
\begin{Lem}\label{lice}
Let $\overline{V}(T,X_0,R_0)=\inf_{\xi\in\dot{\X}_{det}(T,X_0)} \E\big[\exp(-A \cR_T^\xi)\big]$ and $\tau$ be a stopping time with values in $[0,T[$. We then have
\begin{equation}
\overline{V}(T-\tau, X_\tau^\zeta, \cR_\tau^\zeta)\leq \E\big[\exp(-A\cR_T^\zeta)|\F_\tau\big]\label{mp}\q
\end{equation}
for every $\zeta \in\dot{\X}^1(T,X_0)$.
\end{Lem}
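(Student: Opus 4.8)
The plan is to exploit the tower property of conditional expectation together with the Markov-type structure of the controlled pair $(X^\zeta, \cR^\zeta)$. Fix $\zeta \in \dot{\X}^1(T,X_0)$ and a stopping time $\tau$ with values in $[0,T[$. First I would condition on $\F_\tau$ and write
\begin{equation*}
\E\big[\exp(-A\cR_T^\zeta)\,|\,\F_\tau\big] = \E\Big[\exp\Big(-A\cR_\tau^\zeta - A\big(\cR_T^\zeta - \cR_\tau^\zeta\big)\Big)\,\Big|\,\F_\tau\Big] = \exp(-A\cR_\tau^\zeta)\,\E\big[\exp(-A\widetilde{\cR})\,|\,\F_\tau\big],
\end{equation*}
where $\widetilde{\cR}$ denotes the revenue increment $\cR_T^\zeta - \cR_\tau^\zeta$ accumulated on $[\tau,T]$, which depends only on the restriction of $\zeta$ to $[\tau,T]$ and on the state $X_\tau^\zeta$ at time $\tau$. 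The point is that this increment has exactly the form of the revenue functional for a liquidation problem over the horizon $[0, T-\tau]$ started at the (random) state $X_\tau^\zeta$ with initial revenue $0$, driven by the shifted Brownian motion $B_{\tau + \cdot} - B_\tau$, which is independent of $\F_\tau$ and again a Brownian motion by the strong Markov property.

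Next I would invoke a conditional/randomized version of the definition of $\overline V$: since the shifted strategy $\zeta|_{[\tau,T]}$, re-parametrized to $[0,T-\tau]$, is an admissible deterministic-target liquidation strategy for the problem started at $X_\tau^\zeta$ (it steers $X_\tau^\zeta$ to $0$ over the remaining time, and the integrability condition is inherited from $\zeta \in \dot{\X}^1(T,X_0)$), the infimum defining $\overline V(T-\tau, X_\tau^\zeta, \cR_\tau^\zeta)$ is a lower bound for the conditional expectation of $\exp(-A\cR_T^\zeta)$ given $\F_\tau$. Concretely, using $\overline V(t,x,r) = \exp(-Ar)\,\overline V(t,x,0)$ (immediate from the definition, since $\cR$ enters only through the additive constant $R_0$), the claim reduces to
\begin{equation*}
\overline V(T-\tau, X_\tau^\zeta, 0) \leq \E\big[\exp(-A\widetilde{\cR})\,|\,\F_\tau\big]\q,
\end{equation*}
which is precisely the statement that conditioning on $\F_\tau$ freezes the state at $X_\tau^\zeta$ and the remaining control is no better than the optimal one for that frozen state.

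The main obstacle, and the step requiring care, is the measure-theoretic justification of this "conditioning freezes the state" argument when $\tau$ is a genuine (non-discrete) stopping time: one cannot simply plug the random variable $X_\tau^\zeta$ into a deterministic function without an argument. The standard route I would follow is to first establish the inequality for stopping times taking countably many values (where one partitions $\Omega$ according to $\{\tau = t_k\}$ and $\F_{t_k}$-measurable discretizations of $X_{t_k}^\zeta$, applies the definition of $\overline V$ on each piece, and uses independence of the Brownian increments after $t_k$), and then pass to a general $\tau$ by approximating from above by $\tau_n \downarrow \tau$ with $\tau_n$ discrete, using the continuity of $t \mapsto X_t^\zeta$, dominated convergence, and — crucially — the continuity of $\overline V$ in its arguments, which is available from \citet{LM15}. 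A secondary technical point is checking admissibility of the shifted strategy, i.e.\ that $\E\big[\int_\tau^T (X_s^\zeta)^\top \sigma X_s^\zeta + |b\cdot X_s^\zeta - f(\dot\zeta_s)| + |\dot\zeta_s|\,ds \,\big|\, \F_\tau\big] < \infty$ $\P$-a.s., which follows from $\zeta \in \dot{\X}^1(T,X_0)$ together with the fact that a nonnegative integrable random variable has a.s.\ finite conditional expectation.
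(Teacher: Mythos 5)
The paper does not actually prove Lemma~\ref{lice}; it is recalled verbatim from \citet{LM15}, so there is no in-text proof to compare against. Your proposal therefore has to stand on its own.

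The skeleton of your argument (factor out $\exp(-A\cR_\tau^\zeta)$, identify the residual revenue as the revenue functional of a shifted problem started at $X_\tau^\zeta$ over horizon $T-\tau$, discretize $\tau$ and pass to the limit via continuity of $\overline V$ and of the paths) is the right one, and the reduction via $\overline V(t,x,r)=\exp(-Ar)\,\overline V(t,x,0)$ is correct. However there is a genuine gap at the central step. By definition, $\overline{V}(T-\tau, X_\tau^\zeta, \cR_\tau^\zeta)$ is an infimum over the class $\dot{\X}_{det}$ of \emph{deterministic} liquidation strategies, whereas the shifted strategy $\zeta|_{[\tau,T]}$ (conditioned on $\F_\tau$) is in general still an \emph{adapted, random} strategy with respect to the post-$\tau$ filtration. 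An infimum over deterministic strategies does not automatically lower-bound the conditional cost of a random one; a priori the deterministic infimum could be strictly larger. The inequality you assert as ``precisely the statement that conditioning on $\F_\tau$ freezes the state'' in fact encodes the nontrivial result that deterministic strategies are optimal for the CARA minimization problem, i.e.\ that the infimum over $\dot{\X}_{det}$ coincides with the infimum over the full class $\dot{\X}^1$ of adapted strategies. This is exactly the main theorem of \citet{SST10}, and it is what makes $\overline V$ a legitimate conditional lower bound; without invoking it, the step ``the infimum defining $\overline V$ is a lower bound for the conditional expectation'' is unjustified. You should either cite this optimality result explicitly, or supply the exponential-martingale argument that proves it. The remaining technical remarks (discrete approximation of $\tau$, inherited integrability on $[\tau,T]$) are reasonable, but they are secondary to this missing ingredient, which is really the heart of the lemma.
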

The following proposition shows the initial condition fulfilled by $V.$
\begin{Prop}\label{icv}
The value function of the maximization problem \eqref{omp} satisfies
\begin{align}
V(0,X,R)= \lim_{T\downarrow 0}V(T,X,R)&=\begin{cases}
                                             u(R),& \text{if}\; X=0,\\
                                     -\infty,& \text{otherwise}.\\
                                    \end{cases}\label{hjbic}
           \end{align}
\end{Prop}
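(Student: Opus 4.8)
The plan is to reduce the computation of the limit to purely deterministic estimates on the expected revenue $\E[\cR_T^\xi]$ (the value of the degenerate $T=0$ problem, whose only feasible state is the origin, trivially matches the right-hand side of \eqref{hjbic}, so the limit as $T\downarrow0$ is the real content). Since \eqref{apc} forces $u$ to be increasing and strictly concave, Jensen's inequality gives $\E[u(\cR_T^\xi)]\le u(\E[\cR_T^\xi])$ for every $\xi\in\dot{\X}^1(T,X_0)$; the integrability built into the definition of $\dot{\X}^1(T,X_0)$ both makes this legitimate and turns $\int_0^T(X_t^\xi)^\top\sigma\,dB_t$ into a martingale, so that
\[
\E[\cR_T^\xi]=R+\E\Big[\int_0^T b\cdot X_t^\xi\,dt\Big]-\E\Big[\int_0^T f(\xi_t)\,dt\Big].
\]
Apart from this identity, the only ingredients I would invoke are the superlinear growth of $f$ --- for each $C>0$ there is $D_C>0$ with $f(x)\ge C\abs{x}-D_C$, equivalently $\abs{x}\le\tfrac1C(f(x)+D_C)$ --- together with the monotonicity and continuity of $u$ and, from \eqref{ubd1}, the fact that $\lim_{y\to-\infty}u(y)=-\infty$.

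\emph{The case $X=0$.} The strategy $\xi\equiv 0$ keeps $X_t^\xi\equiv 0$ and, since $f(0)=0$, produces $\cR_T^\xi\equiv R$, whence $V(T,0,R)\ge u(R)$ for every $T$; the only delicate point in the lower bound is to verify that $\xi\equiv 0$ meets the moment constraint in \eqref{4a2} defining $\dot{\X}^1_{2A_2}(T,0)$, which I expect to follow from that definition together with the finiteness of the CARA value functions imported from \citet{LM15}. For the matching upper bound, $X_0=0$ gives $\abs{X_t^\xi}\le\int_0^T\abs{\xi_s}\,ds$, hence $\E\big[\int_0^T b\cdot X_t^\xi\,dt\big]\le\abs{b}\,T\,\E\big[\int_0^T\abs{\xi_s}\,ds\big]$; bounding $\E\big[\int_0^T\abs{\xi_s}\,ds\big]$ by $\tfrac1C\big(\E\big[\int_0^T f(\xi_s)\,ds\big]+D_CT\big)$ and taking $T$ so small that $\abs{b}T\le C$, the $f$-integral appears with a nonnegative coefficient and can be discarded, leaving $\E[\cR_T^\xi]\le R+\tfrac1C\abs{b}D_C\,T^2$ \emph{uniformly in} $\xi\in\dot{\X}^1_{2A_2}(T,0)$. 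Consequently $V(T,0,R)\le u\big(R+\tfrac1C\abs{b}D_CT^2\big)\to u(R)$ as $T\downarrow0$, which together with the lower bound yields $\lim_{T\downarrow0}V(T,0,R)=u(R)$.

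\emph{The case $X\ne0$.} Now every admissible $\xi$ must steer the state from $X$ to $0$, so $\int_0^T\xi_s\,ds=X$, and Jensen's inequality for the convex function $f$ gives $\int_0^T f(\xi_s)\,ds\ge Tf(X/T)$ pathwise; writing $Tf(X/T)=\abs{X}\cdot f(X/T)/\abs{X/T}$ and using $\lim_{\abs{x}\to\infty}f(x)/\abs{x}=\infty$ shows $Tf(X/T)\to\infty$ as $T\downarrow0$. To rule out cancellation by the drift I would bound $\abs{X_t^\xi}\le\abs{X}+\int_0^T\abs{\xi_s}\,ds$ and use $\abs{x}\le\tfrac1C(f(x)+D_C)$ once more to get, as soon as $\abs{b}T\le C/2$,
\[
\E[\cR_T^\xi]\le R+\abs{b}\,T\,\abs{X}+\tfrac1C\abs{b}D_C\,T^2-\tfrac12\,Tf(X/T),
\]
uniformly in $\xi$. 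The right-hand side tends to $-\infty$, so by monotonicity of $u$ and $\lim_{y\to-\infty}u(y)=-\infty$ we obtain $V(T,X,R)\le u\big(\sup_\xi\E[\cR_T^\xi]\big)\to-\infty$, with the convention that the supremum over an empty admissible set is $-\infty$.

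I expect the real difficulty to lie not in any single estimate but in making all of them uniform in $\xi$, and this is precisely where the superlinear growth of $f$ is indispensable: it is the only mechanism that lets the trading cost dominate the linear-in-$\abs{\xi}$ contribution of the drift, which is the one term that could otherwise destroy the bounds. The second, more technical, obstacle is the admissibility of the zero strategy in the case $X=0$ --- i.e.\ checking the moment constraint in \eqref{4a2} --- for which I would lean on the properties of the exponential value functions established in \citet{LM15}.
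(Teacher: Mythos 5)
The paper does not actually prove Proposition~\ref{icv} here --- it imports it from \citet{LM15} together with Lemma~\ref{lice}--Theorem~\ref{bp} --- so a direct comparison with ``the paper's own proof'' is not possible. Judged on its own, your argument is sound, and the Jensen reduction to the deterministic quantity $\sup_\xi \E[\cR_T^\xi]$ is exactly the right mechanism: $u\le u_1$ gives boundedness from above, $\E|\cR_T^\xi|<\infty$ comes from the $L^1$-conditions in $\dot\X^1$, and the stochastic integral is a true martingale because $\E\int_0^T (X_t^\xi)^\top\Sigma X_t^\xi\,dt<\infty$. The case $X\neq 0$ is handled cleanly: $\int_0^T\xi_s\,ds=X$ forces the pathwise lower bound $\int_0^T f(\xi_s)\,ds\ge Tf(X/T)\to\infty$ by convexity plus superlinearity, and the $\abs{x}\le\tfrac1C(f(x)+D_C)$ device absorbs the drift term into the $f$-cost uniformly in $\xi$ once $\abs{b}T\le C/2$; the same device gives the matching $O(T^2)$ upper bound when $X=0$.

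Two small remarks. First, for the lower bound $V(T,0,R)\ge u(R)$ you worry about whether $\xi\equiv 0$ lies in $\dot\X^1_{2A_2}(T,0)$; this is moot because Theorem~\ref{eos}, also imported from \citet{LM15}, asserts $\sup_{\dot\X^1_{2A_2}}=\sup_{\dot\X^1}$, and $\xi\equiv 0\in\dot\X^1(T,0)$ trivially. (If you wish to avoid leaning on Theorem~\ref{eos}, note that $\cR_T^0\equiv R$ is deterministic, and by Lemma~\ref{lice} applied at $\tau=0$ one has $M_{\cR_T^{\xi^*}}(2A_2)\ge\overline V(T,0,R)\ge$ nothing obviously helpful --- so the clean route really is via Theorem~\ref{eos}.) Second, to split $\E[\cR_T^\xi]=R+\E\int b\cdot X_t^\xi\,dt-\E\int f(\xi_t)\,dt$ you need both pieces finite separately; this does follow from $\E\int\abs{b\cdot X_t^\xi-f(\xi_t)}\,dt<\infty$ together with $\E\int\abs{b\cdot X_t^\xi}\,dt<\infty$ (the latter from the $L^1$-bound on $\xi$ and $\abs{X_t^\xi}\le\abs{X_0}+\int\abs{\xi_s}\,ds$), so the identity is legitimate but worth saying. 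With these two points tidied up, I see no gap.
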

Moreover, we will refer to the existence and uniqueness of the optimization problem. 
\begin{Theo}\label{eos}
Let  $\left(T,X_0,R_0\right)\in\;]0,\infty[\times\R^d\times\R$, then there exists a unique optimal  strategy  $\xi^*\in\dot{\X}^1(T,X_0)$ for the maximization problem \eqref{omp}, which satisfies 
 \begin{equation}
V(T,X_0,R_0)=\sup_{\xi\in\dot{\X}^1(T, X_0)}\E[u(\cR_{T}^{\xi})]=\E\Big[u\big(\cR^{\xi^*}_T\big)\Big].\label{omp1}
\end{equation}
\end{Theo}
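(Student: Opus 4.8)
The plan is to obtain existence via a compactness-and-lower-semicontinuity argument on a suitable enlargement of the admissible set, and then to upgrade this to strict uniqueness by exploiting the strict convexity of $f$. First I would observe that, thanks to the bounds \eqref{ubd1}–\eqref{vfs}, the supremum in \eqref{omp1} is finite, so we may pick a maximizing sequence $(\xi^n)\subset\dot{\X}^1(T,X_0)$ with $\E[u(\cR_T^{\xi^n})]\to V(T,X_0,R_0)$. The key coercivity estimate is that along any such sequence the quantities $\E\big[\int_0^T\!(X^{\xi^n}_t)^\top\Sigma X^{\xi^n}_t\,dt\big]$ and $\E\big[\int_0^T f(\xi^n_t)\,dt\big]$ stay bounded: otherwise the superlinear growth of $f$ (i.e. $f(x)/|x|\to\infty$) together with $f\ge 0$ forces $\E[\int_0^T f(-\dot\xi^n_t)\,dt]\to\infty$, which by \eqref{ubd1} would drive $\E[u(\cR_T^{\xi^n})]$ to $-\infty$, contradicting maximality. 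This bound on $\E[\int_0^T f(\xi^n_t)\,dt]$ is the analogue of an $L^1$-type bound, and superlinearity of $f$ is exactly a de la Vallée–Poussin condition, so the family $(\xi^n)$ is uniformly integrable on $\Omega\times[0,T]$.

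Next I would extract a limit. Uniform integrability gives a subsequence converging weakly in $L^1(\P\otimes dt;\R^d)$ to some $\xi^*$; passing if necessary to convex combinations (Komlós / Mazur), I may assume the convergence is strong enough to pass to the limit in the linear terms $\int_0^T b\cdot X^\xi_t\,dt$ and in the stochastic integral $\int_0^T (X^\xi_t)^\top\sigma\,dB_t$ (the latter via an $L^2$-isometry argument once the quadratic bound above is in hand), and to check that $X^{\xi^*}_t=X_0-\int_0^t\xi^*_s\,ds$ is still adapted with $X^{\xi^*}_T=0$, so that $\xi^*\in\dot{\X}^1(T,X_0)$. For the nonlinear running-cost term one uses convexity: $\xi\mapsto\E[\int_0^T f(-\dot\xi_t)\,dt]$ is convex and strongly $L^1$-lower semicontinuous, hence weakly l.s.c., which yields $\E[\cR_T^{\xi^*}]\ge\limsup$-type control; combined with concavity of $u$ and a uniform integrability argument for $u(\cR_T^{\xi^n})$ (again from \eqref{ubd1} and the established moment bounds) one gets $\E[u(\cR_T^{\xi^*})]\ge\limsup_n\E[u(\cR_T^{\xi^n})]=V(T,X_0,R_0)$, so $\xi^*$ is optimal and in particular \eqref{omp1} holds. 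Note that the restricted set in \eqref{omp} versus the larger $\dot{\X}^1(T,X_0)$ causes no trouble: the chain \eqref{vfs} and the definition \eqref{4a2} show the optimizer automatically satisfies the moment constraint, so the two suprema coincide.

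For uniqueness, suppose $\xi^*$ and $\eta^*$ were both optimal and distinct on a set of positive $\P\otimes dt$-measure. Consider the midpoint $\zeta=\tfrac12(\xi^*+\eta^*)\in\dot{\X}^1(T,X_0)$ (the set is convex, and the integrability constraint is preserved by convexity of $f$). Strict convexity of $f$ gives, on the event where $\xi^*\neq\eta^*$, the strict inequality $f(-\dot\zeta_t)<\tfrac12 f(-\dot\xi^*_t)+\tfrac12 f(-\dot\eta^*_t)$, hence $\cR_T^\zeta\ge\tfrac12\cR_T^{\xi^*}+\tfrac12\cR_T^{\eta^*}$ with strict inequality on a set of positive measure (the Brownian and drift terms being linear in $\xi$ pass through the average exactly). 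Since $u$ is strictly increasing, $\E[u(\cR_T^\zeta)]>\E[u(\tfrac12\cR_T^{\xi^*}+\tfrac12\cR_T^{\eta^*})]\ge\tfrac12 V+\tfrac12 V=V$ by concavity of $u$, contradicting optimality. Hence $\xi^*=\eta^*$ $\P\otimes dt$-a.e.

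The main obstacle I anticipate is the limit passage for the stochastic integral and the running cost simultaneously: weak $L^1$ convergence of $\xi^n$ does not by itself control $\int_0^T(X^{\xi^n}_t)^\top\sigma\,dB_t$, and one must leverage the quadratic bound to work in $L^2$ and argue that $X^{\xi^n}\to X^{\xi^*}$ strongly enough (e.g. weakly in $L^2(\P\otimes dt)$, which suffices for the Itô isometry of the difference up to a weak limit) while still keeping the convex functional $\E[\int_0^T f\,dt]$ lower semicontinuous. Handling the finite-fuel terminal constraint $X_T=0$ in the limit — making sure it is not lost under weak convergence — is the other delicate point, resolved by noting $X^{\xi^n}_T=0$ for all $n$ and that $\xi\mapsto X^\xi_T$ is continuous for weak $L^1$ convergence.
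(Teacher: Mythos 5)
This theorem is cited from \citet{LM15} and not proved in the present paper, so a line-by-line comparison is not possible; judged on its own merits, your proposal has a genuine gap at its centre, namely the coercivity estimate. You assert that along a maximizing sequence $(\xi^n)$ both $\E\big[\int_0^T f(\xi^n_t)\,dt\big]$ and $\E\big[\int_0^T (X^{\xi^n}_t)^\top\Sigma X^{\xi^n}_t\,dt\big]$ remain bounded, but you only justify the first: concavity of $u$ and Jensen control the first moment of $\int f$. The second quantity requires a bound on the \emph{second} moment of $\int_0^T|\xi^n_t|\,dt$, which controls $\sup_{t\le T}|X^{\xi^n}_t|^2$. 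Superlinearity of $f$ gives only $\int|\xi^n|\le C_\varepsilon T+\varepsilon\int f(\xi^n)$, and squaring this produces $\big(\int f(\xi^n)\big)^2$, whose expectation you have not controlled. Without an $L^2(\P\otimes dt)$ bound on $X^{\xi^n}$ the passage to the limit in the stochastic integral $\int (X^{\xi^n})^\top\sigma\,dB$ cannot proceed via the It\^o isometry, a difficulty you acknowledge but do not resolve. You also gloss over the uniform integrability of $u(\cR_T^{\xi^n})^-$; since $u$ is only bounded above, this is exactly where the exponential-moment constraint in $\dot{\X}^1_{2A_2}$ (bounding $\E[\exp(-2A_2\cR_T^\xi)]$) is meant to enter, supplying de la Vall\'ee--Poussin integrability of $\exp(-A_2\cR_T^{\xi^n})$. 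Your remark that the restricted set ``causes no trouble'' misreads the structure: that restriction is likely the very source of the compactness, and discarding it by working over the full $\dot{\X}^1(T,X_0)$ from the outset leaves precisely the coercivity estimate unsupplied.

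The uniqueness half (midpoint strategy, strict convexity of $f$, strict monotonicity of $u$) is correct and standard, and the tautological observation that the maximizer, once it exists with finite exponential moment, lies in $\dot{\X}^1_{2A_2}$ is also fine, since $\E[\exp(-2A_2\cR_T^{\xi^*})]=M_{\cR_T^{\xi^*}}(2A_2)<M_{\cR_T^{\xi^*}}(2A_2)+1$. So the skeleton of the direct method is right. What is missing is the a priori estimate that closes the loop between a lower bound on the utility level, the exponential moments of $\cR_T$ (via the CARA envelopes $V_1,V_2$ and Lemma~\ref{lice}), and the quadratic variation $\E\big[\int (X^\xi)^\top\Sigma X^\xi\big]$. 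That estimate is nontrivial and is where the actual work in the cited proof must lie.
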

The concavity property of the value function (which follows from the concavity of $-f$ and $u$) and the preceding result, directly imply:

\begin{Theo}\label{v_r}
The value function is continuously partially differentiable in $R$, and we have the formula 
\[
V_r(T,X,R)=\E\big[u'\big(\cR_T^{\xi^*}\big)\big],
\]
 where $\xi^*$ is the optimal strategy associated to  $V(T,X,R)$.
\end{Theo}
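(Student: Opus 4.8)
The plan is to exploit the concavity of $V$ in $R$ established from the concavity of $u$ and of $-f$, together with the existence and uniqueness of the optimal strategy from Theorem \ref{eos}, and then to differentiate through the expectation using a standard envelope-type argument. First I would fix $(T,X)$ and write $\varphi(R):=V(T,X,R)$; since for each admissible $\xi$ the revenue satisfies $\cR_T^{\xi}(R) = R + \cR_T^{\xi}(0)$ (the initial capital $R$ enters additively), the map $R\mapsto\E[u(\cR_T^{\xi})]$ is concave for every $\xi$, and hence $\varphi$ is concave as a supremum of concave functions. Concavity already gives existence of left and right derivatives everywhere, so the only real task is to identify them and show they coincide with $\E[u'(\cR_T^{\xi^*})]$.

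Next I would establish the two one-sided inequalities. For the lower bound on the right derivative: using the optimal strategy $\xi^*=\xi^*(T,X,R)$ for the problem with initial wealth $R$ as a (suboptimal but admissible) competitor for initial wealth $R+h$, one gets $\varphi(R+h)-\varphi(R)\ge \E[u(\cR_T^{\xi^*}+h)-u(\cR_T^{\xi^*})]$; dividing by $h>0$, letting $h\downarrow 0$, and using monotone/dominated convergence together with the sandwich bounds \eqref{ubd1} on $u$ and $u'$ (which give exponential control, integrable against the moment bounds implicit in $\dot{\X}^1$ and \eqref{4a2}) yields $\varphi'_+(R)\ge \E[u'(\cR_T^{\xi^*})]$. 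Symmetrically, comparing $\varphi(R)$ with the optimizer for $R-h$ gives $\varphi'_-(R)\le \E[u'(\cR_T^{\xi^*})]$. Since $\varphi$ is concave we always have $\varphi'_+(R)\le\varphi'_-(R)$, so the two inequalities force equality: $\varphi'_+(R)=\varphi'_-(R)=\E[u'(\cR_T^{\xi^*})]$. This simultaneously proves differentiability at every $R$ and gives the formula; continuity of $R\mapsto\varphi'(R)$ then follows because a concave function with a derivative at every point has a continuous derivative.

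The main obstacle I anticipate is the justification of passing the limit inside the expectation when differencing $u$, because $u$ only has a \emph{bounded} Arrow–Pratt ratio rather than being globally Lipschitz, so $u'$ can grow like $\exp(A_2 x)$ as $x\to-\infty$. The fix is to exploit that along the optimal strategy $\cR_T^{\xi^*}$ has good exponential integrability: from \eqref{4a2} one controls $\E[\exp(-2A_2\cR_T^{\xi^*})]$, and $u'(x)\le A_2\exp(-A_2 x)$ by \eqref{apc}, so the difference quotients $\frac{1}{h}(u(\cR_T^{\xi^*}+h)-u(\cR_T^{\xi^*}))$ are dominated for $h\in(0,1]$ by $A_2\exp(-A_2(\cR_T^{\xi^*}-1))=A_2 e^{A_2}\exp(-A_2\cR_T^{\xi^*})$, which is integrable (even square-integrable against the \eqref{4a2} bound). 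Dominated convergence then applies on both sides. One should also remark, for completeness, that the competitor strategy obtained by keeping $\xi^*$ and only shifting the initial wealth is genuinely admissible in $\dot{\X}^1_{2A_2}(T,X)$ for the shifted problem — this is immediate since $X^{\xi}$ does not depend on $R$ and the revenue simply shifts by a constant, so the finite-fuel and the \eqref{4a2} constraints are preserved. With these integrability points handled, the argument is the classical envelope theorem and the remaining steps are routine.
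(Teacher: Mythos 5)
Your approach --- the standard envelope argument built from concavity of $V(T,X,\cdot)$ and the existence and uniqueness of the optimizer (Theorem \ref{eos}) --- is exactly what the paper alludes to (the result is imported from \citet{LM15}, with the remark that concavity and Theorem \ref{eos} ``directly imply'' it), and the overall argument is sound. Two points need tightening. First, ``$\varphi$ is concave as a supremum of concave functions'' is not a valid justification: a supremum of concave functions is in general \emph{not} concave. The correct reason is that $(\xi,R)\mapsto\E\big[u\big(\cR_T^{\xi}(R)\big)\big]$ is \emph{jointly} concave (the inner map is affine in $R$ and concave in $\xi$ through $-\int f(\xi_t)\,dt$, and $u$ is concave increasing), and partial maximization of a jointly concave function over a convex set preserves concavity --- or simply cite the concavity of $V$ that the paper already records. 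Second, for the upper bound on the left derivative the competitor should again be $\xi^*$, the optimizer at $R$, used at initial wealth $R-h$: this yields $\varphi(R)-\varphi(R-h)\le\E\big[u(\cR_T^{\xi^*})-u(\cR_T^{\xi^*}-h)\big]$ and hence $\varphi'_-(R)\le\E\big[u'(\cR_T^{\xi^*})\big]$. Using ``the optimizer for $R-h$'' as you wrote gives the inequality in the wrong direction and introduces an $h$-dependent strategy, complicating the limit. With these two repairs the squeeze $\E[u'(\cR_T^{\xi^*})]\le\varphi'_+(R)\le\varphi'_-(R)\le\E[u'(\cR_T^{\xi^*})]$, your dominated-convergence argument (minor caveat: $u'(x)\le A_2 e^{-A_2 x}$ need not hold for $x>0$; the correct and sufficient bound is $u'(x)\le u'(0)\max(1,e^{-A_2 x})$), and the fact that an everywhere-differentiable concave function has a continuous derivative together give the theorem.
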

Theorem \ref{eos} also implies the subsequent result.
\begin{Theo}\label{cv}
The value function $V$ is continuous on $]0,\infty[\times\R^d\times \R$. 
\end{Theo}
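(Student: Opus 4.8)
The plan is to exploit the structure of the problem and the results already in hand: the existence of a unique optimizer (Theorem \ref{eos}), the sandwiching of $V$ between the two CARA value functions $V_1,V_2$ (inequality \eqref{vfs}), and the explicit tractability of those exponential problems. Continuity in the $R$-variable is essentially free: since $u$ is concave and finite-valued, $V(T,X,\cdot)$ is concave on $\R$ for each fixed $(T,X)$, hence continuous; alternatively one reads it off from Theorem \ref{v_r}, which already asserts $C^1$-dependence in $R$. So the substance is joint continuity in $(T,X)$, and the natural route is to prove upper and lower semicontinuity separately and combine them.

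For the estimates I would first reduce to a uniform framework by a shift/rescaling in the revenue variable. Observe that if $\xi$ is admissible for $(T,X_0)$ then adding a deterministic affine correction produces an admissible strategy for a nearby $(T',X_0')$, with revenues differing by a quantity controlled by $\int f$ and the stochastic integral; the key technical point is that, because of the finite-fuel constraint $X_T=0$ and the superlinear growth of $f$, strategies with uniformly bounded cost stay in a weakly compact set (this is exactly the machinery behind Theorem \ref{eos}). Concretely: given $(T_n,X_n,R_n)\to(T,X,R)$, let $\xi^*$ be the optimizer at the limit point and build a competitor $\xi^{(n)}$ at $(T_n,X_n)$ by concatenating a short ramp that drives $X_n$ to $X$ over a vanishing time interval (or reparametrising $\xi^*$ onto $[0,T_n]$) with $\xi^*$ itself; estimating $\E[u(\cR_{T_n}^{\xi^{(n)}})]$ against $\E[u(\cR_T^{\xi^*})]$ using the two-sided bound \eqref{ubd1} and dominated convergence gives $\liminf_n V(T_n,X_n,R_n)\ge V(T,X,R)$, i.e.\ lower semicontinuity.

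For upper semicontinuity I would argue by contradiction: take $(T_n,X_n,R_n)\to(T,X,R)$ with $V(T_n,X_n,R_n)\to \ell > V(T,X,R)$ and let $\xi^{(n)}$ be the optimal strategies. The finite-fuel constraint together with the bound $V_2 \le V(T_n,X_n,R_n)$ forces $\E[\int_0^{T_n} f(\cdot)\,dt]$ and the other terms defining $\dot\X^1$ to be bounded uniformly in $n$; by superlinear growth of $f$ this yields a uniformly integrable (hence weakly relatively compact in $\L^1$) family, so along a subsequence $\dot\xi^{(n)}\rightharpoonup \dot\xi$ for some $\xi$ admissible at $(T,X)$ — here one uses that the terminal constraint and adaptedness pass to the weak limit, and that the running cost functional is weakly lower semicontinuous because $f$ is convex while the reward is weakly upper semicontinuous. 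Passing to the limit then gives $\E[u(\cR_T^{\xi})]\ge \ell > V(T,X,R)$, contradicting optimality of $\xi^*$ at $(T,X)$. Combining the two semicontinuities yields continuity.

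The main obstacle is the limiting argument in the last paragraph: one must handle the time-horizon change (the processes live on different intervals $[0,T_n]$, so a careful reparametrisation or an extension-by-zero device is needed), the exponential growth of $u$ which obstructs naive dominated convergence and has to be tamed using the CARA bounds \eqref{ubd1}–\eqref{vfs} and the uniform moment control built into the admissible class \eqref{4a2}, and the passage of the hard terminal constraint $X_T=0$ through the weak limit. All three are exactly the ``further techniques'' the introduction flags as making the classical route non-straightforward; the continuity of $V$ from \citet{LM15} is in fact proved by carrying out precisely this program, so here I would cite that reference for the detailed estimates and only sketch the structure above.
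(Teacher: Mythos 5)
The paper does not actually give a proof of Theorem \ref{cv}: it is one of the background results imported verbatim from \citet{LM15}, the only in-text indication being the remark ``Theorem \ref{eos} also implies the subsequent result.'' There is therefore no in-paper argument to compare your sketch against, and you have correctly recognized this by deferring the hard estimates to \citet{LM15}. Your outline is consistent with the paper's framing: continuity does hinge on the weak-compactness machinery behind existence and uniqueness (superlinear growth of $f$ $\Rightarrow$ de la Vall\'ee Poussin $\Rightarrow$ weak $\L^1$-relative compactness of cost-bounded controls, with convexity of $f$ and concavity of $u$ supplying the needed semicontinuities), together with the CARA sandwich \eqref{ubd1}--\eqref{vfs} and the moment cap in \eqref{4a2}.

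A few things you would need to tighten to make this stand alone. Concavity and Theorem \ref{v_r} give continuity in $R$ at fixed $(T,X)$, which is separate, not joint, continuity; the cleaner reduction is to observe that $V(T,\cdot,\cdot)$ is finite and concave on all of $\R^d\times\R$, and that pointwise convergence of such concave functions is automatically locally uniform, so joint continuity follows once one proves continuity in $T$ at each fixed $(X,R)$. In the upper-semicontinuity step, the bound $V_2\le V(T_n,X_n,R_n)$ does not by itself control $\E\big[\int_0^{T_n} f\big]$; you need the chain $\E[u(\cR)]\ge V_2$, the two-sided bound $u_2\le u\le u_1$, monotonicity of $u$ with Jensen's inequality to bound $\E[\cR]$ from below, and then control of the drift term before you can extract a uniform bound on the running cost. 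And the admissibility of your reparametrised competitor inside $\dot\X^1_{2A_2}(T_n,X_n)$, i.e.\ the moment cap in \eqref{4a2}, which is pegged to the optimizer at the \emph{perturbed} point rather than the base point, needs an explicit check. None of these are fatal, and all are exactly the technicalities \citet{LM15} must handle; your sketch is a faithful roadmap to that program, but it is an outline rather than a proof.
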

To derive most of the results in this work, we will refer to the dynamic programming principle, stated in the following theorem.
\begin{Theo}
\emph{(Bellman Principle)}
\label{bp}
Let $(T,X_0,R_0)\in\;]0,\infty[\times\R^d\times\R$, then we have
 \begin{equation}
 V(T, X_{0}, R_{0})=\sup_{\xi\in\dot{\X}_{2A_2}^1(T, X_0)}\E\big[V\big(T-\tau, X^\xi_{\tau}, \cR^{\xi}_{\tau}\big)\big],
\label{ebp}
\end{equation}
for every stopping time $\tau$ taking values in $[0,T[$.
\end{Theo}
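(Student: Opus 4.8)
The plan is the classical two-inequality derivation of the dynamic programming principle, with the usual technicalities aggravated by the finite-fuel terminal constraint $X_T=0$, the fact that $R_0$ enters additively inside a utility $u$ that is \emph{not} exponential, and the exponential integrability built into $\dot{\X}^1_{2A_2}$. The backbone is the flow structure of $(X^\xi,\cR_\cdot^\xi)$: for a stopping time $\tau<T$ the increments $B_{\tau+s}-B_\tau$, $s\ge 0$, are independent of $\F_\tau$ (strong Markov property); if $\xi$ steers $X_0$ to $0$ over $[0,T]$ then its shift $\hat\xi:=(\xi_{\tau+s})_{s\in[0,T-\tau]}$ steers $X_\tau^\xi$ to $0$ over the horizon $T-\tau$; and $\cR_T^\xi=\cR_\tau^\xi+G(\hat\xi,\,B_{\tau+\cdot}-B_\tau)$, where $G$ is the same revenue functional, now started from the $\F_\tau$-measurable data $(X_\tau^\xi,\cR_\tau^\xi)$. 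By Theorem \ref{eos} I may take every occurrence of $V$ on the right-hand side as a supremum over the larger set $\dot{\X}^1$, since the constrained and unconstrained suprema coincide and are attained.

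For the inequality $\le$, fix $\xi\in\dot{\X}^1_{2A_2}(T,X_0)$. One argues that, $\P$-a.s., $\hat\xi$ is admissible for the subproblem with data $(T-\tau,X_\tau^\xi,\cR_\tau^\xi)$: the terminal constraint is inherited and the integrability conditions defining $\dot{\X}^1$ restrict consistently to $[\tau,T]$. Since $\E[\exp(-2A_2\cR_T^\xi)]<\infty$ forces $\E[\exp(-A_2\cR_T^\xi)]<\infty$ and $u^+\le 1/A_1$ by \eqref{ubd1}, the conditional expectation $\E[u(\cR_T^\xi)\mid\F_\tau]$ is well defined; using the independence of the post-$\tau$ Brownian increments from $\F_\tau$ and a regular-conditional-probability representation, it equals the payoff of $\hat\xi^\omega$ in the $\omega$-subproblem and is therefore $\le V(T-\tau,X_\tau^\xi,\cR_\tau^\xi)$ $\P$-a.s. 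Taking expectations and then the supremum over $\xi$, while recalling $V(T,X_0,R_0)=\sup_{\xi\in\dot{\X}^1_{2A_2}(T,X_0)}\E[u(\cR_T^\xi)]$ from Theorem \ref{eos}, gives this direction.

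The reverse inequality $\ge$ is the substantive part. Fix $\xi\in\dot{\X}^1_{2A_2}(T,X_0)$ and $\e>0$. Using the continuity of $V$ (Theorem \ref{cv}), I would partition $]0,T[\times\R^d\times\R$ into countably many Borel cells $D_k$ of small diameter, pick a point $(t_k,x_k,r_k)\in D_k$ in each, and choose an $\e$-optimal control $\xi^k\in\dot{\X}^1(t_k,x_k)$ for the subproblem at that point. On the event $\{(T-\tau,X_\tau^\xi,\cR_\tau^\xi)\in D_k\}$ one repairs the discrepancy between the true starting data and $(t_k,x_k,r_k)$: the time gap $|T-\tau-t_k|$ and the position gap $|X_\tau^\xi-x_k|$ are absorbed by prepending a short deterministic steering leg whose fuel cost and induced change of $\cR$ tend to $0$ with $\mathrm{diam}\,D_k$ (here $f(0)=0$, the continuity and superlinear growth of $f$ enter), while the revenue gap $\cR_\tau^\xi-r_k$ enters $u$ only additively and its effect on the expected payoff is negligible by dominated convergence, using \eqref{ubd1} and the exponential-moment control of $\cR_\tau^\xi$ from Lemma \ref{lice}. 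Concatenating $\xi$ on $[0,\tau]$ with the repaired $\xi^k$ on $[\tau,T]$ produces a control $\tilde\xi$, which one checks belongs to $\dot{\X}^1(T,X_0)$ (terminal constraint; integrability via Lemma \ref{lice} and \eqref{ubd1}--\eqref{vfs}). Conditioning on $\F_\tau$ as before yields $\E[u(\cR_T^{\tilde\xi})\mid\F_\tau]\ge V(T-\tau,X_\tau^\xi,\cR_\tau^\xi)-C\e$ $\P$-a.s. with $C$ independent of $\xi$ and $\e$; taking expectations, using $V(T,X_0,R_0)\ge\E[u(\cR_T^{\tilde\xi})]$ (Theorem \ref{eos}), letting $\e\downarrow0$, and taking the supremum over $\xi\in\dot{\X}^1_{2A_2}(T,X_0)$ then closes the argument.

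The main obstacle is precisely this last step: producing, measurably in $\omega$, a near-optimal concatenation that honors the finite-fuel constraint $X_T=0$ despite the randomness of the splitting state $X_\tau^\xi$, while retaining enough exponential integrability to pass the $\e$-optimality through the conditional expectation and the outer expectation. The countable-partition device reduces the measurability issue to the continuity of $V$; the endpoint discrepancy is cured by a cheap deterministic correction leg (this is where $f(0)=0$ and the superlinear growth of $f$ make the correction negligible); and the integrability bookkeeping rests on Lemma \ref{lice} together with the a priori bounds \eqref{ubd1}--\eqref{vfs}.
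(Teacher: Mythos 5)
The paper does not actually prove Theorem \ref{bp}: together with Lemma \ref{lice}, Proposition \ref{icv}, and Theorems \ref{eos}--\ref{cv}, it is listed as imported verbatim from \citet{LM15}, so there is no in-paper proof to compare your attempt against. What can be said is that your outline is the standard two-inequality argument for the dynamic programming principle and is, as a programme, the right one for this setting. The easy direction is handled correctly: conditioning on $\F_\tau$, using the flow structure of $(X^\xi,\cR^\xi)$ and the independence of post-$\tau$ Brownian increments, and observing via Theorem \ref{eos} that the constrained and unconstrained suprema agree. You also correctly isolated the hard direction and the three obstructions specific to this model: measurable selection of near-optimal controls on the random state $(T-\tau,X_\tau^\xi,\cR_\tau^\xi)$, the finite-fuel constraint $X_T=0$, and the exponential integrability bookkeeping needed to let the $\e$-optimality survive the conditional and outer expectations.

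Two places where the sketch is thinner than it should be, even as a sketch. First, the ``cheap correction leg'' claim needs more care than ``the fuel cost $\to 0$ with $\mathrm{diam}\,D_k$'': if the repair over a time slot $\delta$ must close a position gap $|X_\tau^\xi-x_k|$, the running cost is of order $\delta\, f(|X_\tau^\xi-x_k|/\delta)$, so $\delta$ cannot be sent to zero independently of the cell diameter (the superlinearity of $f$ works against you), and the sign of the time mismatch $T-\tau-t_k$ matters: you must arrange $t_k\le T-\tau$ on each cell so the borrowed control fits into the remaining horizon, and then argue via continuity of $V$ in the time variable that idling (with $f(0)=0$ but nonzero $\sigma$-risk) costs at most $o(1)$. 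Second, ``conditioning on $\F_\tau$ as before yields $\E[u(\cR_T^{\tilde\xi})\mid\F_\tau]\ge V(T-\tau,X_\tau^\xi,\cR_\tau^\xi)-C\e$ with $C$ independent of $\xi$'' is not automatic: the $\e$-optimality is arranged at the reference points $(t_k,x_k,r_k)$ and must be transferred to the random splitting state uniformly over the (a priori unbounded) range of $\cR_\tau^\xi$ and $X_\tau^\xi$; this requires either a localization by stopping (restricting to a compact set of states, then removing the cutoff with the exponential bound of Lemma \ref{lice}) or a quantitative modulus of continuity for $V$ on compacts, neither of which is spelled out. Both gaps are fillable along the lines you indicate, but they are exactly the places where a written-out proof has to do work, and the sketch currently elides them.
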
 

\subsection{The Hamilton-Jacobi-Bellman equation}

In the sequel, we bring to light a strong relationship between $V$ and a Hamilton-Jacobi-Bellman (HJB) equation, which is obtained via a classical heuristic derivation. We first  suppose that $V\in\C^{1,1,2}(]0,T]\times \R^d\times\R)$. To simplify matters, let us introduce the following linear second-order operator $\cL^{\eta}$, where for $\eta\in \R^d,$
\begin{equation}
\cL^{\eta}v(T,X,R):=\bigg(\frac{X^\top \Sigma X}{2}v_{rr}+b\cdot X v_r-\Big(\eta ^{\top}\nabla_x v+f(-\eta)v_r\Big)\bigg)(T,X,R).\label{lsop}
\end{equation}
Note that this operator is continuous in $\eta$, due to the continuity of $f$.
Classical heuristic derivations as well as Proposition \ref{icv} suggest that $V$ should satisfy
 \begin{align}
-V_t  +\sup_{\xi\in \R^d}\cL^{\xi}V&=0,
\label{hjb}\\
V(0,X,R)= \lim_{T\downarrow 0}V(T,X,R)&=\begin{cases}
                                             u(R),& \text{if}\; X=0\\
                                     -\infty,& \text{otherwise}.\\
                                    \end{cases}\label{hjbic}
           \end{align} 
The intuition behind the singularity in the initial condition is that a strategy that does not lead to a complete liquidation of the portfolio within a given time period is highly penalized. 
\begin{rem}\label{rhf}
\begin{enumerate}
\item[(i)] Since $f$ is positive and $\lim_{|x|\rightarrow\infty} f(x)=\infty$, equation \eqref{hjb} makes sense only when $V_r(t,x,r)>0$ for every $(t,x,r)\in\;]0,T]\times\R^d\times\R$. This is however in concordance with Theorem \ref{v_r}, which implies that the value function has a strictly positive partial derivative in its third argument. 
\item[(ii)] Let us denote  by $$f^*(z):=\sup_{x}(x\cdot z-f(x))$$ the Fenchel-Legendre transformation of $f$, which is a finite convex function, due to the assumptions on $f$ (see Theorem 12.2 in \cite{R97}). With this at hand, equation \eqref{hjb}  can be written equivalently as
\begin{equation}
-V_t   +b\cdot X_t\,V_r +\frac{X^\top \Sigma X}{2} V_{rr}+V_r  f^* \Big(\frac{ \nabla_x V}{V_r}\Big)=0.\label{flh}
\end{equation}
\end{enumerate}
\xqed{\diamondsuit}
\end{rem}
 We now suppose that $V\in \C^{1,1,2}(]0,T]\times
\R^d\times\R)$. The next theorem shows that  it is a classical solution of \eqref{hjb}.
\begin{Theo}\label{cherie}
Let $V\in\C^{1,1,2}(]0,T]\times\R^d\times\R)$ be the value function of the maximization problem \eqref{omp}.  Then $V$ is a classical solution of \eqref{hjb} with initial condition \eqref{hjbic}.
\end{Theo}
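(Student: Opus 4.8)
The plan is to follow the classical passage from the dynamic programming principle to the HJB equation, adapting it to the present singular, exponentially growing, finite-fuel setting. The initial condition \eqref{hjbic} is precisely Proposition~\ref{icv}, so the task reduces to showing, for every $(T_0,X_0,R_0)\in\;]0,T]\times\R^d\times\R$, the identity
\[
-V_t(T_0,X_0,R_0)+\sup_{\eta\in\R^d}\cL^{\eta}V(T_0,X_0,R_0)=0 ;
\]
I would prove the two inequalities separately. One observation is used throughout: by Theorem~\ref{v_r} we have $V_r>0$, so Remark~\ref{rhf}(ii) identifies the Hamiltonian $(t,x,r)\mapsto\sup_{\eta}\cL^{\eta}V(t,x,r)$ with $b\cdot x\,V_r+\tfrac12\,x^\top\Sigma x\,V_{rr}+V_r\,f^*(\nabla_x V/V_r)$; since $V\in\C^{1,1,2}$ and $f^*$ is finite and continuous this Hamiltonian is continuous on $]0,T]\times\R^d\times\R$, and the superlinear growth of $f$ ensures that the supremum is attained at a finite maximizer $\eta^*=\eta^*(t,x,r)$.

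\emph{The inequality $-V_t+\sup_\eta\cL^\eta V\le0$.} Fix $(T_0,X_0,R_0)$ and $\eta\in\R^d$, and choose an admissible strategy $\xi\in\dot{\X}^1_{2A_2}(T_0,X_0)$ whose liquidation rate equals $\eta$ on a small interval $[0,\varepsilon]$, $\varepsilon<T_0$, and which afterwards steers the state to $0$ by time $T_0$ at bounded rate. For $0<h<\varepsilon$, the Bellman principle (Theorem~\ref{bp}) with $\tau\equiv h$ gives $V(T_0,X_0,R_0)\ge\E\big[V(T_0-h,X^\xi_h,\cR^\xi_h)\big]$. It\^o's formula applied to $t\mapsto V(T_0-t,X^\xi_t,\cR^\xi_t)$ on $[0,h]$ produces a continuous semimartingale whose finite-variation part has $dt$-density $(-V_t+\cL^{\eta}V)(T_0-t,X^\xi_t,\cR^\xi_t)$ (cf.\ \eqref{lsop}) and whose martingale part is $V_r(T_0-t,X^\xi_t,\cR^\xi_t)\,(X^\xi_t)^\top\sigma\,dB_t$; taking expectations, after a localization that turns the stochastic integral into a true martingale, we arrive at
\[
0\ \ge\ \E\!\left[\int_0^h\big(-V_t+\cL^{\eta}V\big)(T_0-t,X^\xi_t,\cR^\xi_t)\,dt\right].
\]
Dividing by $h$ and letting $h\downarrow0$, the continuity of the integrand and the equalities $X^\xi_0=X_0$, $\cR^\xi_0=R_0$ give $\big(-V_t+\cL^{\eta}V\big)(T_0,X_0,R_0)\le0$; choosing $\eta=\eta^*(T_0,X_0,R_0)$ yields $-V_t+\sup_\eta\cL^\eta V\le0$ at that point.

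\emph{The inequality $-V_t+\sup_\eta\cL^\eta V\ge0$.} Here I would use the genuine optimal strategy $\xi^*$ associated with $V(T_0,X_0,R_0)$ from Theorem~\ref{eos}. Since an optimal strategy is time-consistent (and unique), the Bellman principle is, for $\xi^*$, an equality, $V(T_0,X_0,R_0)=\E\big[V(T_0-h,X^{\xi^*}_h,\cR^{\xi^*}_h)\big]$, so the same It\^o computation (and the same localization) gives
\[
\E\!\left[\int_0^h\big(-V_t+\cL^{\xi^*_t}V\big)(T_0-t,X^{\xi^*}_t,\cR^{\xi^*}_t)\,dt\right]=0 .
\]
By the inequality already proved, the integrand is pointwise $\le-V_t+\sup_\eta\cL^\eta V\le0$, hence it vanishes $dt\otimes d\P$-almost everywhere on $[0,h]\times\Omega$. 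Therefore, for $\P$-a.e.\ $\omega$ and Lebesgue-a.e.\ $t\in(0,h)$,
\[
\big(-V_t+\sup_{\eta}\cL^{\eta}V\big)(T_0-t,X^{\xi^*}_t,\cR^{\xi^*}_t)\ \ge\ \big(-V_t+\cL^{\xi^*_t}V\big)(T_0-t,X^{\xi^*}_t,\cR^{\xi^*}_t)=0 .
\]
Picking $t_k\downarrow0$ inside this full-measure set and using the continuity of the Hamiltonian together with the path-continuity of $t\mapsto(X^{\xi^*}_t,\cR^{\xi^*}_t)$ at $t=0$, we pass to the limit and obtain $\big(-V_t+\sup_\eta\cL^\eta V\big)(T_0,X_0,R_0)\ge0$, which completes the proof.

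\emph{Expected main obstacles.} The classical recipe cannot be copied verbatim, for three reasons of increasing severity. First, the constrained admissible class \eqref{4a2}: in the $\le0$ direction one must exhibit, near $t=0$, admissible comparison strategies with a prescribed constant liquidation rate that still respect both the finite-fuel terminal condition $X_{T_0}=0$ and the moment bound in \eqref{4a2}; a small perturbation of $\xi^*$ on $[0,\varepsilon]$ completed by a bounded-rate path should work, but the moment condition must be verified (or one may work with the unconstrained class $\dot{\X}^1$, on which $V$ has the same value by \eqref{omp1}). Second --- and this is where I expect the bulk of the work --- because $V_r$ enjoys only exponential-type bounds (Theorem~\ref{v_r} and \eqref{ubd1}), the integrand of the local martingale in It\^o's formula need not lie in $L^2(dt\otimes d\P)$ and the $dt$-integrand need not be uniformly integrable on small time intervals, so the localization has to be carried out simultaneously in the state variables and in the accumulated fuel $\int_0^t|\xi_s|\,ds$, using the integrability built into the definition of $\dot{\X}^1$, with a careful limit passage. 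Third, the identity $-V_t+\cL^{\xi^*_t}V=0$ obtained above holds a priori only along the optimal path, and since $\xi^*_t$ itself need not converge as $t\downarrow0$, transferring it to the fixed point $(T_0,X_0,R_0)$ relies essentially on the fact that it is the \emph{Hamiltonian} $\sup_\eta\cL^\eta V$, and not $\cL^{\xi^*_t}V$ itself, that is continuous, together with the continuity of the state process and the continuity of $V$ established in \citet{LM15}.
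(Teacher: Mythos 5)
Your proof of the inequality $-V_t+\sup_\eta\cL^\eta V\le 0$ matches the paper's Proposition~\ref{lip}: freeze the control at $\eta$ near time $0$, localize, apply It\^o's formula together with the Bellman principle, divide by $h$ and pass to the limit. This part is fine, and the localization concerns you raise are the same ones the paper handles by stopping before $(T-s,X^\xi_s,\cR^\xi_s)$ leaves a fixed compact neighborhood of $(T-t,x,r)$, on which the integrands are bounded.

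For the reverse inequality $-V_t+\sup_\eta\cL^\eta V\ge 0$, you take a genuinely different route. The paper (Proposition~\ref{rip}) argues by contradiction: assuming strict negativity at $(t_0,x_0,r_0)$, it perturbs $V$ to $\varphi=V+\tfrac{\delta}{2}|(x,r)-(x_0,r_0)|^2$, exploits that $\varphi$ still violates the inequality strictly on a neighborhood $\cN_\eta$ and dominates $V$ by a fixed $\e>0$ on $\partial\cN_\eta$, and then shows $V(T-t_0,x_0,r_0)\ge\e+\sup_\xi\E[V(T-\tau,X^\xi_\tau,\cR^\xi_\tau)]$, which contradicts Theorem~\ref{bp}. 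Your approach instead invokes the optimal strategy $\xi^*$ from Theorem~\ref{eos}, asserts that $V(T_0-t,X^{\xi^*}_t,\cR^{\xi^*}_t)$ is a martingale so Bellman holds with \emph{equality} along $\xi^*$, concludes from the supersolution inequality that $\big(-V_t+\cL^{\xi^*_t}V\big)$ vanishes $dt\otimes d\P$-a.e.\ on the optimal path, and then sends $t\downarrow 0$ using continuity of the Hamiltonian. Both routes are valid, but note that your martingale-along-the-optimal-path claim is not stated in the paper and is not a direct reading of Theorem~\ref{bp}; it requires the usual two-sided argument (Bellman gives $\le$, and the time-consistency of the restriction of $\xi^*$ to $[\tau,T_0]$ together with $V$ being the sup gives $\ge$), so you should supply it. Your route also bakes in the existence and uniqueness of the optimizer in an essential way, whereas the paper's penalization argument needs only the Bellman \emph{inequality}; the paper's choice has the further advantage of being exactly the template that is reused verbatim in the viscosity setting (Proposition~\ref{vsir}) with $V$ replaced by an arbitrary test function $\varphi$, which your martingale argument does not generalize to because a test function has no optimal control attached.

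One more remark on your third step: after obtaining $\big(-V_t+\cL^{\xi^*_t}V\big)(T_0-t,X^{\xi^*}_t,\cR^{\xi^*}_t)=0$ a.e., you pass to the limit along a sequence $t_k\downarrow 0$ in the full-measure set. This is correct, but you should be explicit that the supersolution inequality $-V_t+\sup_\eta\cL^\eta V\le 0$ established in the first half is what guarantees the integrand has a sign, so that zero expectation forces a.e.\ vanishing; this is the hinge of your argument and deserves to be flagged rather than folded into a subordinate clause.
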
 

\subsection{Verification theorem}
In the next step we give sufficient conditions under which a smooth function $w$ satisfying \eqref{hjb} with initial condition \eqref{hjbic} coincides with our value function $V$. This so-called \emph{verification argument} relies essentially on It™\^o's lemma (see, for example, \citet{T12} or \citet{P09} for further details). Due to the existence and uniqueness of the optimal control for the value function $V$, we will only need the existence of a strong solution to an associated SDE in order to ensure that $w=V$.
 As suitable growth condition, we will assume as before that $w$ lies between two CARA value functions.
\begin{Theo}\label{vt}
Let $T>0$ and $w\in \C^{1,1,2}(]0,T[\times\R^d\times\R)\cap\C(]0,T]\times\R^d\times\R)$ be  such that the following inequalities hold
\begin{equation}
V_2(t,x,r)\leq w(t,x,r)\leq V_1(t,x,r)\label{cvfi}
\end{equation}
where $V_i,\;i=1,2,$ is as in \eqref{vfs}. We then have the following two statements, depending on what additional assumptions we require.
\begin{enumerate}
  \item[(i)] Suppose that 
\begin{equation}                                                   
0\geq - w_{t}(T-t,x,r)+\sup_{\xi\in\R^d}\cL^{\xi}w(T-t,x,r)\label{sdi2} 
\end{equation}
$\text{ for all } (t,x,r)\in [0,T[\times\R^d\times\R$ and 
\begin{equation}
\lim_{t\downarrow 0}w(t, x, r)=\begin{cases}
                                                w(0,0,r)\geq u(r),&\text{if }\; X=0\\
                                                -\infty, &\text{otherwise}\label{ivf0}
                                                     \end{cases}
\end{equation}
on $]0,T]\times\R^d\times\R$. Then $w\geq V$ on $]0,T]\times\R^d\times\R$. 
\item[(ii)] Suppose that
\begin{equation}
0= -w_{t}(T-t,x,r)+\sup_{\xi\in\R^d}\cL^{\xi}w(T-t,x,r)\label{whe}
\end{equation}
$\text{ for all } (t,x,r)\in [0,T[\times\R^d\times\R$ and
\begin{equation}
\lim_{T\downarrow 0}w(t, x, r)=\begin{cases}
                                                u(r),&\text{if}\; X=0\\
                                                -\infty, &\text{otherwise}.\label{evf0}
                                                     \end{cases}
\end{equation}
Moreover, assume
\begin{equation}
w_r(T-t,x,r)>0 \text{ for all } t,x,r \; \text{on}\; [0,T[\times\R^d\times\R.\label{wp}
\end{equation}
\begin{enumerate}
\item Then, the continuous function $\widehat{\xi}:]0,T]\times\R^d\times\R\longmapsto \R^d$ defined by
\begin{equation}
\widehat{\xi}(t,x,r):=\nabla f^* \Big(\frac{\nabla_x w(t,x,r)}{w_r(t,x,r)}\Big)\label{cno}
\end{equation}
satisfies
\begin{align}
\lefteqn{-w_t(T-t,x,r)+\sup_{\xi\in\R^d} \cL^{\xi}w(T-t,x,r)}\notag&\\
 &=-w_t(T-t,x,r) +\cL^{\widehat{\xi}(T-t,x,r)}w(T-t,x,r)\label{hjb0}\\
&=0\notag
\end{align}
for every $(t,x,r)$ on $]0,T]\times\R^d\times\R$.
\item If we  furthermore assume that there exists a  strong  solution $(\widehat{X},\widehat{\cR})$ to the SDE
\begin{equation}
\begin{cases}
                d\cR_{t}=(X_t)^{\top}\sigma dB_{t} +b\cdot X_{t}\,dt-f(-\widehat{\xi}(t, X_t, \cR_t))\,dt,\\
                dX_t=-\widehat{\xi}(t, X_t, \cR_t)\, dt,\\
                  \cR_{\arrowvert t=0}=R_{0}\;\text{and}\; X_{\arrowvert t=0}=X_0,
\end{cases}
\label {sde*}\end{equation}
such that $\widehat{\xi}(\cdot,\widehat{X},\widehat{\cR})\in\dot{\X}^1_{2A_2}(T,X_0)$, then we have  $w=V$ on $]0,T]\times\R^d\times\R$. The solution of the preceding SDE is  unique and given by  $(X^{\xi^*}_t,\cR_t^{\xi^*})$, where $\xi^*$  denotes the optimal liquidation strategy for the value function $V(T,X_0,R_0)$.  Moreover, the optimal control is given in feedback form by
 $$\xi^*_t=\widehat{\xi}(T-t, \widehat{X}_t,\widehat{\cR}_t),\quad (\P\otimes\lambda)\text{-a.s.~}$$

\end{enumerate}
\end{enumerate}
\end{Theo}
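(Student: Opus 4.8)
Both statements in part (ii), like part (i), flow from a single It\^o-based verification scheme; I would set it up once and re-use it. For an arbitrary strategy $\xi$ I would apply It\^o's formula to the process $t\mapsto w(T-t,X^\xi_t,\cR^\xi_t)$, localised by stopping times $\tau_n\uparrow T$ (localisation is needed since the stochastic integrand need not be square-integrable up to the horizon). Since $w\in\C^{1,1,2}$, the bounded-variation part of $d\,w(T-t,X^\xi_t,\cR^\xi_t)$ equals $\big(-w_t+\cL^{\xi_t}w\big)(T-t,X^\xi_t,\cR^\xi_t)\d$ and the martingale part is $w_r(T-t,X^\xi_t,\cR^\xi_t)\,(X^\xi_t)^\top\sigma\,dB_t$. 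Under \eqref{sdi2} (hence a fortiori under \eqref{whe}) the drift is $\le 0$, so the localised process is a supermartingale; along the feedback control \eqref{cno} the drift vanishes by \eqref{hjb0}, making it a local martingale.

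\emph{Part (i).} Fix $\xi\in\dot\X^1_{2A_2}(T,X_0)$. Taking expectations in the supermartingale inequality gives $w(T,X_0,R_0)\ge\E\big[w(T-\tau_n,X^\xi_{\tau_n},\cR^\xi_{\tau_n})\big]$, and I would then let $\tau_n\uparrow T$. This is the crux. By \eqref{ubd1}--\eqref{vfs} and $w\le V_1$, the positive part of $w(T-t,X^\xi_t,\cR^\xi_t)$ is bounded by the constant $1/A_1$; by $w\ge V_2$ and Lemma \ref{lice} (applied with $A=A_2$), its negative part is dominated by $\E\big[\exp(-A_2\cR^\xi_T)\mid\F_t\big]$, a uniformly integrable martingale because $\E[\exp(-A_2\cR^\xi_T)]\le\big(\E[\exp(-2A_2\cR^\xi_T)]\big)^{1/2}<\infty$ thanks to the defining constraint \eqref{4a2}. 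Hence $\{w(T-t,X^\xi_t,\cR^\xi_t):0\le t<T\}$ is uniformly integrable. As $X^\xi$ and $\cR^\xi$ are continuous and $X^\xi_T=0$, the initial condition \eqref{ivf0} yields $w(T-t,X^\xi_t,\cR^\xi_t)\to w(0,0,\cR^\xi_T)\ge u(\cR^\xi_T)$ $\P$-a.s., and uniform integrability promotes this to convergence in $\J$; therefore $w(T,X_0,R_0)\ge\E[u(\cR^\xi_T)]$, and the supremum over $\xi\in\dot\X^1_{2A_2}(T,X_0)$ gives $w\ge V$ on $]0,T]\times\R^d\times\R$.

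\emph{Part (ii).} First, strict convexity and superlinear growth of $f$ make $f^*$ finite and $\C^1$ on $\R^d$ with the supremum in $f^*(p)$ attained at $\nabla f^*(p)$; since $w_r>0$ by \eqref{wp}, the $\xi$-dependent portion of $\cL^{\xi}w$ equals $w_r\,f^*(\nabla_x w/w_r)$ up to terms free of $\xi$ (cf.\ Remark \ref{rhf}(ii)), so the supremum over $\xi\in\R^d$ is attained at $\widehat\xi$ of \eqref{cno}, which inherits continuity from $\nabla f^*$, $\nabla_x w$, $w_r$ and $w_r>0$; this is \eqref{hjb0}. Next, given a strong solution $(\widehat X,\widehat\cR)$ of \eqref{sde*} with $\widehat\xi(\cdot,\widehat X,\widehat\cR)\in\dot\X^1_{2A_2}(T,X_0)$, I would rerun the It\^o argument: the drift is now identically $0$ by \eqref{hjb0}, the localised process is a martingale, the uniform-integrability and boundary estimates of part (i) carry over verbatim with equalities, and \eqref{evf0} gives $w(T,X_0,R_0)=\E[u(\widehat\cR_T)]=\E\big[u(\cR^{\widehat\xi(\cdot,\widehat X,\widehat\cR)}_T)\big]\le V(T,X_0,R_0)$. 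Since the hypotheses of part (i) are contained in those of part (ii), the reverse inequality also holds, so $w=V$. The admissible strategy $\widehat\xi(\cdot,\widehat X,\widehat\cR)$ then attains the supremum in \eqref{omp1}, so the uniqueness in Theorem \ref{eos} forces $\widehat\xi(\cdot,\widehat X,\widehat\cR)=\xi^*$ $(\P\otimes\lambda)$-a.s.\ and hence $(\widehat X,\widehat\cR)=(X^{\xi^*},\cR^{\xi^*})$; the same reasoning applied to any strong solution of \eqref{sde*} shows it yields an optimal control and must therefore equal $\xi^*$, giving uniqueness for \eqref{sde*}, and the feedback representation $\xi^*_t=\widehat\xi(T-t,\widehat X_t,\widehat\cR_t)$ follows at once.

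The step I expect to be hardest is the passage $t\uparrow T$ in both parts: the singularity in \eqref{ivf0}--\eqref{evf0} rules out simply invoking continuity of $w$ at $t=0$, and the exponential growth of the payoff defeats the crude moment bounds of the textbook verification theorem. The remedy is the one above --- trapping $w$ between the two CARA value functions $V_1,V_2$, controlling its negative part by the uniformly integrable martingale $\E[\exp(-A_2\cR^\xi_T)\mid\F_t]$ supplied by Lemma \ref{lice}, and exploiting the moment restriction built into $\dot\X^1_{2A_2}(T,X_0)$ --- combined with the liquidation constraint $X^\xi_T=0$, which is precisely what guarantees $w(T-t,X^\xi_t,\cR^\xi_t)$ converges to $w(0,0,\cR^\xi_T)$ and not to the singular branch of the initial condition.
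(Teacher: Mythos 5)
Your overall architecture (Itô applied to $w(T-t,X^\xi_t,\cR^\xi_t)$, the supermartingale/martingale dichotomy, the CARA sandwich \eqref{cvfi} plus Lemma \ref{lice} and the $\dot\X^1_{2A_2}$ constraint for uniform integrability, and the uniqueness argument via Theorem \ref{eos}) matches the paper, and the Fenchel--Legendre argument for \eqref{hjb0} is the same as the paper's appeal to Theorem~26.5 of Rockafellar. The gap is in the passage $t\uparrow T$. You write that ``As $X^\xi$ and $\cR^\xi$ are continuous and $X^\xi_T=0$, the initial condition \eqref{ivf0} yields $w(T-t,X^\xi_t,\cR^\xi_t)\to w(0,0,\cR^\xi_T)$ $\P$-a.s.'' This does not follow: \eqref{ivf0} is a \emph{fixed-}$x$ limit in $t$, not a joint limit of $(t,x)\to(0,0)$, and $w$ is only continuous on the open-in-time set $]0,T]\times\R^d\times\R$. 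Because $w$ blows up to $-\infty$ on $\{0\}\times(\R^d\setminus\{0\})\times\R$, a path $(T-t,X^\xi_t)\to(0,0)$ can see $w(T-t,X^\xi_t,\cdot)$ behave badly if $X^\xi_t\to 0$ too slowly relative to $T-t$ (for linear liquidation $X^\xi_t=(T-t)X_0/T$ the ratio $X^\xi_t/(T-t)$ does not even vanish), so neither continuity nor \eqref{ivf0} forces the pointwise limit you claim, and uniform integrability cannot rescue a limit that may not exist.

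The paper closes this gap with a second layer of localisation, completely distinct from the $\tau_k$ that tame the stochastic integral. It introduces
\[
\sigma_k:=\inf\Big\{t\geq 0\ \big|\ (T-t)f\Big(\tfrac{X_t^{\xi}}{T-t}\Big)\geq k\Big\}\wedge T,
\]
which tends to $T$ a.s., and then \emph{splits} $\E\big[w(T-\sigma_k,X^\xi_{\sigma_k},\cR^\xi_{\sigma_k})\big]$ over $\{\sigma_k=T\}$ and $\{\sigma_k<T\}$. On $\{\sigma_k=T\}$, one has $T-\sigma_k=0$ and $X^\xi_{\sigma_k}=0$ exactly, so $w$ is evaluated at $(0,0,\cR^\xi_T)$ with no limiting procedure and \eqref{ivf0} gives the $\geq u(\cR^\xi_T)$ bound directly. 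On $\{\sigma_k<T\}$, the sandwich \eqref{cvfi} together with Lemma \ref{lice} and dominated convergence shows the expectation is squeezed between two terms of the form $\E\big[\exp(-A_i\cR^\xi_T)\b1_{\{\sigma_k<T\}}\big]\to 0$. The analogous split is needed again in part (ii)(b). You should replace the asserted a.s.\ convergence by this exact-evaluation-plus-vanishing-remainder argument; everything else in your write-up is sound.
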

\begin{rem}\label{ccu}
\begin{enumerate}
\item[(i)] In the special case where the utility function $u$ is a convex combination of exponential utility functions, i.e, $u(x)=\lambda u_1(x)+ (1-\lambda) u_2(x)$ with $\lambda \in\;]0,1[$ and $u_i,\; i=1,2,$  exponential utility functions, it can be easily  proved (using Lemma \ref{ss}) that there  exists $w$ satisfying \eqref{sdi2} as well as the  boundary condition
\begin{equation*}
\lim_{t\downarrow 0}w(t, x, r)=\begin{cases}
                                                w(0,0,r)= \lambda u_1(r)+ (1-\lambda)u_2(r),&\text{if}\; X=0\\
                                                -\infty, &\text{otherwise}
                                                     \end{cases}
\end{equation*}
on $]0,T]\times\R^d\times\R$.
However,  the first inequality in Lemma \ref{ss} is strict in general, making \eqref{sdi2} strict in general, too. 
\item [(ii)]Proving the existence (and uniqueness) of a strong solution of $\eqref{sde*}$ can be very challenging, since 
\[
\nabla f^* \Big(\frac{\nabla_x w(t,x,r)}{w_r(t,x,r)}\Big)
\]
is at most supposed to be continuous and does not satisfy any global \emph{Lipschitz-continuity}, due to the quotient term and the fact that $\nabla f^*$ can be superlinear.
\item[(iii)] With formula \eqref{cno} we have a way to numerically  compute the optimal liquidation strategy. However, this would require to first compute the gradient of the value function, which is not an easy task, in general. Moreover, as mentioned above, the coefficients in the SDE do not satisfy any (global) Lipschitz condition, and thus (up to our knowledge) no known converging method can be applied to solve the SDE \eqref{sde*}.
\end{enumerate}
\xqed{\diamondsuit}
\end{rem}
\subsection{Viscosity solutions of the HJB-equation.}
So far, we have 
 established connections between our maximization problem \eqref{omp} and classical solutions of the HJB equation \eqref{hjb}.
Unfortunately, this method works out only if our value function $V$ is  smooth enough, 
which, however, may not be satisfied even in the deterministic case (see, e.g., \citet{YZ99}, Chapter 4, Example 2.3).
To overcome this difficulty, we will use in the following the notion of \emph{viscosity} solutions. Since our value function is continuous, we will restrict our framework  to the class of continuous viscosity solutions. Note that a more general definition (in the class of locally bounded functions) 
can be found, for instance, in \citet{FShm06}. With this definition, however, a strong comparison principle would imply that $V$ is again continuous.
\subsubsection{The value function as viscosity solution of the HJB equation.}
Let us start with introducing an abstract definition of viscosity solutions (see, e.g., \citet{T12} or \citet{FShm06}). Consider a nonlinear second-order degenerate partial differential equation
\\
\begin{equation}
F(T-t,x,r,v(T-t,x,r),v_t(t,x,r),\nabla_x v(t,x,r),v_r(t,x,r),v_{rr}(t,x,r))=0,\label{ape}
\end{equation}
\\
where $F$ is a continuous function on $]0,T]\times\R^{d}\times\R\times\R\times\R\times\R^d\times\R\times\R$ taking values in $\R$, with a fixed $T>0$ and $(t,x,r)\in\;]0,T]\times\R^{d}\times\R$. We have to impose the following crucial assumption on $F$.
\begin{Ass}[Ellipticity]
For all  $(t,x,r,q,p,s,m)\in\;]0,T]\times\R^{d}\times\R\times\R\times\R\times\R^d\times\R$ and $a,b\in\R$, we assume
\begin{equation}
F(T-t,x,r,q,p,s,m,a)\leq F(T-t,x,r,q,p,s,m,b)\text{ if } a\geq b.
\end{equation}
\end{Ass}
\begin{Def}\label{adv}
Let $v: \;]0,T]\times\R^{d}\times\R\longrightarrow \R$  be a continuous function.
 \begin{enumerate}
\item We say that $v$ is a \emph{viscosity subsolution} of \eqref{ape} if for every $\varphi\in \C^{1,1,2}(]0,T]\times\R^{d}\times\R )$ and every $(t^*,x^*,r^*)\in [0,T[\times\R^{d}\times\R$, when
$v-\varphi$ attains a local maximum at $(T-t^*,x^*,r^*)\in\;]0,T]\times\R^{d}\times\R$, we have
\begin{equation} 
                   F(.,v,\varphi_t,\nabla_x \varphi,\varphi_r,\varphi_{rr})(T-t^*,x^*,r^*)\leq0.
\label{asub}
\end{equation}
\item We say that $v$ is a \emph{viscosity supersolution} of \eqref{ape} if for every $\varphi\in \C^{1,1,2}(]0,T]\times\R^{d}\times\R)$ and every $(t^*,x^*,r^*)\in [0,T[\times\R^{d}\times\R$, when
$v-\varphi$ attains a local minimum at $(T-t^*,x^*,r^*)\in\;]0,T]\times\R^{d}\times\R$, we have
\begin{equation} 
                   F(.,v,\varphi_t,\nabla_x \varphi,\varphi_r,\varphi_{rr})(T-t^*,x^*,r^*)\geq0.
\label{asup}
\end{equation}
\item We say that $v$ is  a \emph{viscosity solution} of the equation \eqref{ape} if $v$ is a viscosity subsolution and supersolution.  
\end{enumerate}
\end{Def}  
\begin{rem}\label{rvs}  It may be interesting to note that the above definition is unchanged if the maximizer (or minimizer)  $(T-t^*,x^*,r^*)$ is global and/or strict (see \citet{B13} for more details).
 Moreover, we can suppose w.l.o.g. that $v(T-t^*,x^*,r^*)=\varphi(T-t^*,x^*,r^*)$, because otherwise we can use the function $\psi$ defined as $\psi(T-t,x,r):=\varphi(T-t,x,r)+v(T-t^*,x^*,r^*)-\varphi(T-t^*,x^*,r^*)$.
The function $\varphi$ is called a test function for $v$.
\xqed{\diamondsuit}
\end{rem}
The following result justifies the introduction of this notion.
\begin{Theo}\label{vfvsh}
The value function $V$ is a viscosity solution of the Hamilton-Jacobi-Bellman equation 
 \eqref{hjb} with initial condition \eqref{hjbic}.
\end{Theo}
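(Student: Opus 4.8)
The plan is to verify the two defining inequalities of Definition \ref{adv} separately, using the Bellman Principle (Theorem \ref{bp}) as the sole dynamical input and the continuity of $V$ (Theorem \ref{cv}) together with the strict positivity of $V_r$ (Remark \ref{rhf}(i), justified by Theorem \ref{v_r}) to make sense of the Hamiltonian. Throughout, $F$ denotes the operator induced by \eqref{hjb}, namely $F(T-t,x,r,q,p,s,m,a)=-p+\sup_{\xi\in\R^d}\bigl(\tfrac{x^\top\Sigma x}{2}a+b\cdot x\,s-\xi^\top p_x-f(-\xi)s\bigr)$ with the natural reading of the slots; since $f$ is nonnegative this is finite exactly when $s>0$, which is why the test-function comparison below only needs to be run at points where the relevant derivative in $r$ is positive — and this holds automatically because $V-\varphi$ having a local extremum forces $\varphi_r(T-t^*,x^*,r^*)=V_r(T-t^*,x^*,r^*)>0$.

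\emph{Subsolution property.} Fix $\varphi\in\C^{1,1,2}$ and a point $(T-t^*,x^*,r^*)$ with $t^*\in[0,T[$ at which $V-\varphi$ has a local maximum; by Remark \ref{rvs} we may assume the value coincide there and the maximum is, say, attained on a small parabolic neighbourhood. Pick an arbitrary constant control $\xi\equiv\eta\in\R^d$ on a short time interval $[0,h]$ (extended to an admissible finite-fuel strategy thereafter — one must check such a constant-then-liquidate strategy lies in $\dot\X^1_{2A_2}$, which follows from the moment bounds underlying \eqref{4a2} for $h$ small), and apply the inequality ``$\geq$'' direction of \eqref{ebp} with stopping time $\tau=h\wedge\theta$, where $\theta$ is the exit time of $(X^\eta,\cR^\eta)$ from the neighbourhood. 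This gives $\varphi(T-t^*,x^*,r^*)=V(T-t^*,x^*,r^*)\geq \E[V(T-t^*-\tau,X^\eta_\tau,\cR^\eta_\tau)]\geq \E[\varphi(T-t^*-\tau,X^\eta_\tau,\cR^\eta_\tau)]$. Now apply It\^o's formula to $\varphi$ along $(X^\eta,\cR^\eta)$, divide by $h$, and let $h\downarrow0$; the martingale term vanishes in expectation, $\tau/h\to1$, and one obtains $0\geq -\varphi_t(T-t^*,x^*,r^*)+\cL^\eta\varphi(T-t^*,x^*,r^*)$. Taking the supremum over $\eta\in\R^d$ yields \eqref{asub}. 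The only care needed is integrability/uniform integrability when passing the limit under the expectation: here the CARA sandwich \eqref{vfs}--\eqref{ubd1} and the exponential moment constraint in \eqref{4a2} provide the domination, and the local boundedness of $\varphi$ and its derivatives on the neighbourhood handles the rest.

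\emph{Supersolution property.} Fix $\varphi\in\C^{1,1,2}$ and a point where $V-\varphi$ has a local minimum with equal values. Suppose for contradiction that \eqref{asup} fails, i.e.\ $-\varphi_t+\sup_{\xi}\cL^\xi\varphi<0$ strictly at $(T-t^*,x^*,r^*)$; by continuity of the map $\eta\mapsto\cL^\eta\varphi$ (noted after \eqref{lsop}) and of all derivatives of $\varphi$, there is a neighbourhood and $\delta>0$ on which $-\varphi_t(T-s,y,\rho)+\cL^\eta\varphi(T-s,y,\rho)\leq-\delta$ for \emph{all} $\eta\in\R^d$ — here one uses that the sup is attained/controlled locally, which is where the coercivity $f(x)/|x|\to\infty$ enters to keep the maximising $\eta$ in a bounded set. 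Take \emph{any} admissible $\xi\in\dot\X^1_{2A_2}(T,X_0)$, let $\tau$ be $h$ minimised with the exit time of $(X^\xi,\cR^\xi)$ from the neighbourhood, apply It\^o to $\varphi$ along $(X^\xi,\cR^\xi)$ and use the pointwise bound to get $\E[\varphi(T-t^*-\tau,X^\xi_\tau,\cR^\xi_\tau)]\leq\varphi(T-t^*,x^*,r^*)-\delta\,\E[\tau]\leq V(T-t^*,x^*,r^*)-\delta\E[\tau]$, hence $\E[V(T-t^*-\tau,X^\xi_\tau,\cR^\xi_\tau)]\leq V(T-t^*,x^*,r^*)-\delta\E[\tau]$; taking the supremum over $\xi$ contradicts \eqref{ebp} as soon as one knows $\inf_\xi\E[\tau]>0$, which follows because $X^\xi$ moves with finite speed controlled in $L^1$ so the process cannot leave a fixed neighbourhood instantaneously in expectation. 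Finally, the initial condition \eqref{hjbic} for the viscosity solution is exactly the content of Proposition \ref{icv}, so nothing further is needed there.

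\emph{Main obstacle.} The delicate point is the supersolution direction: unlike the classical bounded-coefficient setting, admissible controls here are only $L^1$ in time with a finite-fuel terminal constraint and the running cost $f$ has superlinear growth, so one must (a) justify that the localising stopping time $\tau$ has a strictly positive expected value uniformly over all admissible $\xi$ — not just over nice controls — and (b) control the It\^o expansion up to $\tau$ despite $f(-\dot\xi_t)$ being merely integrable, which requires truncating and using the $2A_2$-exponential-moment bound in \eqref{4a2} to pass to the limit. The subsolution direction is comparatively routine because there one gets to \emph{choose} the (constant, hence bounded) control. I expect roughly that establishing the uniform lower bound on $\E[\tau]$, leveraging the definition of $\dot\X^1(T,X_0)$ which puts $\int_0^T|\xi_t|\,dt$ in $L^1(\P)$, will be the crux.
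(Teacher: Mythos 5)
Your overall architecture is the paper's: verify the two inequalities of Definition \ref{adv} separately, using the Bellman principle and It\^o's formula applied to the smooth test function $\varphi$, with the fixed-constant-control argument on one side and a localized contradiction argument on the other, and dispatching the initial condition to Proposition \ref{icv}. However, the two paragraphs as written are internally inconsistent, and the contradiction paragraph leans on an estimate that the paper's proof deliberately avoids.

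\textbf{The sub/super arguments are swapped relative to the extremum type you declare.} In your ``subsolution'' paragraph you take $V-\varphi$ at a local \emph{maximum} but then write $\E[V(T-t^*-\tau,\cdot)]\geq \E[\varphi(T-t^*-\tau,\cdot)]$, which requires $V\geq\varphi$ near the touching point — that is the local \emph{minimum} case. The conclusion you reach, $-\varphi_t+\sup_\xi\cL^\xi\varphi\leq0$, is precisely the \emph{supersolution} inequality (the content of \eqref{asup}, cf.\ \eqref{vsup} in Proposition \ref{vil}), not \eqref{asub}. Symmetrically, in your ``supersolution'' paragraph you take a local \emph{minimum} yet use $\E[V]\leq\E[\varphi]$ (requires $V\leq\varphi$ nearby, i.e.\ local maximum), and you assume for contradiction that $-\varphi_t+\sup_\xi\cL^\xi\varphi<0$ — but that is the negation of the \emph{subsolution} inequality $\geq0$, not of the supersolution one. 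The fix is to keep the min/max labels you stated (they match Definition \ref{adv}) and swap the two arguments: the fixed-control/direct-Bellman argument proves the supersolution property at a local minimum, exactly as in Proposition \ref{vil}; the contradiction/localization argument proves the subsolution property at a local maximum, as in Proposition \ref{vsir}.

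\textbf{The contradiction step should not rest on a uniform lower bound for $\E[\tau]$.} You derive $\E[V(T-t^*-\tau,\cdot)]\leq V(T-t^*,x^*,r^*)-\delta\,\E[\tau]$ and then need $\inf_\xi\E[\tau]>0$ over \emph{all} admissible $\xi\in\dot\X^1_{2A_2}$ to get a contradiction after taking the supremum. You correctly flag this as the crux, and indeed it is problematic: admissibility only places $\int_0^T|\xi_t|\,dt$ in $\L^1(\P)$, so individual controls can make the exit time arbitrarily small with probability bounded away from zero, and a uniform-in-$\xi$ lower bound on $\E[\tau]$ is not available. The paper avoids this entirely by strengthening the touching to a \emph{strict} local extremum (as permitted by Remark \ref{rvs}), choosing the localization set $\cN_\eta$ so that $h<0$ on it, and taking $\e$ to be the (strictly positive) gap $\min_{\partial\cN_\eta}(\varphi-V)$ (resp.\ $\max_{\partial\cN_\eta}(V-\varphi)$): since the exit point lies on $\partial\cN_\eta$ almost surely, one gets $\E[V(T-\tau,\cdot)]\leq V(T-t_0,x_0,r_0)-\e$ with a constant $\e>0$ \emph{independent of $\xi$ and of $\tau$}, and the contradiction with the Bellman identity follows immediately. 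Adopting this boundary-gap device (Propositions \ref{rip} and \ref{vsir}) removes the obstacle you identified; the only remaining care is the vanishing of the stochastic integral, which holds because the integrand is bounded on the stochastic interval $[t_0,\tau]$.

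A smaller remark: for the direct-Bellman side, the dominated-convergence argument does not need the CARA sandwich at all — once the stopping time keeps $(T-s,X_s^\xi,\cR_s^\xi)$ in a fixed compact set bounded away from $t=T$, the integrand $-\varphi_t+\cL^\xi\varphi$ is bounded by continuity of the derivatives of $\varphi$ and by the constant control, exactly as in Proposition \ref{lip}. The exponential moment condition in \eqref{4a2} is used elsewhere (e.g.\ the verification theorem), not here.
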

\subsubsection{Comparison principles and uniqueness results}
In order to prove that our value function is the \emph{unique} viscosity solution of \eqref{hjb} with initial condition \eqref{hjbic}, it will be convenient  to add a linear term  
in \eqref{hjb}. We begin first by defining classical solutions to the transformed equation \begin{equation}
\Big(-V_t +\beta V  +\sup_{\xi\in \R^d}\cL^{\xi}V\Big)(T-t,x,r)= 0, \label{hjb2} 
\end{equation}
where $\beta<0$ and $(T-t,x,r)\in\;]0,T]\times\R^d\times\R$.
\begin{Def}
A function $U$ (resp., $V$) $\in\C^{1,1,2}(]0,T]\times\R^d\times\R)$ is called a subsolution (resp., supersolution) of \eqref{hjb2} if $U$ (resp., $V$) fulfills the following inequality:
\begin{align*}
0&\leq\Big(-U_t +\beta U +\sup_{\xi\in \R^d}\cL^{\xi}U\Big)(T-t,x,r)\\
\Big(\text{resp., } 0&\geq\Big(-V_t +\beta V  +\sup_{\xi\in \R^d}\cL^{\xi}V\Big)(T-t,x,r)\Big)
\end{align*}
for all $(t,x,r)\in[0,T[\times\,\R^d\times\R$.
\end{Def}
The next lemma shows that one may consider w.l.o.g. the HJB equation in this useful form.
\begin{Lem}\label{beta}
Assume that $U$ (resp., $V$) $\in\C^{1,1,2}(]0,T]\times\R^d\times\R)$ is a subsolution (resp., supersolution) of \eqref{hjb}. Then, $\overline{U}(T-t,x,r):=\exp(\beta (T-t))U(T-t,x,r)$ (resp., $\overline{V}(T-t,x,r):=\exp(\beta(T-t)V(T-t,x,r)$) is a subsolution (resp., supersolution) of \eqref{hjb2}.
\end{Lem}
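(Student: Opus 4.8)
The plan is to reduce the statement to a single pointwise algebraic identity: the two equations \eqref{hjb} and \eqref{hjb2} differ only through multiplication by the everywhere-positive factor $e^{\beta(T-t)}$, and $\cL^\xi$ is linear and \emph{homogeneous} (it carries no zeroth-order term in $v$), so everything factors cleanly. First I would record the two structural facts that make this work. Fixing $\xi\in\R^d$, the operator $\cL^\xi$ in \eqref{lsop} acts only on the $(x,r)$-derivatives $(v_{rr},v_r,\nabla_x v)$, its coefficients depend only on $(x,r)$, and there is no term proportional to $v$ itself; hence for any scalar $c$ independent of $(x,r)$ (in particular $c=e^{\beta(T-t)}$, which depends only on the time variable) one has $\cL^\xi(c\,v)=c\,\cL^\xi v$. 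Applied to $\overline U$ this gives $\cL^\xi\overline U(T-t,x,r)=e^{\beta(T-t)}\cL^\xi U(T-t,x,r)$ for each $\xi$, and since $e^{\beta(T-t)}>0$ the supremum passes through the factor: $\sup_{\xi\in\R^d}\cL^\xi\overline U(T-t,x,r)=e^{\beta(T-t)}\sup_{\xi\in\R^d}\cL^\xi U(T-t,x,r)$.

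Next I would treat the time-derivative terms by the product rule. Differentiating with respect to the first argument, $\overline U_t(T-t,x,r)=\beta e^{\beta(T-t)}U(T-t,x,r)+e^{\beta(T-t)}U_t(T-t,x,r)=\beta\overline U(T-t,x,r)+e^{\beta(T-t)}U_t(T-t,x,r)$, so the linear term $\beta V$ inserted in \eqref{hjb2} is exactly what absorbs the extra summand: $-\overline U_t(T-t,x,r)+\beta\overline U(T-t,x,r)=-e^{\beta(T-t)}U_t(T-t,x,r)$. Combining this with the previous paragraph yields
\begin{equation*}
\Big(-\overline U_t+\beta\overline U+\sup_{\xi\in\R^d}\cL^\xi\overline U\Big)(T-t,x,r)
=e^{\beta(T-t)}\Big(-U_t+\sup_{\xi\in\R^d}\cL^\xi U\Big)(T-t,x,r),
\end{equation*}
and the same identity with $V$ replacing $U$. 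Because $e^{\beta(T-t)}>0$, the left-hand side has the same sign as the bracket on the right for every $(t,x,r)\in[0,T[\times\R^d\times\R$. Thus if $U$ is a (classical) subsolution of \eqref{hjb}, i.e.\ $-U_t+\sup_\xi\cL^\xi U\geq0$ there, then $-\overline U_t+\beta\overline U+\sup_\xi\cL^\xi\overline U\geq0$, so $\overline U$ is a subsolution of \eqref{hjb2}; the supersolution case for $V$ is identical with the inequality reversed. Finally, since $(t,x,r)\mapsto e^{\beta(T-t)}$ is $\C^\infty$ and constant in $(x,r)$, the product $\overline U$ (resp.\ $\overline V$) again belongs to $\C^{1,1,2}(]0,T]\times\R^d\times\R)$, so the transformed function is in the required class.

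I do not anticipate a real obstacle: this is a change-of-unknown computation. The only points worth a word of care are that $\cL^\xi$ genuinely has no term in $v$ (otherwise the factorization $\cL^\xi(c\,v)=c\,\cL^\xi v$ would break, and one would not get \eqref{hjb2} with the same operator $\cL^\xi$); that the strict positivity $e^{\beta(T-t)}>0$ is what lets the supremum over $\xi$ commute with the scalar prefactor; and the bookkeeping of which variable the subscript ``$t$'' differentiates — the first argument, evaluated at $T-t$ — which is precisely the source of the cross term $\beta\overline U$ and explains why the sign assumption $\beta<0$ plays no role in this particular lemma (it is only needed later for the comparison argument).
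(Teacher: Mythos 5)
Your proposal is correct and is exactly the ``straightforward calculation'' the paper leaves implicit: you exploit that $\cL^\xi$ is linear and free of zeroth-order terms, that $e^{\beta(T-t)}>0$ lets the factor pass through the supremum, and that the product rule produces $\beta\overline U$ to cancel against the added linear term, yielding $\big(-\overline U_t+\beta\overline U+\sup_\xi\cL^\xi\overline U\big)(T-t,x,r)=e^{\beta(T-t)}\big(-U_t+\sup_\xi\cL^\xi U\big)(T-t,x,r)$. This matches the paper's (unstated) argument precisely, and your closing remarks on why $\beta<0$ is irrelevant here and on preservation of the $\C^{1,1,2}$ class are accurate.
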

\begin{proof}
Through straightforward calculations.
\end{proof}
\begin{rem}
In the classical case, the common argument, which consists in penalizing the supersolution and then working toward a contradiction (see, e.g., \citet{P09} for the polynomial case) does not seem to work out here. 
If we followed the idea of the previously mentioned work, we would be looking for a function $\varphi$ such that for every $\e>0$, $U$ subsolution, and $V$ supersolution it should hold
\begin{equation}
\lim_{|x|,|r|\rightarrow \infty}\sup_{[0,T[} (U-V_\e)(T-t,x,r)\leq0\quad\text{ for all }\e>0,\label{plc}
\end{equation}
where $V_\e=\e\varphi+V$ is a supersolution. However, $(V_\e)_r$ has to be strictly positive in order for $V_\e$ to be a supersolution, 
and this seems to be difficult (even impossible) to obtain when \eqref{plc} 
is satisfied (recall also the growth condition imposed on $U$ and $V$ and the singularity in the initial condition). 
\xqed\diamondsuit
\end{rem}
\subsubsection{Strong comparison principle for viscosity solutions}
Since our value function is continuous, we can restrict the associated comparison principle to  continuous functions  (i.e.,~we do not deal here with definitions of lower or upper semi-continuous functions). Note that there are several comparison principles for unbounded viscosity solutions; for instance, the comparison principle for nonlinear degenerate parabolic equations of \citet{KL11}. Nevertheless, this methodology  cannot be applied here, since the requirements (13), (14) and (15) in \citet{KL11} are not satisfied in our case. 

In order to prove the strong comparison principle in our framework, we first need to introduce an equivalent definition of viscosity solution, with the help of \emph{ subjets }and \emph{ superjets} (see, e.g., \citet{P09}). 
\begin{Def}
Let $U$ be a continuous function on $]0,T]\times\R^d\times\R$. The \emph{second-order superjet} of $U$ at a point $(t^*,x^*,r^*)\in [0,T[\times\R^d\times\R$ is the set $\cJ^{2,+}U(T-t^*,x^*,r^*)$ of elements $(\bar{q},\bar{p},\bar{s},\bar{m})\in\R\times\R^d\times\R\times\R$ satisfying
\begin{IEEEeqnarray}{rCl}  
U(T-t,x,r)&\leq U(T-t^*,x^*,r^*)+\bar{q}(t-t^*)+\bar{p}\cdot(x-x^*)+\bar{s}(r-r^*)\notag\\
&+\frac1{2}\bar{m}(r-r^*)^2+ o(|t-t^*|+|x-x^*|+|r-r^*|^2).
\end{IEEEeqnarray}
Analogously, we can define the \emph{second-order subjet} of a continuous function $V,$ defined on $]0,T]\times\R^d\times\R$, at a point $(t^*,x^*,r^*)\in [0,T[\times\R^d\times\R:$ this is the set of elements $(\bar{q},\bar{p},\bar{s},\bar{m})\in\R\times\R^d\times\R\times\R$ satisfying
\begin{IEEEeqnarray}{rCl}  
V(T-t,x,r)&\geq V(T-t^*,x^*,r^*)+\bar{q}(t-t^*)+\bar{p}\cdot(x-x^*)+\bar{s}(r-r^*)\notag\\
&+\frac1{2}\bar{m}(r-r^*)^2+ o(|t-t^*|+|x-x^*|+|r-r^*|^2).\label{dj-}
\end{IEEEeqnarray}
We denote this set by $\cJ^{2,-}V(T-t^*,x^*,r^*)$.
\end{Def}
\begin{rem}\label{slm}
Let $(t^*,x^*,r^*)\in [0,T[\times\R^d\times\R$ be a local minimizer of $(V-\varphi)(T-t,x,r)$, where $\varphi\in \C^{1,1,2}(]0,T]\times\R^{d}\times\R ).$ Then, a second-order Taylor expansion of $\varphi$ yields
\begin{IEEEeqnarray}{rCl} 
\IEEEeqnarraymulticol{3}{l}{V(T-t,x,r)\geq V(T-t^*,x^*,r^*)-\varphi(T-t^*,x^*,r^*)+\varphi(T-t,x,r)}\notag\\
&=& V(T-t^*,x^*,r^*)-\varphi_t(T-t^*,x^*,r^*)(t-t^*)+\nabla_x\varphi (T-t^*,x^*,r^*)(x-x^*)\notag\\
&&+\varphi_r(T-t^*,x^*,r^*)(r-r^*)+\frac1{2}\varphi_{rr}(T-t^*,x^*,r^*)(r-r^*)^2\notag\\
&&+o(|t-t^*|+|x-x^*|+|r-r^*|^2), \label{jvi}
\end{IEEEeqnarray}
which implies that 
\begin{equation}
(-\varphi_t,\nabla_x\varphi_x,\varphi_r,\varphi_{rr})(T-t^*,x^*,r^*)\in\cJ^{2,-}V(T-t^*,x^*,r^*). \label{psj}
\end{equation}
Similarly,  for $U$ we consider $(t^*,x^*,r^*)\in [0,T[\times\R^d\times\R$ to be   a local maximizer of $(U-\varphi)(T-t,x,r)$. 
Then,
\begin{IEEEeqnarray}{rCl} 
\IEEEeqnarraymulticol{3}{l}{U(T-t,x,r)\leq U(T-t^*,x^*,r^*)+\varphi(T-t,x,r)-\varphi(T-t^*,x^*,r^*)}\notag\\
&=& U(T-t^*,x^*,r^*)-\varphi_t(T-t^*,x^*,r^*)(t-t^*)+\nabla_x\varphi (T-t^*,x^*,r^*)(x-x^*)\notag\\
&&+\varphi_r(T-t^*,x^*,r^*)(r-r^*)+\frac1{2}\varphi_{rr}(T-t^*,x^*,r^*)(r-r^*)^2\notag\\
&&+o(|t-t^*|+|x-x^*|+|r-r^*|^2),\label{jui}
\end{IEEEeqnarray}
implying
\begin{equation}
(-\varphi_t,\nabla_x\varphi_x,\varphi_r,\varphi_{rr})(T-t^*,x^*,r^*)\in\cJ^{2,+}U(T-t^*,  x^*, r^*).\label{psbj}
\end{equation}
Actually, the converse property also holds: for any $(q,p,s,m)\in\cJ^{2,+}U(T-t^*,x^*,r^*)$, there exists
$\varphi\in \C^{1,1,2}(]0,T]\times\R^{d}\times\R )$ such that  
\begin{equation*}
(-\varphi_t,\nabla_x\varphi_x,\varphi_r,\varphi_{rr})(T-t^*,x^*,r^*)=(q,p,s,m).
\end{equation*}
See Lemma 4.1 in \citet{FShm06}  for a construction of  such a $\varphi.$
\xqed{\diamondsuit}
\end{rem}
The next lemma provides an alternative characterization of a viscosity solution of the equation \eqref{hjb2}.
\begin{Lem}\label{advs}
Let $v$ be a continuous function on $]0,T]\times\R^d\times\R$.
\begin{enumerate}
\item[(i)] Then, v is a viscosity subsolution  of \eqref{hjb2} on $]0,T]\times\R^d\times\R$ if and only if for all $(t,x,r)\in[0,T[\times\R^d\times\R$ and all $(q,p,s,m)\in\cJ^{2,+}v(T-t,x,r)$ we have
\begin{equation}
                  0\leq q+\beta v(T-t,x,r)+\frac{{x}^{\top}\Sigma x}{2}m +b\cdot x\, s +\sup_{\xi\in\R^{d}}\left(\xi^{\top}p-s f(\xi)\right).\label{sjiv}
\end{equation}
\item[(ii)] Respectively, v is a viscosity supersolution  of \eqref{hjb2} on $]0,T]\times\R^d\times\R$ if and only if for all $(t,x,r)\in[0,T[\times\R^d\times\R$ and all $(q,p,s,m)\in\cJ^{2,-}v(T-t,x,r)$ we have
\begin{equation}
                  0\geq q+\beta v(T-t,x,r)+\frac{{x}^{\top}\Sigma x}{2}m +b\cdot x\, s +\sup_{\xi\in\R^{d}}\left(\xi^{\top}p-s f(\xi)\right).
\end{equation}
\end{enumerate}
\end{Lem}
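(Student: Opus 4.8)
The plan is to deduce both equivalences from the \emph{standard dictionary} between smooth test functions touching $v$ and elements of the second-order semijets, which has already been spelled out in Remark~\ref{slm}; the only genuinely HJB-specific ingredient is the elementary identity, valid at every point $(T-t,x,r)$,
\[ \sup_{\xi\in\R^d}\cL^{\xi}\varphi=\frac{x^\top\Sigma x}{2}\varphi_{rr}+b\cdot x\,\varphi_r+\sup_{\xi\in\R^d}\bigl(\xi^\top\nabla_x\varphi-\varphi_r f(\xi)\bigr), \]
which follows from \eqref{lsop} after the substitution $\eta=-\xi$ in the supremum and is precisely what turns the operator form of \eqref{hjb2} into the semijet form appearing in \eqref{sjiv}. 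I would prove (i) in full and obtain (ii) by the mirror-image argument (local minima for maxima, $\cJ^{2,-}$ for $\cJ^{2,+}$, reversed inequalities).

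For the ``only if'' part of (i), I would fix $(t,x,r)\in[0,T[\times\R^d\times\R$ and $(q,p,s,m)\in\cJ^{2,+}v(T-t,x,r)$, and then invoke the converse property recalled at the end of Remark~\ref{slm} (Lemma~4.1 in \citet{FShm06}) to produce $\varphi\in\C^{1,1,2}(]0,T]\times\R^d\times\R)$ with $(-\varphi_t,\nabla_x\varphi,\varphi_r,\varphi_{rr})(T-t,x,r)=(q,p,s,m)$; by Remark~\ref{rvs} one may additionally arrange that $v-\varphi$ has a strict global maximum there. Plugging this test function into Definition~\ref{adv} for the equation \eqref{hjb2} gives $0\leq\bigl(-\varphi_t+\beta v+\sup_{\xi}\cL^{\xi}\varphi\bigr)(T-t,x,r)$; substituting the semijet entries and applying the identity above turns this into exactly \eqref{sjiv}.

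For the ``if'' part, I would start instead from an arbitrary $\varphi\in\C^{1,1,2}(]0,T]\times\R^d\times\R)$ and $(t^*,x^*,r^*)\in[0,T[\times\R^d\times\R$ at which $v-\varphi$ has a local maximum; by \eqref{psbj} the quadruple $(-\varphi_t,\nabla_x\varphi,\varphi_r,\varphi_{rr})(T-t^*,x^*,r^*)$ belongs to $\cJ^{2,+}v(T-t^*,x^*,r^*)$, so the assumed inequality \eqref{sjiv} applies to it, and reading it backwards through the identity above yields $0\leq\bigl(-\varphi_t+\beta v+\sup_{\xi}\cL^{\xi}\varphi\bigr)(T-t^*,x^*,r^*)$, which is precisely the viscosity subsolution condition of Definition~\ref{adv} for \eqref{hjb2} at that point. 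Part (ii) follows by the same scheme with $\cJ^{2,-}$, local minima and reversed inequalities. There is no real obstacle here; the only thing needing care is the bookkeeping of the time reversal $t\mapsto T-t$, so that the first semijet entry is consistently matched with $-\varphi_t$ and not $\varphi_t$, together with the remark that the $\sup_\xi$-term may equal $+\infty$ when the $s$-entry is nonpositive — in which case the subsolution inequality is vacuously true while the supersolution inequality simply forces that entry to be positive, so the equivalence is unaffected either way.
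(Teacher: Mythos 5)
Your proposal is correct and follows essentially the same route as the paper: both directions of the equivalence are read off from the dictionary in Remark~\ref{slm} (a local maximum of $v-\varphi$ produces a superjet element via \eqref{psbj}, and conversely Lemma~4.1 of \citet{FShm06} builds a test function from any superjet element), combined with rewriting $\sup_\xi\cL^\xi\varphi$ in the Hamiltonian form via the substitution $\eta=-\xi$. The only addition you make beyond the paper's argument is the side remark on the sup being $+\infty$ when $s\le 0$; that observation is sound but not needed for the equivalence itself.
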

With this at hand, the following strong comparison principle can be established. The first part of its proof, which is given in the next section, will be  similar to what can be found in \citet{P09}, requiring a few adaptations because of growth and boundary conditions. Moreover,  since we use the local definition of  viscosity solution, and since the considered functions are continuous, we do not need to penalize the supersolution. In particular, we  do not need to use the Crandall-Ishii lemma in the last part of our proof: indeed, in our HJB equation, the second derivative term  is only one-dimensional and we thus only have to apply the Taylor formula to find  adequate elements of the sub- and superjet of  $U$ and $V,$ respectively, to work toward a contradiction.
\begin{Theo}\label{scp}
Let $U$ (resp., V) be a continuous viscosity subsolution (resp., continuous viscosity supersolution) of \eqref{hjb2}, defined on $]0,T]\times\R^d\times\R,$ satisfying the growth conditions
\begin{equation}
V_2(t,x,r)\leq v(t,x,r)\leq V_1(t,x,r)\quad\text{for all }(t,x,r)\in\,]0,T]\times\R^d\times\R\label{gc}
\end{equation}
(where $v$ can be chosen to be $U$ or $V$). Moreover, suppose that $U$ and $V$ satisfy the boundary condition
\begin{align}
\limsup_{t\rightarrow 0} \big(U(t,x,r)-V(t,x,r)\big)&\leq 0, \quad \text{ for fixed } x,r\in\R^d\times\R.\label{u-v00}
\end{align}
Then $U\leq V$ on $]0,T]\times\R^d\times\R$.
\end{Theo}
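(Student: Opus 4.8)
The plan is to argue by contradiction: suppose $\sup_{]0,T]\times\R^d\times\R}(U-V)=:\theta>0$ (possibly $+\infty$ a priori, but the growth condition \eqref{gc} will force it to be finite on suitable sublevel sets). The first step is to kill the unboundedness in $x$ and $r$ using the penalization $\Phi_{\e,\alpha}(T-t,x,r):=U(T-t,x,r)-V(T-t,x,r)-\e\psi(x,r)-\frac{\alpha}{t}$ for a suitable coercive, smooth $\psi$ (e.g.\ $\psi(x,r)=\sqrt{1+|x|^2+|r|^2}$ or a variant tuned to the exponential growth of $V_1,V_2$), so that $\Phi_{\e,\alpha}$ attains an interior maximum at some $(T-t_0,x_0,r_0)$ with $t_0\in\,]0,T[$. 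Here one uses \eqref{gc}: since $U\le V_1$ and $V\ge V_2$, the difference $U-V$ grows at most like $V_1-V_2$, which has controlled (sub-exponential after the right choice of $\psi$) growth, so the penalized supremum is attained. The term $\alpha/t$ together with the boundary condition \eqref{u-v00} pushes the maximizer away from $t=0$ (i.e.\ $T-t=T$, the singular initial time); this is exactly where the continuity of $V$ from \citet{LM15} and the hypothesis \eqref{u-v00} enter.

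The second step is the jet argument replacing Crandall–Ishii. Since $U$ and $V$ are only continuous, I would not differentiate them directly; instead, note that the function $(t,x,r)\mapsto V(T-t,x,r)+\e\psi(x,r)+\alpha/t+\text{const}$ touches $U$ from above at the maximizer, so after a second-order Taylor expansion of the smooth part $\e\psi+\alpha/t$ one obtains an element of $\cJ^{2,+}U(T-t_0,x_0,r_0)$; symmetrically, $U-\e\psi-\alpha/t-\text{const}$ touches $V$ from below, giving an element of $\cJ^{2,-}V(T-t_0,x_0,r_0)$ with the \emph{same} $(x,r)$-gradient and the \emph{same} (one-dimensional) second $r$-derivative, and with $q$-components differing only by the derivative of $\alpha/t$, which has a sign. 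This is the point the remark before the theorem is making: because the second-order term is scalar in $r$, no doubling-of-variables / Ishii-matrix machinery is needed — a plain Taylor expansion of the test perturbation does the job. Feeding these two jet elements into the subsolution inequality \eqref{sjiv} and the corresponding supersolution inequality and subtracting, the $\sup_{\xi}(\xi^\top p - s f(\xi))$ Hamiltonian terms cancel exactly (same $p$, same $s$), the $\tfrac{x^\top\Sigma x}{2}m$ and $b\cdot x\,s$ terms are controlled by $O(\e)$ contributions from $\psi$, and one is left with $0\le \beta\,(U-V)(T-t_0,x_0,r_0) - (\text{positive from }\partial_t(\alpha/t)) + O(\e)$.

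The third step is to extract the contradiction: since $\beta<0$ and $(U-V)(T-t_0,x_0,r_0)\ge \theta-o(1)>0$ for $\e,\alpha$ small, the term $\beta(U-V)$ is strictly negative and bounded away from $0$, while the remaining terms are $O(\e)+(\text{something}\le 0)$; letting $\e\downarrow 0$ then $\alpha\downarrow 0$ yields $0\le \beta\theta<0$, a contradiction. Hence $\theta\le 0$, i.e.\ $U\le V$. The main obstacle — and the reason the linear term $\beta V$ was introduced via Lemma \ref{beta} — is precisely producing a strictly negative ``zeroth-order'' term to absorb everything else: without it, the inequality after subtraction would be $0\le O(\e)$, which is not a contradiction. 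The secondary technical difficulty is choosing the penalization $\psi$ compatibly with the exponential growth bounds \eqref{gc} (so that the maximizer exists) while keeping the $O(\e)$ error terms from $\cL^\xi\psi$ genuinely vanishing as $\e\to0$ uniformly near the maximizer; one must also check carefully that $t_0$ stays in the open interval, for which the $\alpha/t$ barrier plus \eqref{u-v00} plus the upper bound $U\le V_1$, $V\ge V_2$ near $t=0$ are exactly what is needed.
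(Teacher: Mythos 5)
Your proposal contains a genuine gap at its central step, the jet extraction without doubling of variables. You write that $(t,x,r)\mapsto V(T-t,x,r)+\e\psi(x,r)+\alpha/t+\text{const}$ touches $U$ from above at the maximizer, and that Taylor-expanding the smooth part then yields an element of $\cJ^{2,+}U(T-t_0,x_0,r_0)$; symmetrically for $\cJ^{2,-}V$. This does not work: to produce a superjet element of $U$ at a point you need a $\C^{1,1,2}$ test function touching $U$ from above there (equivalently, a second-order polynomial bound), and the function $V+\e\psi+\alpha/t$ is merely continuous because $V$ is merely continuous. Taylor expansion of the smooth piece $\e\psi+\alpha/t$ leaves the non-smooth term $V$ untouched and gives you nothing in $\cJ^{2,+}U$. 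The same objection applies to the claimed subjet element of $V$. In fact, for a generic continuous function one of $\cJ^{2,+}U(p_0)$ or $\cJ^{2,-}V(p_0)$ can be empty at a given point $p_0$, so there is no guarantee that \emph{any} pair of jet elements at the same point exists, let alone one with matching gradient and second derivative. Because of this, the claimed exact cancellation of the Hamiltonian terms (``same $p$, same $s$'') is not available either.

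The paper's proof closes precisely this gap by Kruzhkov's doubling-of-variables technique, which you explicitly tried to avoid together with Crandall--Ishii. One considers $\Phi_\e(t,t',x,x',r,r')=U(T-t,x,r)-V(T-t',x',r')-\frac1\e(|t-t'|^2+|x-x'|^2+|r-r'|^2)$ and takes a maximizer $(T-t_\e,T-t'_\e,x_\e,x'_\e,r_\e,r'_\e)$ on a compact neighbourhood. Freezing the primed variables, $U$ minus the smooth quadratic penalty has a local maximum, so the penalty's derivatives at $(T-t_\e,x_\e,r_\e)$ give a genuine element of $\cJ^{2,+}U$; freezing the unprimed variables gives an element of $\cJ^{2,-}V$ at the (nearby, but different) point $(T-t'_\e,x'_\e,r'_\e)$. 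What is correctly avoided in the paper is only Crandall--Ishii: since the second-order term is one-dimensional in $r$, a Taylor expansion of $\varphi_\e$ with a shrinking $r$-window $[r^*-\alpha_\e,r^*+\alpha_\e]$, $\alpha_\e/\e\to 0$, yields bounded second-order jet entries $\pm 2\alpha_\e/(3\e)\to 0$ rather than the usual $\pm 2/\e$, so no matrix inequality is needed. The doubling of variables itself, however, is indispensable. Your localization idea (the $\e\psi$ coercive penalty and the $\alpha/t$ barrier together with \eqref{gc} and \eqref{u-v00} to keep the maximizer interior) is reasonable and addresses a point the paper passes over somewhat briskly; but it cannot substitute for the doubling, which is needed before the viscosity inequalities can be invoked at all.
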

The following uniqueness result directly follows from the above theorem.
\begin{Cor}\label{uniq}
The value function defined in \eqref{omp} is the \emph{unique} viscosity solution of \eqref{hjb} with initial condition \eqref{hjbic}.
\end{Cor}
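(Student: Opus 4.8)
The plan is to obtain the Corollary by combining Theorem \ref{vfvsh} with the strong comparison principle of Theorem \ref{scp}. Theorem \ref{vfvsh} already tells us that $V$ is a viscosity solution of \eqref{hjb}--\eqref{hjbic}, so it remains to show that any viscosity solution $W$ of \eqref{hjb}--\eqref{hjbic} lying in the admissible class --- i.e.\ satisfying the growth bound \eqref{vfs}, which $V$ itself obeys --- must equal $V$. The one structural obstacle is that Theorem \ref{scp} is stated for the modified equation \eqref{hjb2} carrying the extra linear term $\beta V$ with $\beta<0$, not for \eqref{hjb}; hence the argument runs in two steps: first transfer the viscosity property along the exponential rescaling of Lemma \ref{beta}, then apply Theorem \ref{scp} twice.

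The first step is the viscosity analogue of Lemma \ref{beta}: if $w$ is a continuous viscosity subsolution (resp.\ supersolution) of \eqref{hjb}, then $\overline{w}(T-t,x,r):=\exp(\beta(T-t))w(T-t,x,r)$ is a viscosity subsolution (resp.\ supersolution) of \eqref{hjb2}. This is a direct test-function check. If $\varphi\in\C^{1,1,2}$ is such that $\overline{w}-\varphi$ has a local maximum (resp.\ minimum) at $(T-t^*,x^*,r^*)$, we may assume by Remark \ref{rvs} that $\overline{w}=\varphi$ there; then $\psi:=\exp(-\beta(T-t))\varphi\in\C^{1,1,2}$ satisfies $w-\psi=\exp(-\beta(T-t))(\overline{w}-\varphi)$, so $w-\psi$ vanishes at $(T-t^*,x^*,r^*)$ and, the prefactor being strictly positive, has a local maximum (resp.\ minimum) there as well. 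Feeding $\psi$ into the viscosity inequality for $w$ and performing exactly the computation of Lemma \ref{beta} (now at a single point instead of pointwise) yields the viscosity inequality for $\overline{w}$ with $\varphi$. Equivalently, one may argue through the jets using Lemma \ref{advs} together with the Taylor expansions of Remark \ref{slm}.

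With this in hand, set $\overline{V}:=\exp(\beta(T-t))V$ and $\overline{W}:=\exp(\beta(T-t))W$; these are viscosity solutions of \eqref{hjb2}, so in particular $\overline{W}$ is a subsolution and $\overline{V}$ a supersolution. Since $\beta<0$, the factor $\exp(\beta(T-t))$ stays in $[\exp(\beta T),1]$ on $]0,T]$, so the sandwich $V_2\leq V,W\leq V_1$ survives: the lower bound $V_2$ is untouched (it is negative and the factor is $\leq 1$), while the upper bound is only multiplied by a bounded factor, which does not alter the CARA-type growth that Theorem \ref{scp} uses; hence \eqref{gc} holds for the pair $(\overline{W},\overline{V})$. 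The boundary condition \eqref{u-v00} is inherited from the fact that $V$ and $W$ share the initial datum \eqref{hjbic}: on $\{X=0\}$ both rescaled functions tend to $\exp(\beta T)u(r)$, so $\overline{W}-\overline{V}\to0$, and on $\{X\neq 0\}$ the singular initial condition together with \eqref{vfs} forces $\limsup_{t\to0}(\overline{W}-\overline{V})\leq0$. Theorem \ref{scp} then gives $\overline{W}\leq\overline{V}$ on $]0,T]\times\R^d\times\R$, and dividing by the strictly positive factor $\exp(\beta(T-t))$ yields $W\leq V$. Interchanging the roles of $W$ and $V$ gives $V\leq W$, hence $V=W$; since $V$ is a viscosity solution by Theorem \ref{vfvsh}, it is the unique one.

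All the substantial work is packaged inside Theorem \ref{scp}, whose proof must handle the exponential growth and the singular initial datum via Taylor expansions of test functions rather than the Crandall--Ishii lemma; once that theorem and the elementary viscosity version of Lemma \ref{beta} are available, the Corollary is essentially bookkeeping. The only point internal to \emph{this} argument that deserves a moment's care is checking that the exponential rescaling leaves the hypotheses of Theorem \ref{scp} intact --- the preservation of the growth bound and of the boundary condition \eqref{u-v00} despite both functions blowing up to $-\infty$ on $\{X\neq 0\}$.
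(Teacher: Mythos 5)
Your proposal is essentially the same argument as the paper's: combine Theorem \ref{vfvsh} (existence of a viscosity solution) with the comparison principle of Theorem \ref{scp}, applied twice by symmetry. You are, however, more careful than the paper about one point that it leaves implicit: Theorem \ref{scp} is stated for the rescaled equation \eqref{hjb2}, whereas the Corollary concerns \eqref{hjb}, so one genuinely needs the viscosity analogue of Lemma \ref{beta}, and your test-function derivation of it (replacing $\varphi$ by $\psi=\exp(-\beta(T-t))\varphi$ after normalizing as in Remark \ref{rvs}) is the right way to fill that gap; your accompanying check that the growth sandwich and the boundary condition \eqref{u-v00} survive the exponential rescaling is exactly the bookkeeping the paper silently omits.
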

\begin{rem}
In the one-dimensional framework, adding a term of the form $\e V_{xx}$ in equation \eqref{hjb2}, with $\e>0$, does not change the conclusion of the preceding theorem: indeed, we can apply step by step the same arguments as in the proof of Theorem \ref{scp} to obtain the analogous conclusion for the strong comparison result. This  allows us to approximate our degenerate parabolic equation through non-degenerate parabolic ones, which  also fulfill a strong comparison result.  The corresponding setting in our optimal control problem consists in adding an $\e$-noise to the controlled process $X$, by setting: $$dX_t=-\xi_t+\e dW_t,$$ where $(W_t)$ is a Brownian motion independent of $(B_t)$. With this at hand, we can derive the corresponding non-degenerate HJB equation
\[
-V_t+\frac{X^2 \sigma^2}{2} V_{rr}+ \e V_{xx} +b\cdot X\,V_r +\sup_{\xi\in \R^d} (\xi\cdot \nabla_x V- f(\xi)V_r).
\]
In the d-dimensional framework, things can become more complicated, and we have to use among others Crandall-Ishii's lemma to find the corresponding sub- and superjet associated with the second-order terms in order to prove a comparison result for the non-degenerate parabolic equation.
\xqed\diamondsuit
\end{rem}

\section{Proofs}
\subsection{Proof of Theorem \ref{cherie}}
We split the proof into two propositions. \begin{Prop}\label{lip}
Let $V\in\C^{1,1,2}(]0,T]\times\R^d\times\R)$ be the value function of the maximization problem \eqref{omp}. Then $V$   is a supersolution of \eqref{hjb}, i.e., $V$ fulfills the inequality
\begin{equation}
\Big(-V_t  +\sup_{\xi\in \R^d}\cL^{\xi}V\Big)(t,x,r)\leq 0\quad\text{ for all } (t,x,r)\in\;]0,T]\times\R^d\times\R.\label{li}
\end{equation}
\end{Prop}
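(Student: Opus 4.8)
The plan is to run the classical dynamic‑programming argument for the supersolution inequality \eqref{li}. Fix a point $(t_0,x_0,r_0)\in\,]0,T]\times\R^d\times\R$. Since $V_t$ does not depend on the control, it suffices to show that $(-V_t+\cL^\xi V)(t_0,x_0,r_0)\le 0$ for each \emph{fixed} $\xi\in\R^d$ and then take the supremum over $\xi$. (The supremum $\sup_\xi\cL^\xi V(t_0,x_0,r_0)$ is finite: by Theorem~\ref{v_r}, $V_r>0$, and by Remark~\ref{rhf}(ii) this supremum equals $\tfrac{x_0^\top\Sigma x_0}{2}V_{rr}+b\cdot x_0\,V_r+V_rf^*(\nabla_xV/V_r)$ with $f^*$ finite.) For $0<h<t_0$ I would then look at the state/revenue pair generated by the constant control $\xi$ on $[0,h]$, namely $\tilde X_s:=x_0-s\xi$ and $d\tilde\cR_s=\tilde X_s^\top\sigma\,dB_s+b\cdot\tilde X_s\,ds-f(-\xi)\,ds$, $\tilde\cR_0=r_0$; by the very definition \eqref{lsop} of $\cL^\xi$, the drift of $s\mapsto V(t_0-s,\tilde X_s,\tilde\cR_s)$ is exactly $(-V_t+\cL^\xi V)(t_0-s,\tilde X_s,\tilde\cR_s)$.

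First I would produce an admissible strategy $\zeta^h\in\dot\X^1_{2A_2}(t_0,x_0)$ that agrees with the constant control $\xi$ on $[0,h]$, by concatenating it at time $h$ with an optimal continuation for the problem started from $(t_0-h,\tilde X_h,\tilde\cR_h)$ (which exists by Theorem~\ref{eos}; the uniqueness in that theorem yields measurable dependence on the initial data, so the concatenation is adapted). Membership of $\zeta^h$ in $\dot\X^1(t_0,x_0)$ is routine, but the moment constraint in \eqref{4a2} must be checked: conditioning on $\F_h$ and using translation invariance of $r\mapsto\exp(-2A_2\cR)$, one gets $\E[\exp(-2A_2\cR_{t_0}^{\zeta^h})]=\E[\exp(-2A_2\tilde\cR_h)\,\psi(h,\tilde X_h)]$ for a continuous $\psi$, and as $h\downarrow0$ the right‑hand side converges to $M_{\cR_{t_0}^{\xi^*}}(2A_2)<M_{\cR_{t_0}^{\xi^*}}(2A_2)+1$, so $\zeta^h$ is admissible for all $h$ small. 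The Bellman principle (Theorem~\ref{bp}), applied with the stopping time $\tau=h$, then gives $V(t_0,x_0,r_0)\ge\E[V(t_0-h,\tilde X_h,\tilde\cR_h)]$ (if necessary one first localises by replacing $h$ with $h\wedge\inf\{s:|\tilde\cR_s|\ge n\}$ and lets $n\to\infty$ afterwards).

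Next, since $V\in\C^{1,1,2}(]0,T]\times\R^d\times\R)$ and $t_0-s\in[t_0-h,t_0]\subset\,]0,T]$, Itô's formula on $[0,h]$ yields
\[
V(t_0-h,\tilde X_h,\tilde\cR_h)=V(t_0,x_0,r_0)+\int_0^h\big(-V_t+\cL^\xi V\big)(t_0-s,\tilde X_s,\tilde\cR_s)\,ds+\int_0^h V_r(t_0-s,\tilde X_s,\tilde\cR_s)\,\tilde X_s^\top\sigma\,dB_s .
\]
Using the growth bound \eqref{vfs} together with the representation $V_r=\E[u'(\cR^{\xi^*})]$ from Theorem~\ref{v_r} (which, via \eqref{apc}, bounds $V_r$ by exponentials of $r$) and the fact that $\tilde\cR$ has all exponential moments on the bounded interval $[0,h]$, the stochastic integral is a genuine martingale; taking expectations and combining with the previous step gives $\E[\int_0^h(-V_t+\cL^\xi V)(t_0-s,\tilde X_s,\tilde\cR_s)\,ds]\le 0$. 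Dividing by $h$ and letting $h\downarrow0$, the integrand is continuous in $s$ along the continuous path $s\mapsto(\tilde X_s,\tilde\cR_s)$, so $\tfrac1h\int_0^h(\cdots)\,ds\to(-V_t+\cL^\xi V)(t_0,x_0,r_0)$ a.s., while the same exponential‑moment estimates provide a dominating function, so the convergence holds in expectation as well. Hence $(-V_t+\cL^\xi V)(t_0,x_0,r_0)\le 0$, and taking the supremum over $\xi\in\R^d$ yields \eqref{li}.

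The two points where the standard scheme genuinely needs extra care are: (i) exhibiting a test strategy $\zeta^h$ in the \emph{constrained} class $\dot\X^1_{2A_2}(t_0,x_0)$ — the finite‑fuel requirement forces the concatenation with a liquidating continuation, and the extra moment condition \eqref{4a2} forces that continuation to be an optimal one together with a passage to the limit $h\downarrow0$; and (ii) the exponential‑growth bookkeeping needed to justify both the martingale property of the stochastic integral and the exchange of limit and expectation as $h\downarrow0$. I expect (ii) to be the main obstacle; it is precisely the ``exponential growth of our expected utilities'' difficulty announced in the introduction. The singularity of the initial condition \eqref{hjbic} plays no role here, since throughout we stay at times $t_0-s\ge t_0-h>0$ bounded away from $0$.
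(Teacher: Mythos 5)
Your argument is correct and follows the same dynamic-programming-plus-It\^{o} scheme as the paper, with two technical points handled differently. For the admissible extension, the paper simply appends to the constant control $\eta$ on $[t,t+\e[$ the deterministic linear liquidation $-\frac{x-\e\eta}{T-(t+\e)}$ on $[t+\e,T]$, whereas you concatenate with the optimal continuation and invoke a measurable-selection argument; the paper's choice is simpler, though neither argument carefully verifies the moment constraint in \eqref{4a2} (you at least acknowledge the issue). More importantly, to pass from the integral inequality to the pointwise one the paper localizes with the stopping times
\[
\tau_k:=\inf\big\{ s>t\; \big|\; (s-t,\,X^\xi_s-x,\,\cR^\xi_s-r)\notin [0,1/k[\,\times B(0,\alpha)\times\,]-\alpha,\alpha[\,\big\}.
\]
On $[t,\tau_k]$ the state process is confined to a bounded region, so the stochastic integral is automatically a true martingale by Doob's optional sampling, and the integrand $-V_t+\cL^\xi V$ is uniformly bounded along paths; the dominated convergence theorem then applies with a constant majorant once one notes that $\tau_k=t+1/k$ a.s.\ for $k$ large. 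You instead carry out exponential-moment estimates (via \eqref{vfs} and Theorem~\ref{v_r}) to justify both the martingale property and the exchange of limit and expectation --- precisely the ``exponential-growth bookkeeping'' you flag as the main obstacle --- and the stopping-time localization sidesteps this entirely. Both roads reach the same conclusion; the paper's localization is the more economical one, and your observation that the singular initial condition plays no role here is correct and matches the fact that $\tau_k$ stays strictly inside $]0,T]$.
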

Before proving the preceding proposition, 
we will briefly describe an easy of constructing supersolutions of \eqref{hjb}.
\begin{Lem}\label{ss}
Let $V,\widetilde{V}$ be two supersolutions of \eqref{hjb} and $\e\geq0$. Then $V+\e\widetilde{V}$ is a again a supersolution of \eqref{hjb}.
\end{Lem}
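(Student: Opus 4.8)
The plan is to verify directly that the sum $W := V + \e\widetilde{V}$ satisfies the supersolution inequality \eqref{li}, using the supersolution inequalities for $V$ and $\widetilde{V}$ together with the linearity of the operator $\cL^{\xi}$ in its argument and the sublinearity of the supremum. First I would note that since $V,\widetilde V\in\C^{1,1,2}(]0,T]\times\R^d\times\R)$ and $\e\geq 0$, the function $W$ is again of class $\C^{1,1,2}$, so the left-hand side of \eqref{li} is well-defined for $W$. Differentiating, $W_t = V_t + \e\widetilde V_t$, $\nabla_x W = \nabla_x V + \e\nabla_x\widetilde V$, $W_r = V_r + \e\widetilde V_r$, $W_{rr} = V_{rr} + \e\widetilde V_{rr}$. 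From the explicit form \eqref{lsop} of $\cL^{\eta}$ one sees that $v\mapsto \cL^{\eta}v$ is linear, hence for every fixed $\eta\in\R^d$ and every $(t,x,r)$,
\begin{equation*}
\cL^{\eta}W(t,x,r) = \cL^{\eta}V(t,x,r) + \e\,\cL^{\eta}\widetilde V(t,x,r).
\end{equation*}

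Next I would take the supremum over $\eta=\xi\in\R^d$ and use that the supremum of a sum is bounded above by the sum of the suprema (here with the nonnegative factor $\e$ carried through), obtaining
\begin{equation*}
\sup_{\xi\in\R^d}\cL^{\xi}W(t,x,r) \;\leq\; \sup_{\xi\in\R^d}\cL^{\xi}V(t,x,r) + \e\sup_{\xi\in\R^d}\cL^{\xi}\widetilde V(t,x,r).
\end{equation*}
Combining this with $-W_t = -V_t - \e\widetilde V_t$ gives
\begin{equation*}
\Big(-W_t + \sup_{\xi\in\R^d}\cL^{\xi}W\Big)(t,x,r) \leq \Big(-V_t + \sup_{\xi\in\R^d}\cL^{\xi}V\Big)(t,x,r) + \e\Big(-\widetilde V_t + \sup_{\xi\in\R^d}\cL^{\xi}\widetilde V\Big)(t,x,r).
\end{equation*}
Since $V$ and $\widetilde V$ are supersolutions of \eqref{hjb}, both bracketed terms on the right are $\leq 0$, and $\e\geq 0$, so the right-hand side is $\leq 0$ for all $(t,x,r)\in\;]0,T]\times\R^d\times\R$. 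This is precisely \eqref{li} for $W$, so $W = V+\e\widetilde V$ is a supersolution.

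There is essentially no obstacle here: the only point requiring a word of care is that the supremum of a sum need not equal the sum of the suprema, so the argument only produces an inequality — but that is exactly the direction needed for the supersolution property, so the one-sided bound suffices. (One should also keep in mind Remark \ref{rhf}(i): for \eqref{hjb} to make sense the relevant $r$-derivative must be positive; if desired one can restrict attention to supersolutions with $V_r,\widetilde V_r>0$, in which case $W_r = V_r+\e\widetilde V_r>0$ as well, so the structural constraint is preserved under the operation.)
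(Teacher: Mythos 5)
Your proof is correct and takes essentially the same approach as the paper: both decompose the HJB expression for $V+\e\widetilde{V}$ using linearity of $\cL^{\xi}$ and the fact that $\sup$ of a sum is bounded by the sum of the sups, then invoke the two supersolution inequalities. Your write-up is more explicit (and your closing remark about preserving $V_r>0$ is a sensible observation), but the underlying argument is identical.
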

\begin{proof}
We write
\begin{IEEEeqnarray*}{rCl} 
\IEEEeqnarraymulticol{3}{l}{-(V+\e\widetilde{V})_t+\frac{X^\top \Sigma X}{2} (V+\e\widetilde{V})_{rr} +b\cdot X\,(V+\e\widetilde{V})_r }\\
&&+\>\sup_{\xi\in \R^d} \big(\xi\cdot \nabla_x( V+\e\widetilde{V})- f(\xi)(V_t+\e\widetilde{V})_r\big)\\
&\leq&-V_t+\frac{X^\top \Sigma X}{2} V_{rr} +b\cdot X\,V_r +\sup_{\xi\in \R^d} (\xi\cdot \nabla_x V- f(\xi)V_r)\\
&&+\>\e\Big(\widetilde{V}_t+\frac{X^\top \Sigma X}{2} \widetilde{V}_{rr} +b\cdot X\,\widetilde{V}_r
+\sup_{\xi\in \R^d} \big(\xi\cdot \nabla_x \widetilde{V}- f(\xi)\widetilde{V}_r\big)\Big)\\
&\leq& 0,
\end{IEEEeqnarray*}
where the first inequality follows by taking  the supremum of a sum. 
\end{proof}
\begin{proof}[Proof of Proposition \ref{lip}]
In the following proof, we use  classical argumentations (see, e.g., \citet{CIL92}). However, due to our fuel constraint condition on strategies $\xi$ as well as our blow up initial condition for $V$, some adaptation are to be made. To this end, let $(t, x, r)\in [0,T[\,\times\,\R^{d}\times\R^d, \eta\in\R^d,$ and  $\e>0$ be such that $t+\e<T$. We define $\xi\in \dot{\X}^1_{A_2}([t,T], x)$ as 
\begin{equation*}
\xi_s:=\begin{cases}
       \eta ,& \text{if } s\in[t,t+\e[,\\
      -\frac{x-\e\eta}{T-(t+\e)},   & \text{if } s\in [t+\e,T],\\
       \end{cases}
  \end{equation*}
  and  consider  the corresponding processes $(X^\xi_s,\cR^\xi_s)$ that verify $X^\xi_t=x, \cR^\xi_t=r$. For all $k\in\N$ large enough, we introduce  the stopping times
\begin{equation*}
\tau_k:=\inf\big\{ s>t\; |\; (s-t,X^\xi_s-x,\cR^\xi_s-r)\notin [0,1/k[\times B(0,\alpha)\times ]-\alpha;\alpha[\big\},
\end{equation*}
where $B(0,\alpha)$ denotes the ball  in $\R^d$ of radius $\alpha>0$ centered at the origin.
Applying  Theorem \ref{bp}  
yields
\begin{IEEEeqnarray*}{rCl} 
0& \geq& E \big [V(T-\tau_k, X^{\xi}_{\tau_k}, \cR^{\xi}_{\tau_k})-V(T-t,x,r)\big]\\
&=& \E\bigg[-\int_t^{\tau_k} V_t(T-s, X_s^{\xi},\cR_s^{\xi})\,ds +\int_t^{\tau_k} V_r(T-s, X_s^{\xi},\cR_s^{\xi})\, d\cR^{\xi}_s \\
&&+\int_t^{\tau_k} \nabla_x V(T-s, X_s^{\xi},\cR_s^{\xi})\,dX^{\xi}_s +\frac{1}{2}\int_t^{\tau_k} V_{rr}(T-s, X_s^{\xi},\cR_s^{\xi})\,d\langle \cR^{\xi}\rangle _s\bigg]\\
&=&\E\bigg[\int_t^{\tau_k} \bigg(- V_t(T-s, X_s^{\xi},\cR_s^{\xi})+\cL^{\xi} V(T-s, X_s^{\xi},\cR_s^{\xi})\bigg)\,ds\bigg] \\
&&+\E\bigg[\int_t^{\tau_k} (X_s^{\xi})^\top\sigma V_r(T-s, X_s^{\xi},\cR_s^{\xi})\, dB_s\bigg],
\end{IEEEeqnarray*} 
in conjunction with It\^o's formula.
Due to the definition of $\tau_k$,  the last expectation vanishes (Doob's optional sampling theorem), whence we infer
\begin{equation}
\E\bigg[\int_t^{\tau_k} \bigg(- V_t(T-s, X_s^{\xi},\cR_s^{\xi})
+\cL^{\xi} V(T-s, X_s^{\xi},\cR_s^{\xi})\bigg)\,ds\bigg] \leq 0. \label{ie}
\end{equation}
Because of the a.s.~continuity in $s$ of the integrands, we have $\tau_k=t+1/k$, for $k$ large enough. Thus, using the mean value theorem, we get that
\begin{equation}
k \int_t^{\tau_k} \bigg(- V_t(T-s, X_s^{\xi},\cR_s^{\xi}) +\cL^{\xi} V(T-s, X_s^{\xi},\cR_s^{\xi})\bigg)\,ds\label{irv}
\end{equation}
converges a.s.~to 
\begin{equation}
- V_t(T-t, x,r) +\cL^{\eta} V(T-t, x,r),\label{cmv}
\end{equation}
when $k$ goes to infinity. In addition, \eqref{irv} is a.s.~uniformly bounded in $k$.  
Indeed, due to the definition of $\tau_k$, the processes $X^\xi_t$ and $\cR^\xi_t$ are  bounded, and so are the terms $V_t, V_r $ and $\cL^{\xi} V$  in the related integral, since they are continuous in both preceding quantities (and since we can find $\delta>0$ such that for $k$ small enough we have $\tau_k<T-\delta$) . Thus, we can use the dominated convergence theorem to obtain
\begin{align*}
\E\bigg[k \int_t^{\tau_k} \bigg(&- V_t(T-s, X_s^{\xi},\cR_s^{\xi}) +\cL^{\xi} V(T-s, X_s^{\xi},\cR_s^{\xi})\bigg)\,ds \bigg]\\
&\underset{k\rightarrow \infty}{\longrightarrow}- V_t(T-t, x,r) +\cL^{\eta} V(T-t, x,r).
\end{align*}
Combining this with inequality \eqref{ie}, we  finally get
\begin{equation}
- V_t(T-t, x,r) +\cL^{\eta} V(T-t, x,r)\leq 0.
\end{equation}
Since we chose $\eta$ arbitrarily,   we can now take the supremum on the left-hand side of the last inequality, due to the continuity of $\eta\longrightarrow \cL^\eta V$, which concludes the proof. 
\end{proof}

\begin{Prop}\label{rip}
Let $V\in\C^{1,1,2}(]0,T]\times\R^d\times\R)$ be the value function of the maximization problem \eqref{omp}. Then $V$ is a subsolution of \eqref{hjb}, i.e.,  $V$ fulfills the inequality
\begin{equation}
\Big(-V_t  +\sup_{\xi\in \R^d}\cL^{\xi}V\Big)(t,x,r)\geq 0\quad\text{ for all } (t,x,r)\in\,]0,T]\times\R^d\times\R. \label{ri}
\end{equation}
\end{Prop}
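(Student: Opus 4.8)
The plan is to argue by contradiction, running the dynamics along the optimal control provided by Theorem \ref{eos} and combining It\^o's formula with the Bellman principle — the point being that, unlike in the proof of Proposition \ref{lip}, one now needs the \emph{optimal} strategy rather than a freely chosen admissible one. Suppose \eqref{ri} fails at some point $(t_0,x_0,r_0)\in\,]0,T]\times\R^d\times\R$, so that $-V_t(t_0,x_0,r_0)+\sup_{\xi\in\R^d}\cL^{\xi}V(t_0,x_0,r_0)<0$. By Remark \ref{rhf}(ii), $\sup_{\xi}\cL^{\xi}V=\frac{x^\top\Sigma x}{2}V_{rr}+b\cdot x\,V_r+V_r\,f^*\!\big(\nabla_x V/V_r\big)$, which is continuous on $]0,T]\times\R^d\times\R$ since $V\in\C^{1,1,2}$, $f^*$ is finite and continuous, and $V_r>0$ by Theorem \ref{v_r}. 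Hence there are $\delta>0$ and an open cylinder $Q\subset\,]0,T]\times\R^d\times\R$ centered at $(t_0,x_0,r_0)$, of time-radius $\rho<t_0/2$, on whose closure $-V_t+\sup_{\xi}\cL^{\xi}V\leq-\delta$; because the forward dynamics only decreases the first coordinate, $t_0-s>0$ for every $s$ with $(t_0-s,\cdot,\cdot)\in Q$, so the construction stays inside the smoothness domain and the derivatives of $V$ are bounded on $\overline{Q}$.

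Let $\xi^*\in\dot{\X}^1(t_0,x_0)$ be the unique optimal strategy for $V(t_0,x_0,r_0)$, with associated process $(X^{\xi^*},\cR^{\xi^*})$ started at $(x_0,r_0)$, and set $\tau:=\inf\{s>0:\,(t_0-s,X^{\xi^*}_s,\cR^{\xi^*}_s)\notin Q\}$. Since $s\mapsto(X^{\xi^*}_s,\cR^{\xi^*}_s)$ is $\P$-a.s.\ continuous and starts in the interior of the spatial section of $Q$, we have $0<\tau\leq\rho<t_0$ $\P$-a.s., so $\E[\tau]>0$ and the liquidation constraint is never active on $[0,\tau]$. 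The crucial input is the \emph{flow equality} $V(t_0,x_0,r_0)=\E[V(t_0-\tau,X^{\xi^*}_\tau,\cR^{\xi^*}_\tau)]$. Here ``$\geq$'' is immediate from the Bellman principle \eqref{ebp}; for ``$\leq$'', one invokes \eqref{omp1}: the shifted tail $(\xi^*_{\tau+s})_{s\in[0,t_0-\tau]}$ liquidates $X^{\xi^*}_\tau$ by time $t_0-\tau$ and inherits the $\dot{\X}^1$-integrability of $\xi^*$, hence is admissible for the problem started at $(t_0-\tau,X^{\xi^*}_\tau,\cR^{\xi^*}_\tau)$, so $V(t_0-\tau,X^{\xi^*}_\tau,\cR^{\xi^*}_\tau)\geq\E[u(\cR^{\xi^*}_{t_0})\,|\,\F_\tau]$ $\P$-a.s., and taking expectations and using \eqref{omp1} once more gives the claim.

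It\^o's formula applied to $s\mapsto V(t_0-s,X^{\xi^*}_s,\cR^{\xi^*}_s)$ on $[0,\tau]$ (legitimate since $V\in\C^{1,1,2}$) yields, exactly as in the proof of Proposition \ref{lip}, the identity $V(t_0-\tau,X^{\xi^*}_\tau,\cR^{\xi^*}_\tau)-V(t_0,x_0,r_0)=\int_0^\tau\big(-V_t+\cL^{\xi^*_s}V\big)(t_0-s,X^{\xi^*}_s,\cR^{\xi^*}_s)\,ds+M_\tau$, with $M_s:=\int_0^s(X^{\xi^*}_u)^\top\sigma\,V_r(t_0-u,X^{\xi^*}_u,\cR^{\xi^*}_u)\,dB_u$. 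On $[0,\tau]$ the process stays in the compact set $\overline{Q}$, so $V_t,\nabla_xV,V_r,V_{rr}$ are bounded along the path; together with $\E\big[\int_0^{t_0}(|\xi^*_s|+f(\xi^*_s))\,ds\big]<\infty$ (a consequence of $\xi^*\in\dot{\X}^1(t_0,x_0)$), this makes $(M_{s\wedge\tau})_{s\geq0}$ a square-integrable martingale, so $\E[M_\tau]=0$, and makes the $ds$-integral absolutely integrable. Taking expectations in the identity and using the flow equality, the left-hand side has expectation $0$, whence $\E\big[\int_0^\tau(-V_t+\cL^{\xi^*_s}V)(t_0-s,X^{\xi^*}_s,\cR^{\xi^*}_s)\,ds\big]=0$. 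But $\cL^{\xi^*_s}V\leq\sup_{\xi}\cL^{\xi}V$ pointwise and, on $[0,\tau]$, the evaluation point lies in $Q$, so the integrand is $\leq-\delta$; hence $0\leq-\delta\,\E[\tau]<0$, a contradiction. This establishes \eqref{ri} at every point, and — unlike in Proposition \ref{lip} — no concluding passage to a supremum over $\eta$ is needed.

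I expect the main obstacle to be the flow equality, precisely the reverse inequality $V(t_0-\tau,X^{\xi^*}_\tau,\cR^{\xi^*}_\tau)\geq\E[u(\cR^{\xi^*}_{t_0})\,|\,\F_\tau]$: this requires a regular-conditional-probability argument showing that the conditioned tail of the optimal control is admissible for the remaining problem, and it is the representation \eqref{omp1} that lets one sidestep the moment/finite-fuel constraint built into $\dot{\X}^1_{2A_2}$. The secondary, but genuinely non-classical, difficulty is that the optimal rate $\xi^*$ need not be bounded while the utilities grow exponentially, so the localization by $\tau$ (exit from a compact cylinder) cannot be dispensed with when pushing expectations through It\^o's formula; it is this localization that controls the a priori unbounded terms $\nabla_xV\cdot\xi^*_s$ and $f(-\xi^*_s)V_r$ and keeps the stochastic integral a true martingale.
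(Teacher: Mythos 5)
Your proposal is correct in outline but follows a genuinely different route from the paper, and it rests on a step — the ``flow equality'' $V(t_0,x_0,r_0)=\E\big[V(t_0-\tau,X^{\xi^*}_\tau,\cR^{\xi^*}_\tau)\big]$ — that the paper deliberately avoids. The paper's proof of Proposition~\ref{rip} does \emph{not} switch to the optimal control: exactly as in Proposition~\ref{lip}, it works with arbitrary $\xi\in\dot{\X}^1_{2A_2}$. The new ingredient is instead a penalized test function $\varphi(T-t,x,r)=V(T-t,x,r)+\tfrac{\delta}{2}|(x,r)-(x_0,r_0)|^2$ with $\delta$ small enough to preserve $-\varphi_t+\sup_{\xi}\cL^{\xi}\varphi<0$ near $(t_0,x_0,r_0)$; since $\varphi-V$ is strictly positive on the boundary of the chosen neighborhood, one gets a \emph{uniform} gap $\e>0$ with $V(T-t_0,x_0,r_0)\geq\e+\E[V(T-\tau,X^\xi_\tau,\cR^\xi_\tau)]$ for \emph{every} admissible $\xi$; taking the supremum and invoking the Bellman equality~\eqref{ebp} yields $V\geq\e+V$, a contradiction. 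So your sentence ``one now needs the optimal strategy rather than a freely chosen admissible one'' misreads the structure: the penalization is precisely what makes arbitrary controls suffice, and no supremum over $\eta$ is replaced by anything — it is replaced by the Bellman supremum over strategies.

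Your alternative does work in principle, and you correctly diagnose its weak point. Taking $\xi^*$ and localizing in a cylinder $Q$ where $-V_t+\sup_\xi\cL^\xi V\leq-\delta$, applying It\^o to $V$ (legitimate since $V\in\C^{1,1,2}$), and killing the stochastic integral by the exit time gives $\E\big[\int_0^\tau(-V_t+\cL^{\xi^*_s}V)\,ds\big]\leq-\delta\,\E[\tau]<0$; so you need the left side to equal $0$, which is exactly the flow equality. The ``$\geq$'' half is Theorem~\ref{bp}. The ``$\leq$'' half is the conditional suboptimality bound $V(t_0-\tau,X^{\xi^*}_\tau,\cR^{\xi^*}_\tau)\geq\E[u(\cR^{\xi^*}_{t_0})\,|\,\F_\tau]$ $\P$-a.s.\ followed by $\E[\,\cdot\,]$ and~\eqref{omp1}. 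This pointwise bound is not available in the paper as stated: Lemma~\ref{lice} gives it only for the auxiliary exponential value function $\overline{V}$, not for $V$ itself, so you would need to prove a $u$-analogue (measurability of the conditioned tail strategy, admissibility for the shifted problem, a conditional Jensen/Fubini step). It is very plausible but is genuine extra machinery, and the penalization device is precisely what lets the paper sidestep it by never passing through the optimal control at all. What your route buys in return is conceptual transparency — it is the classical martingale characterization of the value function along the optimal trajectory — and you could cite it independently later (e.g., for the verification theorem), but as a proof of~\eqref{ri} it is strictly heavier than the paper's.
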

\begin{proof}
We follow the ideas of \cite{T12}, Proposition 3.5. 

We assume that there exists $(t_0, x_0, r_0)$ such that
\[
- V_t(T-t_0, x_0,r_0) + \sup_{\eta\in\R^d}\cL^{\eta} V(T-t_0, x_0,r_0)<0,
\]
and work torward a contradiction using an $\e$-maximizer. We define
\[
\varphi(T-t,x,r)=V(T-t,x,r)+\frac{\delta}{2} |(x,r)-(x_0,r_0)|^2.
\]
Since we have
\begin{align*}
(V-\varphi)(T-t_0,x_0,r_0)&=0, \quad\nabla_x(V-\varphi)(T-t_0,x_0,r_0)=0,\\
(V-\varphi)_r(T-t_0,x_0,r_0)&=0, \quad  (V-\varphi)_t(T-t_0,x_0,r_0)=0,\\
(V-\varphi)_{rr}(T-t_0,x_0,r_0)&=-\delta,
\end{align*}
and since the map
\[
(x,r)\longrightarrow -\inf_{\xi\in \R^d}(x\cdot\xi-f(-\xi)r)=\frac1{r}f^*\Big(\frac{x}{r}\Big)
\]
is continuous on $\R^d\times]0,\infty[$, it follows that
\[
h(t_0,x_0,r_0):=- \varphi_t(T-t_0, x_0,r_0) + \sup_{\xi\in\R^{d}}\cL^{\xi} \varphi(T-t_0, x_0,r_0)<0
\]
for $\delta$ small enough.
\\
For $\eta>0$ small, we define the following neighborhood 
\[
\cN_\eta=\big\{(t,x,r)|(t-t_0,x-x_0,r-r_0)\in\, ]-\eta,\eta[\,\times\, B(0,\eta)\,\times\,]-\eta,\eta[\text{ and } h(t,x,r)<0\big\}
\]
of $(T-t_0,x_0,r_0)$.
Furthermore, we set
\begin{equation}
\e=\min_{(T-t,x,r)\in\partial\cN_\eta} (\varphi-V)=\frac\delta{2}\min_{ \partial\cN_\eta}|(T-t,x,r)-(T-t_0,x_0,r_0)|^2>0\label{gis}
\end{equation} 
and introduce  the following stopping time
\[
\tau:=\inf\{s>t_0 \;|\; (s,X^{\xi}_s, \cR^{\xi}_s)\notin\cN_\eta\},
\]
with $\xi\in\dot{\X}^1_{2A_2}([t_0,T],X_0)$.
Due to the pathwise continuity of the corresponding state process, we have $(T-\tau, X^{\xi}_\tau, \cR^{\xi}_\tau)\in \partial\cN_\eta,$ so that
\[
(\varphi-V)(T-\tau, X^{\xi}_\tau, \cR^{\xi}_\tau)\geq  \e,\q,
\]
by using \eqref{gis}. Hence, applying It™\^o's formula we get
\begin{IEEEeqnarray*}{rCl} 
\IEEEeqnarraymulticol{3}{l}{\E\Big[V\big(T-\tau, X^{\xi}_\tau, \cR^{\xi}_\tau\big)-
V(T-t_0,x_0,r_0)\Big]}\\
&=&\E\Big[V\big(T-\tau, X^{\xi}_\tau, \cR^{\xi}_\tau\big)-\varphi\big(T-\tau, X^{\xi}_\tau, \cR^{\xi}_\tau\big)+\varphi\big(T-\tau, X^{\xi}_\tau, \cR^{\xi}_\tau\big)\\
&&-\>\varphi(T-t_0,x_0,r_0)\Big]\\
&\leq&-\e +\E\Big[\varphi\big(T-\tau, X^{\xi}_\tau, \cR^{\xi}_\tau\big)-\varphi(T-t_0,x_0,r_0)\Big]\\
&=&  -\e +  \E\bigg[\int_{t_0}^{\tau} \bigg(- \varphi_t(T-s, X_s^{\xi},\cR_s^{\xi}) +\cL^{\xi} \varphi(T-s, X_s^{\xi},\cR_s^{\xi})\bigg)\,ds\bigg]\\
&& +\>\E\bigg[\int_{t_0}^{\tau} (X_s^{\xi})^\top\sigma \varphi_r(T-s, X_s^{\xi},\cR_s^{\xi})\, dB_s\bigg].
\end{IEEEeqnarray*}
The last expectation vanishes, due to the boundedness of the integrands on the stochastic interval $[t_0,\tau]$. Since moreover $\big(-\varphi_t+\cL^{\xi}\varphi\big)(s,X^{\xi}_s,\cR^{\xi}_s)\leq 0$ on $[t_0,\tau]$, we have using the above inequalities: 
\begin{IEEEeqnarray*}{rCl} 
V(T-t_0,x_0,r_0)&\geq&  \e +  \E\bigg[V\big(T-\tau, X^{\xi}_\tau, \cR^{\xi}_\tau\big)-\int_{t_0}^{\tau} \bigg(- \varphi_t(T-s, X_s^{\xi},\cR_s^{\xi}) \\
&&+\cL^{\xi} \varphi\big(T-s, X_s^{\xi},\cR_s^{\xi}\big)\bigg)\,ds\bigg] \\
&\geq&  \e +  \E\big[V\big(T-\tau, X^{\xi}_\tau, \cR^{\xi}_\tau\big)\big].
\end{IEEEeqnarray*}
By taking now the supremum over $\xi$ on the right-hand side and using Theorem \ref{bp}, we infer (since $\e$ does not depend on $\xi$)
\[
V(T-t_0,x_0,r_0)\geq  \e +  \sup_{\xi\in\dot{\X}^1_{2A_2}([t_0,T],X_0)} \E\big[V\big(T-\tau, X^{\xi}_\tau, \cR^{\xi}_\tau\big)\big]=\e+ V(T-t_0,x_0,r_0),
\]
which is a contradiction with $\e>0$. Therefore, the assertion follows. 
\end{proof}
\subsection{Proof of Theorem \ref{vt}}
\begin{proof}
To prove (i), let $\xi \in\dot{\X}^1_{2A_2}(T, X_0)$, $t\in\;]0,T[,$ and $\tau_k$ be defined as follows
\[
\tau_k:=\inf\bigg\{ s>0,\; |w_r\big(T-s, X_{s}^{\xi}, \cR_s^{\xi}\big)|>k\bigg\}\wedge t.
\]
Note that $\tau_k\longrightarrow t$, a.s.,~ when $k\longrightarrow \infty$. It\^o™'s formula then yields
\begin{IEEEeqnarray*}{rCl} 
\IEEEeqnarraymulticol{3}{l}{w(T-\tau_k, X^{\xi}_{\tau_k}, \cR^{\xi}_{\tau_k})-w(T, X_0,R_0)}\\
&=&\int_0^{\tau_k} \bigg(- w_t(T-s, X_s^{\xi},\cR_s^{\xi}) + \cL^{\xi} w(T-s, X_s^{\xi},\cR_s^{\xi})\bigg)\,ds \\
&&+\int_0^{\tau_k} (X_s^{\xi})^\top\sigma w_r(T-s, X_s^{\xi},\cR_s^{\xi})\, dB_s,
\end{IEEEeqnarray*}
where the last term is a true martingale (due to definition of $\tau_k$  and integrability property of $X^\xi$).
Hence, by taking expectations on both sides we obtain
\begin{IEEEeqnarray*}{rCl} 
\IEEEeqnarraymulticol{3}{l}{\E\bigg[ w(T-\tau_k, X^{\xi}_{\tau_k}, \cR^{\xi}_{\tau_k})\bigg]-w(T, X_0,R_0)}\\
&=&\E\bigg[\int_0^{\tau_k}\bigg( - w_t(T-s, X_s^{\xi},\cR_s^{\xi}) +\cL^{\xi} w(T-s, X_s^{\xi},\cR_s^{\xi})\bigg)\,ds \bigg].
\end{IEEEeqnarray*}
Equation  \eqref{sdi2} then implies
\begin{equation}
\E\Big[ w(T-\tau_k, X^{\xi}_{\tau_k}, \cR^{\xi}_{\tau_k})\Big]\leq w(T, X_0,R_0).\label{iwvf}
\end{equation}
In order to send $k$ to infinity on the left-hand side, we need to establish the uniform integrability of the sequence 
$(w(T-\tau_k, X^{\xi}_{\tau_k}, \cR^{\xi}_{\tau_k}))$.
Since $w$ is bounded from above, it is sufficient to prove the boundedness of the sequence
 $(w^-(T-\tau_k, X^{\xi}_{\tau_k}, \cR^{\xi}_{\tau_k}))$ in $\L^2(\Om,\F,\P)$.  To this end, we write
\begin{align*}
\big(w^-(T-\tau_k, X^{\xi}_{\tau_k}, \cR^{\xi}_{\tau_k})\big)^2&\leq \big(V_2(T-\tau_k, X^{\xi}_{\tau_k}, \cR^{\xi}_{\tau_k})\big)^2\\
&\leq  \E\big[\exp(-A_2 \cR_T^\xi)|\F_{\tau_k}\big]^2\\
&\leq \E\big[\exp(-2A_2\cR_T^\xi)|\F_{\tau_k}\big],
\end{align*}
where the first inequality follows with \eqref{cvfi}, the second one with Lemma \ref{lice}, and the last one with Jensen's inequality.
Since moreover $\xi\in\dot{\X}^1_{2A_2}(T,X_0)$, we thus have
\[
\E\big[\E\big[\exp(-2A\cR_T^\xi)|\F_{\tau_k}\big]\big]=\E\big[\exp(-2A\cR_T^\xi)]\leq M_{\cR_{T}^{\xi^*}}(2A2)+1,
\]
and hence $((w^-(T-\tau_k, X^{\xi}_{\tau_k}, \cR^{\xi}_{\tau_k}))$ is bounded in $\L^2(\Om,\F,\P)$. The sequence $(w(T-\tau_k, X^{\xi}_{\tau_k}, \cR^{\xi}_{\tau_k}))$ is therefore uniformly integrable, and by using Vitali's convergence theorem we obtain
\begin{equation}
\lim_{k\rightarrow\infty} \E\bigg[ w(T-\tau_k, X^{\xi}_{\tau_k}, \cR^{\xi}_{\tau_k})\bigg]=\E\bigg[ w(T-t, X^{\xi}_t, \cR^{\xi}_t)\bigg]\leq w(T, X_0,R_0).\label{ivfe}
\end{equation}

Now we want to send  $t$ from below
to $T$. To this end, we consider  the following sequence of stopping times
\[
\sigma_k:=\inf\Big\{t\geq 0\big| (T-t)f\bigg(\frac{X_t^{\xi}}{T-t}\bigg)\geq k\Big\}\wedge T.
\]
Note that $\sigma_k\longrightarrow T,$ a.s.,~ when $k$ goes to infinity. We want to show that
\begin{equation}
\E\left[ w(T-\sigma_k, X^\xi_{\sigma_k},\cR^\xi_{\sigma_k})\b1_{\{\sigma_k<T\}}\right]\underset{k\rightarrow \infty}{\longrightarrow} 0. \label{lst}
\end{equation}
From \eqref{cvfi} we have that $\E\left[ w(T-\sigma_k, X^\xi_{\sigma_k},\cR^\xi_{\sigma_k})\b1_{\{\sigma_k<T\}}\right]$ lies between $ \E\left[ V_1(T-\sigma_k, X^\xi_{\sigma_k},\cR^\xi_{\sigma_k})\b1_{\{\sigma_k<T\}}\right] $ and 
$\E\left[ V_2(T-\sigma_k, X^\xi_{\sigma_k},\cR^\xi_{\sigma_k})\b1_{\{\sigma_k<T\}}\right]$.
It is hence sufficient to show that
\begin{equation}
\E\left[ V_i(T-\sigma_k, X^\xi_{\sigma_k},\cR^\xi_{\sigma_k})\b1_{\{\sigma_k<T\}}\right]\longrightarrow 0.\label{lvf}
\end{equation}
Now, Lemma \ref{lice}  implies
\begin{align*}
\E\left[ V_i(T-\sigma_k, X^\xi_{\sigma_k},\cR^\xi_{\sigma_k})\b1_{\{\sigma_k<T\}}\right]&\leq \E[\E\big[\exp(-A_i\cR_T^\xi)|\F_{\sigma_k}\big]\b1_{\{\sigma_k<T\}}\big]\\
&=\E\big[\exp(-A_i\cR_T^\xi)\b1_{\{\sigma_k<T\}}\big].
\end{align*}
By using the Lebesgue dominated convergence theorem, we then get
$$
\E\big[\exp(-A_i \cR_T^{\xi})\b1_{\{\sigma_k<T\}}\big]\underset{k\rightarrow \infty}{\longrightarrow 0},
$$
which proves \eqref{lvf}. On the other hand, we have
\begin{align*}
\E\left[ w(T-\sigma_k, X^\xi_{\sigma_k},\cR^\xi_{\sigma_k})\b1_{\{\sigma_k=T\}}\right]&=\E\left[ w(0,0,\cR^\xi_{\sigma_k})\b1_{\{\sigma_k=T\}}\right]\\
&\geq \E\left[u(\cR_{\sigma_k}^{\xi})\b1_{\{\sigma_k=T\}}\right]\\
&=\E\big[u\big(\cR^\xi_T\big)\big],
\end{align*}
where we used \eqref{ivf0} in the inequality. 
Hence,  \eqref{ivfe} implies
\[
\E\left[u(\cR_{T}^{\xi})\b1_{\{\sigma_k=T\}}\right]+\E\left[ w(T-\sigma_k, X^\xi_{\sigma_k},\cR^\xi_{\sigma_k})\b1_{\{\sigma_k<T\}}\right]\leq w(T,X_0,R_0),
\]
and sending  $k$ to infinity yields
\[
\E\big[u(\cR^{\xi}_T)\big]\leq w(T, X_0,R_0).
\]
In the last step, taking the supremum over $\xi\in \dot{\X}^1_{2A_2}(T, X_0)$ 
we infer
\[
V(T, X_0,R_0)\leq w(T, X_0,R_0),
\]
which proves (i).\\

We now turn to proving  (ii).  Thanks to Remark \ref{rhf}, in conjunction with assumption \eqref{wp}, we can rewrite \eqref{whe} as follows
\begin{equation*}
0=\bigg( -w_{t}+\frac{{x}^{\top}\Sigma x}{2}w_{rr}+b\cdot x\, w_r+\frac1{w_r}f^*\Big(\frac{\nabla_x w}{w_r}\Big)\bigg)(T-t,x,r).
\end{equation*}
Then,  Theorem 26.5 in \cite{R97} (note that $f$ has  superlinear growth, is strictly convex, and continuously differentiable on $\R^d$) implies that $(\nabla f)^{-1}=\nabla f^*$ is well-defined and continuous. Hence, setting   
\[
\widehat{\xi}(t,x,r):=\nabla f^* \Big(\frac{\nabla_x w(t,x,r)}{w_r(t,x,r)}\Big)
\]
we obtain that $\widehat{\xi}$ is also continuous in $t,x$, and $r$ and fulfills \eqref{hjb0}, which proves part (a) in (ii).
\\
To prove  part (b),  suppose  that there exists a strong solution $(X,\cR)$ to the SDE
\begin{equation}
\begin{cases}
               d\cR_{t}=(X_t)^{\top}\sigma dB_{t} +b\cdot X_{t}\,dt-f(-\widehat{\xi}(t, X_t, \cR_t))\,dt,\\
               dX_t=-\widehat{\xi}(t, X_t, \cR_t)\, dt,\\
                 \cR_{\arrowvert t=0}=R_{0}\;\text{and}\; X_{\arrowvert t=0}=X_0.
\end{cases}
\label {sde"}\end{equation}
Setting $ \tau_k$ as before, we infer 
with It\^o™'s formula
\begin{IEEEeqnarray*}{rCl} 
\IEEEeqnarraymulticol{3}{l}{w(T-\tau_k, X_{\tau_k}, \cR_{\tau_k})-w(T, X_0,R_0)}\\
&=&\int_0^{\tau_k} \bigg(- w_t(T-s, X_s,\cR_s) +\>\cL^{\widehat{\xi}} w(T-s, X_s,\cR_s)\bigg)\,ds\\
&& +\int_0^{\tau_k} (X_s)^\top\sigma w_r(T-s, X_s,\cR_s)\, dB_s,
\end{IEEEeqnarray*}
where the last term is a true martingale (see the above argumentation). Thus, taking expectations yields
\begin{IEEEeqnarray*}{rCl} 
\IEEEeqnarraymulticol{3}{l}{\E\big[ w(T-\tau_k, X_{\tau_k}, \cR_{\tau_k})\big]-w(T, X_0,R_0)}\\
&=&\E\bigg[\int_0^{\tau_k}\bigg( - w_t(T-s, X_s,\cR_s) +\cL^{\widehat{\xi}} w(T-s, X_s,\cR_s)\bigg)\,ds \bigg],
\end{IEEEeqnarray*}
and by using \eqref{hjb0}, this gives  us
\begin{equation}
\E\big[ w(T-\tau_k, X_{\tau_k}, \cR_{\tau_k})\big]=w(T, X_0,R_0).\label{ewvf}
\end{equation}
The same arguments as above 
permit us to send $k$ to infinity, whence we obtain
\begin{equation}
\E\big[ w(T-t, X_{t}, \cR_{t})\big]=w(T, X_0,R_0).
\end{equation}
Analogously, the same arguments as above also allow us to set $t=T$. Equation \eqref{evf0} implies that we necessarily have
$X_T=0$ in order to be able to establish
\begin{align*}
V(T,X_0,R_0)&\geq \E\bigg[ u\big(\cR_{T}\big)\bigg]=\E\bigg[ w(0, 0, \cR_{T})\bigg]=w(T, X_0,R_0),
\end{align*}
where the first equality follows from \eqref{evf0}.
Hence, we have shown that $w\leq V$. Using the reverse inequality established in (i), we finally get $w=V$. Therefore it follows that $(X,\cR)=(X^{\xi^*},\cR^{\xi^*})$, due to the uniqueness of the optimal strategy (Theorem \ref{eos}). Moreover,
$$\xi^*_t=\widehat{\xi}(T-t,X^{\xi^*}_t,\cR^{\xi^*}_t), \quad (\P\otimes\lambda)\text{-a.s.,~}$$
which concludes the proof.
\end{proof}
\subsection{Proof of Theorem \ref{vfvsh}}
As for the classical case, the proof will be split into two propositions. 
\begin{Prop}\label{vil}
The value function $V$ is a viscosity supersolution of the Hamilton-Jacobi-Bellman equation
\eqref{hjb} with initial condition \eqref{hjbic}.
\end{Prop}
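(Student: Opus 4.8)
The goal is to show that $V$ is a viscosity supersolution of \eqref{hjb}; by Definition \ref{adv}(2) we must take $\varphi\in\C^{1,1,2}(]0,T]\times\R^d\times\R)$ and a point $(t^*,x^*,r^*)\in[0,T[\times\R^d\times\R$ at which $V-\varphi$ attains a local minimum, and establish
\[
-\varphi_t(T-t^*,x^*,r^*)+\sup_{\xi\in\R^d}\cL^\xi\varphi(T-t^*,x^*,r^*)\leq 0.
\]
By Remark \ref{rvs} we may assume the minimum is global and that $V(T-t^*,x^*,r^*)=\varphi(T-t^*,x^*,r^*)$, so that $V\geq\varphi$ everywhere. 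The natural strategy is the usual one: use the Bellman principle (Theorem \ref{bp}) together with It\^o's formula applied to the \emph{smooth test function} $\varphi$, mimicking the proof of Proposition \ref{lip} but with $V$ replaced by $\varphi$ inside the It\^o expansion. The continuity of $V$ (Theorem \ref{cv}) is what makes the test-function machinery legitimate here.

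\textbf{Key steps.} First I would fix $\eta\in\R^d$ and $\e>0$ with $t^*+\e<T$, and define the admissible strategy $\xi$ that equals $\eta$ on $[t^*,t^*+\e[$ and then liquidates the remaining position linearly on $[t^*+\e,T]$ — exactly as in the proof of Proposition \ref{lip} — checking that $\xi\in\dot\X^1_{2A_2}([t^*,T],x^*)$ (this uses the finite-fuel construction and the CARA bounds \eqref{vfs}). Second, introduce the stopping times $\tau_k=\inf\{s>t^*:(s-t^*,X^\xi_s-x^*,\cR^\xi_s-r^*)\notin[0,1/k[\times B(0,\alpha)\times\,]-\alpha,\alpha[\,\}$; for $k$ large these equal $t^*+1/k$ by pathwise continuity. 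Third, apply the Bellman principle \eqref{ebp} with stopping time $\tau_k$: since $\xi$ is only one admissible competitor,
\[
V(T-t^*,x^*,r^*)\geq \E\big[V(T-\tau_k,X^\xi_{\tau_k},\cR^\xi_{\tau_k})\big]\geq \E\big[\varphi(T-\tau_k,X^\xi_{\tau_k},\cR^\xi_{\tau_k})\big],
\]
where the second inequality uses $V\geq\varphi$. Combined with $V(T-t^*,x^*,r^*)=\varphi(T-t^*,x^*,r^*)$ this gives $\E[\varphi(T-\tau_k,X^\xi_{\tau_k},\cR^\xi_{\tau_k})-\varphi(T-t^*,x^*,r^*)]\leq 0$. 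Fourth, apply It\^o's formula to $\varphi$ along $(X^\xi,\cR^\xi)$ on $[t^*,\tau_k]$; the stochastic integral $\int_{t^*}^{\tau_k}(X^\xi_s)^\top\sigma\,\varphi_r(T-s,X^\xi_s,\cR^\xi_s)\,dB_s$ has zero expectation by Doob's optional sampling, since the integrand is bounded on the stochastic interval (here we also use that $\tau_k<T-\delta$ for some $\delta>0$, so we stay away from the singularity). This yields
\[
\E\bigg[\int_{t^*}^{\tau_k}\Big(-\varphi_t+\cL^\xi\varphi\Big)(T-s,X^\xi_s,\cR^\xi_s)\,ds\bigg]\leq 0.
\]
Fifth, multiply by $k$, use the mean value theorem, and pass to the limit $k\to\infty$: the integrand is a.s.\ uniformly bounded in $k$ (the state processes are bounded on $[t^*,\tau_k]$ and $\varphi_t,\nabla_x\varphi,\varphi_r,\varphi_{rr}$ are continuous), so dominated convergence gives
\[
-\varphi_t(T-t^*,x^*,r^*)+\cL^\eta\varphi(T-t^*,x^*,r^*)\leq 0.
\]
Finally, since $\eta\in\R^d$ was arbitrary and $\eta\mapsto\cL^\eta\varphi$ is continuous, take the supremum over $\eta$ to conclude.

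\textbf{Main obstacle.} The delicate point is not the It\^o/Bellman mechanics — which parallels Proposition \ref{lip} — but handling the constraints and the singularity cleanly: one must verify that the explicitly constructed competitor $\xi$ genuinely lies in the fuel-constrained set $\dot\X^1_{2A_2}([t^*,T],x^*)$ appearing in \eqref{ebp} (so that the Bellman inequality can be applied to it), and one must ensure all stopping-time localizations keep the process bounded away from $t=T$ so that $\varphi$ and its derivatives remain under control and the blow-up initial condition \eqref{hjbic} is never invoked. The case $t^*=0$ (i.e.\ the point $(T,x^*,r^*)$) is covered since $[0,T[$ includes it; the boundary/initial behaviour itself is not needed for the supersolution inequality, only for the statement that $V$ satisfies \eqref{hjbic}, which is exactly Proposition \ref{icv}. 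The complementary subsolution property will be handled in the next proposition, along the lines of Proposition \ref{rip}.
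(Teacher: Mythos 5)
Your proposal is correct and follows essentially the same route as the paper's proof: you construct the same piecewise-constant-then-linear competitor strategy $\xi$, apply the Bellman principle with the same localizing stopping times $\tau_k$, use It\^o's formula on the test function $\varphi$ (with the second inequality in the Bellman step coming from the local minimality of $V-\varphi$), and pass to the limit via the mean value theorem and dominated convergence before taking the supremum over $\eta$. The only cosmetic difference is that you normalize via Remark \ref{rvs} so that $V\geq\varphi$ globally, whereas the paper works directly with $(V-\varphi)(T-\tau_k,\cdot)\geq(V-\varphi)(T-t^*,x^*,r^*)$ on the localized neighborhood; these are equivalent.
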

\begin{proof}
We have to show that for every $\varphi\in \C^{1,1,2}(]0,T]\times\R^{d}\times\R )$ and every $(t^*,x^*,r^*)\in [0,T[\times\R^{d}\times\R$, when
$V-\varphi$ attains a local minimum at $(T-t^*,x^*,r^*)\in\;]0,T]\times\R^{d}\times\R$, we  have
\begin{IEEEeqnarray}{rCl} 
                  0&\geq&-\varphi_{t}(T-t^*,x^*,r^*)+\frac{{x^*}^{\top}\Sigma x^*}{2}\varphi_{rr} (T-t^*,x^*,r^*)+b\cdot x^*\varphi_{r} (T-t^*,x^*,r^*)\notag\\
                                                                            & & +\>\sup_{\xi\in\R^{d}}\left(\xi^{\top}\nabla_{x}\varphi (T-t^*,x^*,r^*)-\varphi_r(T-t^*,x^*,r^*) f(\xi)\right)\notag\\
               &=&\big(-\varphi_t +\sup_{\xi\in \R^d}\cL^{\xi}\varphi\big)(T-t^*,x^*,r^*).
\label{vsup}
\end{IEEEeqnarray}

The idea of the proof is almost the same as in Proposition \ref{lip}, but as $V$ is not necessarily smooth, we cannot apply It™\^o's formula now. However, 
we can use a test function $\varphi$
instead 
and proceed as follows: let
$(t^*,x^*,r^*)\in\;]0,T]\times\R^{d}\times\R$ be such
that $V-\varphi$ attains a local minimum at $(T-t^*,x^*,r^*)$. Moreover, let $\eta\in\R^d$ and  $\e>0$ be such that $t^*+\e<T$, and define $\xi\in \dot{\X}^1_{A_2}([t^*,T], x)$ as
\begin{equation*}
\xi_s:=\begin{cases}
       \eta ,& \text{if } s\in[t^*,t^*+\e[,\\
      -\frac{x-\e\eta}{T-(t^*+\e)},   & \text{if } s\in [t^*+\e,T].\\
       \end{cases}
  \end{equation*}
We consider  the corresponding processes $(X^\xi_s,\cR^\xi_s)$, which satisfy $X^\xi_{t^*}=x^*, \cR^\xi_{t^*}=r^*$, and choose  $\alpha>0$ such that the minimum is global on the region $]T-t^*-\alpha,T-t^*+\alpha]\times B(x^*,\alpha)\times[r^*-\alpha,r^*+\alpha]$.  Moreover, we introduce the following sequence of stopping times
\[
\tau_k:=\inf\bigg\{ s>t^*\;\big|\; (s-t^*, X^\xi_s,\cR^\xi_s)\notin [0,\frac1{k}[\;\times\;B(x^*,\alpha)\;\times[r^*-\alpha,r^*+\alpha]\bigg\}.
\]
Theorem \ref{bp}  implies for $k$ large enough,
\begin{IEEEeqnarray*}{rCl} 
               0&\geq& \E[V(T-\tau_k,X^\xi_{\tau_k},\cR^\xi_{\tau_k})-V(T-t^*, x^*, r^*)]\\
            &\geq& \E[\varphi(T-\tau_k,X^\xi_{\tau_k},\cR^\xi_{\tau_k})-\varphi(T-t^*,x^*,r^*)]\\
&=& \E\bigg[\int_{t^*}^{\tau_k} \bigg(- \varphi_{t}(T-s, X_s^{\xi},\cR_s^{\xi})+ \cL^{\xi} \varphi(T-s, X_s^{\xi},\cR_s^{\xi})\bigg)\,ds\bigg]\\
&&+\>\E\bigg[\int_{t^*}^{\tau_k} (X_s^{\xi})^\top\sigma \varphi_r(T-s, X_s^{\xi},\cR_s^{\xi})\, dB_s\bigg],
\end{IEEEeqnarray*}
in conjunction with It\^{o}'s lemma,
where 
in the second inequality we used the minimal property of $V-\varphi$ at $(T-t^*,x^*,r^*)$. 
The last expectation  vanishes, due to the definition of $\tau_k$  and the fact that the term inside the expectation is a true martingale. Hence,
\begin{equation}
\E\bigg[\int_{t^*}^{\tau_k} \bigg(- \varphi_{t}(T-s, X_s^{\xi},\cR_s^{\xi})+\cL^{\xi} \varphi(T-s, X_s^{\xi},\cR_s^{\xi})\bigg)\,ds\bigg] \leq 0. \label{vie}
\end{equation}
Moreover, due to the a.s.~continuity (in $s$) of the integrands, we have $\tau_k=t+1/k$ for $k$ large enough, and we thus can use  the same arguments as in Proposition \ref{lip} to  get
\begin{eqnarray*}
\E\bigg[ k \int_{t^*}^{\tau_k} \bigg(- \varphi_{t}(T-s, X_s^{\xi},\cR_s^{\xi})+\cL^{\xi} \varphi(T-s, X_s^{\xi},\cR_s^{\xi})\bigg)\,ds \bigg]\\
\underset{k\rightarrow \infty}{\longrightarrow}- \varphi_{t}(T-t^*, x^*,r^*)+\cL^{\eta} \varphi(T-t^*,  x^*,r^*).
\end{eqnarray*}
Combining this with  \eqref{vie} we infer 
\begin{equation}
- \varphi_{t}(T-t^*, x,r) +\cL^{\eta} \varphi(T-t^*, x^*,r^*)\leq 0.
\end{equation}
Since we chose $\eta$ arbitrarily, we can now take the supremum over $\eta\in\R^d$, due to the continuity of $\eta\longrightarrow \cL^\eta \varphi$,  which yields the assertion.
\end{proof}
\begin{Prop}\label{vsir}
The value function $V$ is a viscosity subsolution of the Hamilton-Jacobi-Bellman equation
\eqref{hjb} with initial condition \eqref{hjbic}.
\end{Prop}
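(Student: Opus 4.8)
The plan is to prove that $V$ is a viscosity subsolution of \eqref{hjb} by a contradiction argument, mirroring the structure of Proposition \ref{rip} but with $V$ replaced by a test function $\varphi$ touching $V$ from above at a local maximum of $V-\varphi$. Concretely, suppose $V-\varphi$ attains a local maximum at $(T-t^*,x^*,r^*)\in\;]0,T]\times\R^d\times\R$ but the subsolution inequality fails, i.e.
\[
\big(-\varphi_t+\sup_{\xi\in\R^d}\cL^{\xi}\varphi\big)(T-t^*,x^*,r^*)<0.
\]
By Remark \ref{rvs} we may assume $V(T-t^*,x^*,r^*)=\varphi(T-t^*,x^*,r^*)$ and that the maximum is strict on a neighborhood. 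Since $\eta\mapsto\cL^{\eta}\varphi$ is continuous, the function $h(t,x,r):=\big(-\varphi_t+\sup_{\xi}\cL^{\xi}\varphi\big)(T-t,x,r)$ is continuous, so there is $\eta_0>0$ with $h<0$ on the neighborhood
\[
\cN_{\eta_0}=\big\{(t,x,r)\,\big|\,(t-t^*,x-x^*,r-r^*)\in\,]-\eta_0,\eta_0[\times B(0,\eta_0)\times]-\eta_0,\eta_0[,\ h(t,x,r)<0\big\}.
\]

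Next I would set $\e:=\min_{\partial\cN_{\eta_0}}(\varphi-V)>0$, which is strictly positive because $(T-t^*,x^*,r^*)$ is a strict maximizer of $V-\varphi$. For an arbitrary admissible $\xi\in\dot{\X}^1_{2A_2}([t^*,T],X_0)$ with associated state process started at $(X^\xi_{t^*},\cR^\xi_{t^*})=(x^*,r^*)$, define the exit time $\tau:=\inf\{s>t^*\,|\,(s,X^\xi_s,\cR^\xi_s)\notin\cN_{\eta_0}\}$; by pathwise continuity $(T-\tau,X^\xi_\tau,\cR^\xi_\tau)\in\partial\cN_{\eta_0}$, so $(\varphi-V)(T-\tau,X^\xi_\tau,\cR^\xi_\tau)\geq\e$ almost surely. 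Applying It\^o's formula to $\varphi$ between $t^*$ and $\tau$, using that the stochastic integral $\int_{t^*}^{\tau}(X^\xi_s)^\top\sigma\,\varphi_r\,dB_s$ vanishes in expectation (the integrand is bounded on the stochastic interval $[t^*,\tau]$ since $\cN_{\eta_0}$ is bounded and bounded away from $t=0$), and that $\big(-\varphi_t+\cL^{\xi}\varphi\big)(T-s,X^\xi_s,\cR^\xi_s)\leq h(s,X^\xi_s,\cR^\xi_s)<0$ on $[t^*,\tau]$, I would obtain
\[
\E\big[V(T-\tau,X^\xi_\tau,\cR^\xi_\tau)\big]\leq\E\big[\varphi(T-\tau,X^\xi_\tau,\cR^\xi_\tau)\big]-\e\leq\varphi(T-t^*,x^*,r^*)-\e=V(T-t^*,x^*,r^*)-\e.
\]
Here the first inequality uses the global-max property of $V-\varphi$ on $\cN_{\eta_0}$.

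Taking the supremum over all admissible $\xi$ and invoking the Bellman principle (Theorem \ref{bp}) — noting that $\e$ is independent of $\xi$ — yields
\[
V(T-t^*,x^*,r^*)=\sup_{\xi}\E\big[V(T-\tau,X^\xi_\tau,\cR^\xi_\tau)\big]\leq V(T-t^*,x^*,r^*)-\e,
\]
a contradiction with $\e>0$. This forces the subsolution inequality, completing the proof that $V$ is a viscosity subsolution; combined with Proposition \ref{vil}, $V$ is a viscosity solution of \eqref{hjb} with initial condition \eqref{hjbic}, which is Theorem \ref{vfvsh}. The main obstacle, as in Proposition \ref{rip}, is checking that the stopping time $\tau$ stays strictly below $T$ so that all the relevant quantities remain bounded and away from the singularity at $t=0$ — i.e.~that one can choose $\eta_0$ (and if necessary restrict to $t^*\in[0,T[$ at positive distance from $T$) so that $\tau<T$ almost surely and the admissible strategy $\xi$ together with its state process stays in a compact region where $\varphi,\varphi_t,\varphi_r,\varphi_{rr}$ and $\cL^{\xi}\varphi$ are all controlled; the fuel constraint on $\xi$ plays no essential role here beyond keeping us inside $\dot{\X}^1_{2A_2}$, exactly as in the earlier proofs.
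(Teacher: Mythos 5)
Your proof is correct and follows the same underlying strategy as the paper's: argue by contradiction, exploit the continuity of $\eta\mapsto\cL^{\eta}\varphi$ to find a neighborhood $\cN_{\eta_0}$ where $h<0$, apply It\^o's formula to the test function $\varphi$ up to the exit time, kill the stochastic integral via boundedness, and invoke the Bellman principle to obtain $V(T-t^*,x^*,r^*)\leq V(T-t^*,x^*,r^*)-\e$. The remaining observations you make (keeping $\tau<T$, bounding $\varphi$ and its derivatives on the compact closure of $\cN_{\eta_0}$, the fuel constraint playing no essential role) are exactly the right things to check and are handled analogously to Proposition \ref{rip}.

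Where your write-up actually \emph{improves} on the paper's printed argument: you start the diffusion at $(T-t^*,x^*,r^*)$ itself, take $\e=\min_{\partial\cN_{\eta_0}}(\varphi-V)>0$, and obtain a clean contradiction. The paper instead writes, in \eqref{vss}, an inequality that reads as a strict local \emph{minimum} of $V-\varphi$ (the supersolution configuration, not the subsolution one), sets $2\e=\max_{\partial\cN_\eta}(V-\varphi)$, detours through an intermediate point $(T-t_0,x_0,r_0)$ with $(\varphi-V)(T-t_0,x_0,r_0)\leq-\e$, and arrives at $V(T-t_0,x_0,r_0)\geq-\e+V(T-t_0,x_0,r_0)$, which is $0\geq-\e$ and does \emph{not} contradict $\e>0$. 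As printed, the inequality in \eqref{vss} appears to have the wrong orientation and the bookkeeping of $\e$ versus $2\e$ is off; your version is the one that closes the argument. So the two proofs share the same skeleton, but yours is the consistent instantiation, while the paper's, taken literally, has a sign slip that prevents the contradiction it announces.

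One very small remark on your write-up: you should phrase the choice of $\eta_0$ carefully so that $\cN_{\eta_0}$ is contained in the region where $V-\varphi$ has its strict maximum \emph{and} in $]0,T]\times\R^d\times\R$ (bounded away from $t=0$), so that the strict positivity of $\e=\min_{\partial\cN_{\eta_0}}(\varphi-V)$ and the martingale property of $\int(X^\xi_s)^\top\sigma\varphi_r\,dB_s$ both hold; you say this in prose but it is worth making explicit since it is the only place the singular initial condition could interfere.
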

\begin{proof}
We have to show that
for every $\varphi\in \C^{1,1,2}(]0,T]\times\R^{d}\times\R )$ and every $(t^*,x^*,r^*)\in [0,T[\;\times\;\R^{d}\times\R$, when
$V-\varphi$ attains a local maximum at $(T-t^*,x^*,r^*)\in\;]0,T]\times\R^{d}\times\R$, we  have
\begin{IEEEeqnarray}{rCl} 
                  0&\leq&-\varphi_{t}(T-t^*,x^*,r^*)+\frac{{x^*}^{\top}\Sigma x^*}{2}\varphi_{rr} (T-t^*,x^*,r^*)+b\cdot x^*\varphi_{r} (T-t^*,x^*,r^*)\notag\\
                                                                            & & +\>\sup_{\xi\in\R^{d}}\left(\xi^{\top}\nabla_{x}\varphi (T-t^*,x^*,r^*)-\varphi_r(T-t^*,x^*,r^*) f(\xi)\right)\notag\\
               &=&\big(-\varphi_t +\sup_{\xi\in \R^d}\cL^{\xi}\varphi\big)(T-t^*,x^*,r^*).
\label{vsub}
\end{IEEEeqnarray}
As in the previous proposition, the present proof goes along the same lines as the proof of its classical analogue (Proposition \ref{rip}), but applying  It™\^o's formula to the test function $\varphi$. 
Let  $\varphi\in \C^{1,2,2}(]0,T]\times\R^{d}\times\R )$ and $(T-t^*,x^*,r^*)$ be such that
\begin{equation}
V(T-t^*, x^*,r^*)-\varphi(T-t^*,x^*,r^*)<V(T-t, x,r)-\varphi(T-t,x,r),\label{vss}
\end{equation}
for $(T-t,x,r)$ in a neighborhood of $(T-t^*,x^*,r^*)$, and suppose by way of contradiction to \eqref{vsub} that
\[
h(t,x,r):=\big(-\varphi_t  +\sup_{\xi\in \R^d}\cL^{\xi}\varphi\big)(T-t^*,x^*,r^*)<0.
\]
Suppose further, without loss of generality, that the left-hand side of \eqref{vss} is equal to zero, as argued in Remark \ref{rvs}.
Recall the neighborhood 
\[
\cN_\eta=\big\{(t,x,r)|(t-t^*,x-x^*,r-r^*)\in\,]-\eta,\eta[\,\times\, B(0,\eta)\,\times\,]-\eta,\eta[\text{ and } h(t,x,r)<0\big\}
\]
of $(T-t^*,x^*,r^*)$ 
from Proposition \ref{rip}, 
and set 
\begin{equation}
2\e=\max_{\partial\cN_\eta} (V-\varphi),\label{gi}
\end{equation}
where we note that $\e>0$, due to \eqref{vss}.
Because of the continuity of $V-\varphi$ and the fact that $V(T-t^*, x^*,r^*)-\varphi(T-t^*,x^*,r^*)=0$, there  must exist $(T-t_0, x_0, r_0) \in\cN_\eta$ such that
\[
(\varphi-V)(T-t_0, x_0, r_0)\leq -\e.
\]
Now, we take $\xi\in\dot{\X}^1_{2A_2}([t_0,T],X_0)$ and introduce  the stopping time
\[
\tau:=\inf\{s>t_0 \;|\; (s,X^{\xi^ \e}_s, \cR^{\xi}_s)\notin\cN_\eta\}.
\]
Due to the continuity of the state process, we have $(T-\tau, X^{\xi}_\tau, \cR^{\xi}_\tau)\in \partial\cN_\eta,$ which implies that
\[
(V-\varphi)(T-\tau, X^{\xi}_\tau, \cR^{\xi}_\tau)\leq  2\e,
\]
thanks  to \eqref{gi}. Hence, using It™\^o's Lemma we obtain
\begin{eqnarray*}
\lefteqn{\E\Big[V\big(T-\tau, X^{\xi}_\tau, \cR^{\xi}_\tau\big)\Big]-
V(T-t_0,x_0,r_0)}\\
&\leq&2\e+\E\Big[\varphi\big(T-\tau, X^{\xi}_\tau, \cR^{\xi}_\tau\big)-\varphi(T-t_0,x_0,r_0)\Big]-\e\\
&\leq &  \e +  \E\bigg[\int_{t_0}^{\tau} \bigg(- \varphi_t(T-s, X_s^{\xi},\cR_s^{\xi}) +\cL^{\xi} \varphi(T-s, X_s^{\xi},\cR_s^{\xi})\bigg)\,ds\bigg]\\
&& +\E\bigg[\int_{t_0}^{\tau} (X_s^{\xi})^\top\sigma \varphi_r(T-s, X_s^{\xi},\cR_s^{\xi})\, dB_s\bigg],
\end{eqnarray*}
where the last term vanishes, because the integrand is bounded on the stochastic interval $[t_0,\tau]$. Moreover, since $\big(-\varphi_t +\cL^{\xi}\varphi\big)(s,X^{\xi}_s,\cR^{\xi}_s)\leq 0$ on $[t_0,\tau]$, we have
\begin{IEEEeqnarray*}{rCl} 
V(T-t_0,x_0,r_0)&\geq&  -\e +  \E\bigg[V\big(T-\tau, X^{\xi}_\tau, \cR^{\xi}_\tau\big)-\int_{t_0}^{\tau} \bigg(- \varphi_t(T-s, X_s^{\xi},\cR_s^{\xi}) \\
&&+\cL^{\xi} \varphi(T-s, X_s^{\xi},\cR_s^{\xi})\bigg)\,ds\bigg] \\
&\geq& - \e +  \E\big[V\big(T-\tau, X^{\xi}_\tau, \cR^{\xi}_\tau\big)\big].
\end{IEEEeqnarray*}
By taking the supremum over $\xi$ 
we infer in conjunction with Theorem \ref{bp} (since $\e$ does not depend on $\xi$):
\begin{IEEEeqnarray*}{rCl}
V(T-t_0,x_0,r_0)&\geq&  -\e +  \sup_{\xi\in\dot{\X}^1_{2A_2}(T,X_0)} \E\big[V\big(T-\tau, X^{\xi}_\tau, \cR^{\xi}_\tau\big)\big]\\
&=&-\e+ V(T-t_0,x_0,r_0),
\end{IEEEeqnarray*}
which is in contradiction to $\e>0$. 
This concludes the proof.
\end{proof}
\subsection{proof of Theorem \ref{scp}}
\begin{proof}[Proof of Lemma \ref{advs}]
We prove only (i). 
Suppose  that $v$ fulfills the inequality \eqref{sjiv}  for all $(t,x,r)\in[0,T[\times\R^d\times\R$ and all $(q,p,s,m)\in\cJ^{2,+}v(T-t,x,r)$. Take $\varphi\in \C^{1,1,2}(]0,T]\times\R^{d}\times\R )$ and consider  $(t^*,x^*,r^*)\in [0,T[\times\R^d\times\R$ such that $(V-\varphi)(T-t,x,r)$ has a local maximum at $(t^*,x^*,r^*)$. Due to \eqref{jvi} in Remark \ref{slm},  $\varphi$ fulfills \eqref{asub}, which implies that $v$ is a viscosity subsolution.

Suppose now that $v$ is a viscosity subsolution and let  $(q,p,s,m)\in\cJ^{2,+}v(T-t^*,x^*,r^*)$. As mentioned in Remark \ref{slm} above, there exists $\varphi\in \C^{1,1,2}(]0,T]\times\R^{d}\times\R )$ such that  
\begin{equation*}
(-\varphi_t,\nabla_x\varphi_x,\varphi_r,\varphi_{rr})(T-t^*,x^*,r^*)=(q,p,s,m).
\end{equation*}
By using \eqref{dj-} together with \eqref{jvi}, we obtain that $(T-t^*,x^*,r^*)$ is a local maximizer of $v-\varphi$. Thus $\varphi$ fulfills \eqref{asub}, which proves that $(q,p,s,m)$ fulfills \eqref{sjiv}. 
\end{proof}
\begin{proof}[Proof of Theorem \ref{scp}]
Assume that \eqref{u-v00} is true and suppose by way of contradiction that there exists $ (t^*,x^*,r^*)\in[0,T[\,\times\R^d\times\R$ such that
$ (U-V)(T-t^*,x^*,r^*)>0$.
Since $U-V$ is continuous on $]0,T]\times\R^d\times\R$, we can  suppose w.l.o.g that the supremum of $U-V$   on a compact subset is attained at some  $(T-t^*,x^*,r^*)$, i.e., 
\begin{equation}
\bar{m}=\sup_{K\subset[0,T[\times\R^d\times\R} (U-V )(T-t,x,r)= (U-V)(T-t^*,x^*,r^*)>0,\label{iivs}
\end{equation}
where $K$ is  compact with non-empty interior.
In the following, we will use the doubling of variables technique, developed first by \citet{K70}. For any $\e>0$, consider  the functions
\begin{align}
\Phi_\e(t,t',x,x',r,r')&:=U(t,x,r)-V(t',x',r')-\varphi_\e(t,t',x,x',r,r'),\label{uvp}\\
\varphi_\e(t,t',x,x',r,r')&:=\frac1{\e}\big(|t-t'|^2+|x-x'|^2+|r-r'|^2\big).\label{vphi}
\end{align}
Let $[0,\eta]\times \overline{B}(0,r)\times[r^*-\alpha,r^*+\alpha]\subset K$ be a compact neighborhood of $(t^*,x^*,r^*)$, where $0<\eta<T$, $0<\alpha<r^*$, and $r>0$. The continuous function $\Phi_\e$ attains its maximum on the compact neighborhood $[0,\eta]^2\times \overline{B}(0,r)^2\times[r^*-\alpha,r^*+\alpha]^2$, denoted by $m_\e,$ at some $(T-t_\e,T-t'_\e,x_\e,x'_\e,r_\e,r'_\e)$. We will show that
\begin{equation}
m_{\e_n}\rightarrow \bar{m}\quad \text{and}\quad \varphi(T-t_{\e_n},T-t'_{\e_n},x_{\e_n},x'_{\e_n},r_{\e_n},r'_{\e_n})\rightarrow 0,\label{cpe}
\end{equation}
for a sequence $(\e_n)$ with $\e_n \rightarrow0$. First, note that
\begin{align}
\bar{m}&=\Phi_\e(T-t^*,T-t^*,x^*,x^*,r^*,r^*)\notag\\
&=(U-V)(T-t^*,x^*,r^*)-\varphi_\e(T-t^*,T-t^*,x^*,x^*,r^*,r^*)\notag\\
&\leq U(T-t_\e,x_\e,r_\e)-V(T-t'_\e,x_\e',r_\e')-\varphi_\e(T-t_\e,T-t'_\e,x_\e,x'_\e,r_\e,r'_\e)\label{ipe1}\\
&=m_\e\leq U(T-t_\e,x_\e,r_\e)-V(T-t'_\e,x_\e',r_\e').\label{ipe2}
\end{align}
Since $((T-t_\e,T-t'_\e,x_\e,x'_\e,r_\e,r'_\e))_{\e>0}$ belongs to the compact set $[0,\eta]^2\times \overline{B}(0,r)^2\times[r^*-\alpha,r^*+\alpha]^2$, we can find a sequence $(T-t_{\e_n},T-t'_{\e_n},x_{\e_n},x'_{\e_n},r_{\e_n},r'_{\e_n}),$  where $\e_n\downarrow 0,$ which converges to some $(T-\tilde{t},T-\tilde{t}',\tilde{x},\tilde{x}',\tilde{r},\tilde{r}')$, as $n \rightarrow \infty$. 
The boundedness of the sequence $( U(T-t_{\e_n},x_{\e_n},r_{\e_n})-V(T-t'_{\e_n},x_{\e_n}',r_{\e_n}'))_n$ implies that $\big(\varphi_{\e_n}(T-t_{\e_n},T-t'_{\e_n},x_{\e_n},x'_{\e_n},r_{\e_n},r'_{\e_n})\big)_n$ is also bounded (from above), due to inequality \eqref{ipe1}. 
Therefore, by using \eqref{vphi}, we must have   $$T-\tilde{t}=T-\tilde{t}',\; \tilde{x}=\tilde{x}',\;\tilde{r}=\tilde{r}',$$ as well as $$\bar{m}=U(T-\tilde{t},\tilde{x},\tilde{r})-V(T-\tilde{t},\tilde{x},\tilde{r}),$$ applying inequality \eqref{ipe2} and the definition of $\bar{m}$. We can thus suppose w.l.o.g. that $\tilde{t}=t^*,\tilde{x}=x^*,\tilde{r}=r^*$. Letting $\e_n$ go to $0$ in \eqref{ipe2}, we get $$\bar{m}\leq\lim_{n\rightarrow \infty} m_{\e_n}\leq(U-V)(T-t^*,x^*,r^*)=\bar{m},$$ and thus \eqref{cpe} is proved.\\
Furthermore, we have  that $\varphi_\e\in\C^{1,1,2}(]0,T]\times\R^{d}\times\R )$ and 
\begin{align}
(T-t_\e,x_\e,r_\e)  &\text{ is a local maximum of } \notag \\
& (t,x,r)\rightarrow U(T-t,x,r)-\varphi_\e(T-t,T-t'_\e,x,x'_\e,r,r'_\e),\label{lmu}
\end{align}
resp.,
\begin{align}
(T-t'_\e,x'_\e,r'_\e) &\text{ is a local minimum of }\notag\\
&(t',x',r')\rightarrow V(T-t',x',r')+\varphi_\e(T-t_\e,T-t',x_\e,x',r_\e,r').\label{lmv}
\end{align}
Our purpose now is to use formulas \eqref{jui} and \eqref{jvi} to find adequate elements of $\cJ^{2,+}U(T-t_\e,x_\e,r_\e)$ and $\cJ^{2,-}V(T-t'_\e,x'_\e,r'_\e)$. To this end, we  compute  the following derivatives: 
\begin{align*}
(\varphi_\e)_t(T-t_\e,T-t'_\e,x_\e,x'_\e,r_\e,r'_\e)&=\frac2{\e}(t_\e-t'_\e),\\
(\varphi_\e)_r(T-t_\e,T-t'_\e,x_\e,x'_\e,r_\e,r'_\e)&=\frac2{\e}(r_\e-r'_\e),\\
\nabla_x(\varphi_\e)(T-t_\e,T-t'_\e,x_\e,x'_\e,r_\e,r'_\e)&=\frac2{\e}(x_\e-x'_\e),\\
(\varphi_\e)_{rr}(T-t_\e,T-t'_\e,x_\e,x'_\e,r_\e,r'_\e)&=\frac2{\e}.
\end{align*}
Because $\frac{r_\e-r^*}{\e},\frac{r'_\e-r^*}{\e}\rightarrow 0$ as $\e$ goes to $0$, due to \eqref{cpe}, we can choose a neighborhood $[0,\eta]\times \overline{B}(0,r)\times[r^*-\alpha_\e,r^*+\alpha_\e]$ of $(t^*,x^*,r^*)$ such that $\frac{\alpha_\e}{\e}\rightarrow 0$, as $\e$ goes to $0$. Using this and \eqref{lmu}, and inserting the derivatives of $(t,x,r)\mapsto\varphi_\e (T-t,T-t'_\e,x,x'_\e,r,r'_\e)$ at $(T-t_\e,x_\e,r_\e)$  in \eqref{jui}, we obtain
\begin{IEEEeqnarray*}{rCl} 
\IEEEeqnarraymulticol{3}{l}{U(T-t,x,r)- U(T-t_\e,x_\e,r_\e)}\\
&\leq&-\varphi_\e(T-t_\e,T-t'_\e,x_\e,x'_\e,r_\e,r'_\e)+\varphi_\e(T-t,T-t'_\e,x,x'_\e,r,r'_\e)\\
&=& -\frac2{\e}(t_\e-t'_\e)(t-t_\e)+\frac2{\e}(x_\e-x'_\e)(x-x_\e)+\frac2{\e}(r_\e-r'_\e)(r-r_\e)\\
&& -\>\frac1{3\e}(r-r_\e)(r-r_\e)^2+o(|t-t_\e|+|x-x_\e|)\\
&\leq& -\frac2{\e}(t_\e-t'_\e)(t-t_\e)+\frac2{\e}(x_\e-x'_\e)(x-x_\e)+\frac2{\e}(r_\e-r'_\e)(r-r_\e)+\frac{2\alpha_\e}{3\e}(r-r_\e)^2\\
&&+\> o(|t-t_\e|+|x-x_\e|+|r-r_\e|^2).
\end{IEEEeqnarray*}
Using  Remark \ref{slm} we have thus proved that
\begin{equation}
(-\frac2{\e}(t_\e-t'_\e),\frac2{\e}(x_\e-x'_\e),\frac2{\e}(r_\e-r'_\e),\frac{2\alpha_\e}{3\e})\in\cJ^{2,+}U(T-t_\e,x_\e,r_\e).\label{sjeu}
\end{equation}

In the next step we look for an adequate element of $\cJ^{2,-}V(T-t'_\e,x'_\e,r'_\e)$. To this end, as before, we compute  the derivatives of $(t',x',r')\mapsto\varphi_\e (T-t_\e,T-t',x_\e,x',r_\e,r')$ at $(T-t'_\e,x'_\e,r'_\e).$ 
Inserting   them 
at $(T-t'_\e,x'_\e,r'_\e)$ into \eqref{jvi}, we have in conjunction with \eqref{lmv}: 
\begin{IEEEeqnarray*}{rCl} 
\IEEEeqnarraymulticol{3}{l}{V(T-t,x,r)- V(T-t'_\e,x'_\e,r'_\e)}\\
&\geq& \varphi_\e(T-t_\e,T-t'_\e,x_\e,x'_\e,r_\e,r'_\e)-\varphi_\e(T-t_\e,T-t',x_\e,x',r_\e,r')\\
&=&\frac2{\e}(t'_\e-t_\e))(t-t'_\e)-\frac2{\e}(x'_\e-x_\e)(x-x'_\e)-\frac2{\e}(r'_\e-r_\e)(r-r'_\e)\\
&&+\>\frac1{3\e}(r'-r'_\e)(r'-r'_\e)^2+o(|t'-t'_\e|+|x'-x'_\e|+|r'-r'_\e|^2).\\
&\geq&\frac2{\e}(t'_\e-t_\e)(t-t'_\e)-\frac2{\e}(x'_\e-x_\e)(x-x'_\e)-\frac2{\e}(r'_\e-r_\e)(r-r'_\e)-\frac{2\alpha_\e}{3\e}(r'-r'_\e)^2.\\
&&+\>o(|t'-t'_\e|+|x'-x'_\e|+|r'-r'_\e|^2).
\end{IEEEeqnarray*}
This shows that
\begin{equation}
(-\frac2{\e}(t_\e-t'_\e),\frac2{\e}(x_\e-x'_\e),\frac2{\e}(r_\e-r'_\e),-\frac{2\alpha_\e}{3\e})\in\cJ^{2,-}V(T-t'_\e,x'_\e,r'_\e),\label{sjev}
\end{equation}
thanks to Remark \ref{slm}.
Applying Lemma \ref{advs} to the viscosity subsolution $U$  we thus obtain
\begin{IEEEeqnarray}{rCl}
0&\leq&-\frac2{\e}(t_\e-t'_\e)+\beta U(T-t_\e,x_\e,r_\e)+\frac2{\e}(r_\e-r'_\e)b\cdot x_\e +\frac{\alpha_\e{x_\e}^{\top}\Sigma x_\e}{3\e}\nonumber\\
&&+\frac2{\e}\sup_{\xi\in\R^{d}}\left(\xi^{\top}(x_\e-x'_\e)- (r_\e-r'_\e)f(\xi)\right),\label{aviu}
\end{IEEEeqnarray}
in conjunction with \eqref{sjeu}.
Analogously, using the viscosity supersolution property of Lemma \ref{advs} for $V,$ as well as \eqref{sjev}, we get 
\begin{IEEEeqnarray}{rCl}
0&\geq&-\frac2{\e}(t_\e-t'_\e)+\beta V(T-t'_\e,x'_\e,r'_\e)+\frac2{\e}(r_\e-r'_\e)b\cdot x'_\e -\frac{\alpha_\e{x'_\e}^{\top}\Sigma x'_\e}{3\e}\nonumber\\
&& +\frac2{\e}\sup_{\xi\in\R^{d}}\left(\xi^{\top}(x_\e-x'_\e)- (r_\e-r'_\e)f(\xi)\right).\label{aviv}
\end{IEEEeqnarray}
By subtracting \eqref{aviu} from \eqref{aviv}, we have
\begin{IEEEeqnarray*}{rCl}
0&\leq&\beta (U(T-t_\e,x_\e,r_\e)-V(T-t'_\e,x'_\e,r'_\e))+\frac2{\e}(r_\e-r'_\e)b\cdot (x_\e-x'_\e)\\
&&+\frac{\alpha_\e}{3\e}\Big({x_\e}^{\top}\Sigma x_\e+{x'_\e}^{\top}\Sigma x'_\e\Big).
\end{IEEEeqnarray*}
Sending now $\e$ to $0$ and using the fact that $\frac{\alpha_\e}{\e}, r_\e-r'_\e, |x_\e-x'_\e|\rightarrow 0$, when $\e\rightarrow 0$,  we get
\begin{equation}
0\leq\beta (U-V)(T-t^*,x^*,r^*).\label{ciuv}
\end{equation}
Because $\beta<0$, \eqref{ciuv} is in contradiction with \eqref{iivs}. Thus, we have shown that  $U\leq V$ on $]0,T]\times\R^d\times\R$.
\end{proof}
\begin{proof}[Proof of Corollary \ref{uniq}]
Let $U$ be another solution of \eqref{hjb} with initial condition \eqref{hjbic}
satisfying the growth condition
\begin{equation*}
V_2(t,x,r)\leq U(t,x,r)\leq V_1(t,x,r),\quad\text{for all }(t,x,r)\in\,]0,T]\times\R^d\times\R.
\end{equation*}
Then we have
\begin{align*}
\lim_{t\rightarrow 0} \big(U(t,x,r)-V(t,x,r)\big)&= 0, \quad \text{ for fixed } x,r\in\R^d\backslash\{0_{\R^d}\}\times\R,
\end{align*}
which can be extended to $\R^d\times\R$. Hence, by using Theorem \ref{scp} we deduce that $U\leq V$. Since both $U$ and $V$ are viscosity sub- and supersolution, respectively, we conclude by reversing the preceding inequality. 
\end{proof}

\bibliographystyle{plainnat}
\bibliography{A03n,AC01n,AL07,B11,B13,BB04,BJ02,BK04,BL98n,BN12n,BS91n,BT11n,CC14,CIL92n,CJ04,CST07,DS88n,FShm06,FT11,HK04,HS04,J94,KK11,KL11,K70n,K08,K85,LM15,M66,O06n,P09n,PF03n,Pr04,R97n,RU78,S08,SB78,SSB08,SF11n,SS09,SST09n,SST10n,SS07,T04n,T12,Ta11,W13,W80,W91,YR91,YZ99n,ZB04,ZCB12}
\end{document}